\newcommand{\vst}{\vspace{3mm}}
\newtheorem{proposition}{Proposition}
\newtheorem{lemma}{Lemma}
\newtheorem{corollary}{Corollary}
\theoremstyle{remark}
\DeclareMathOperator*{\argmin}{argmin}
\newcommand{\und}{\underline}
\title{Redistribution Through Tax Relief\thanks{I am grateful to Marciano Siniscalchi, Nicola Persico, Joseph Stiglitz, Alessandro Pavan, Eddie Dekel, Bruno Strulovici, Francisco Poggi, Ludvig Sinander, and Melissa Figueira for illuminating conversations, and to seminar participants at Northwestern University for helpful comments.}}
\author{Quitz\'{e} Valenzuela-Stookey\thanks{Department of Economics, Northwestern University.}}
\date{November 3, 2020}
\begin{document}
\maketitle

\begin{note} 
    \Large \textcolor{blue}{WITH NOTES}
\end{note}

\vspace{-2em}
\begin{center}
    \Large \textcolor{blue}{\href{https://northwestern.box.com/s/lfbjcxtwyth5dcizqynqybyf31fjw9l2}{Click here for the latest version}}
\end{center}
\vst

\begin{abstract}
    This paper studies politically feasible policy solutions to inequities in local public goods provision. I focus in particular on the entwined issues of high property taxes, geographic income disparities, and inequalities in public education prevalent in the United States. It has long been recognized that with a mobile population, local administration and funding of schools leads to competition between districts. By accounting for heterogeneity in incomes and home qualities, I am able to shed new light on this phenomenon, and make novel policy recommendations. I characterize the equilibrium in a dynamic general equilibrium model of location choice and education investment with a competitive housing market, heterogeneous wealth levels and home qualities, and strategic district governments. When all homes are owner-occupied, I show that competition between strategic districts leads to over-taxation in an attempt to attract wealthier residents. A simple class of policies that cap and/or tax the expenditure of richer districts are Pareto improving, and thus politically feasible. These policies reduce inequality in access to education while increasing expenditure for under-funded schools. Gains are driven by mitigation of the negative externalities generated by excessive spending among wealthier districts. I also discuss the policy implications of the degree of homeownership. The model sheds new light on observed patterns of homeownership, location choice, and income. Finally, I test the assumptions and implications empirically using a regression discontinuity design and data on property tax referenda in Massachusetts.
\end{abstract}

Many schools in the United States are severely underfunded, and there is significant inequality in school quality across school districts.\footnote{See \cite{national1999equity}.} Academics and policy makers have long recognized that both of these issues are driven in large part by the existing system of school funding; in many states the majority of school funding comes from local sources, usually in the form of property tax revenue.\footnote{See for example \cite{kenyon2007property}.} Geographic disparities in income and home values are therefore reflected in school funding levels. These inequities are reinforced, moreover, but the sorting of wealthier residents into districts with better schools.\footnote{\cite{machin2011houses}.} 

Like school quality, the level of local property taxation is a salient political issue. Complaints of excessive property taxation are common, and efforts have been made by state governments across the country to provide property tax relief, with varying degrees of success.\footnote{\cite{mcguire2008local}.} The issues of school funding and property taxation are clearly linked, which suggests the need for a unified policy approach. The joint discussion of these issues is generally framed as a trade-off: without additional revenue sources, a district cannot have both well funded schools and low property taxes. Redistribution of resources from wealthy to poorer districts is an obvious solution to this dilemma, at least from the perspective of improving education in the neediest districts. However any such redistributive efforts must confront the tradition of local control in these areas. Existing geographic inequalities further complicate reform, as richer districts resist redistributive efforts from which they do not expect to benefit.\footnote{See for example \cite{klemens_2019}.} 

By accounting for the general equilibirum effects of tax and education policy, this paper offers a way around the school-quality/property-tax trade-off. The objective is a politically feasible policy to address the related problems of low school quality and high property taxes.\footnote{While I focus here on school funding in the U.S., the model is applicable to wide range of local public goods provision problems. For example, similar forces would arise in a model of developing countries offering incentives for foreign direct investment.} An ideal policy would \textit{i}) make use of existing policy tools, \textit{ii}) require no new revenue sources, \textit{iii}) provide property tax relief and improve school quality in the worst-funded districts, and \textit{iv}) make no districts worse off. Surprisingly, all of these criteria can be satisfied using a simple policy. The general solution identified in this paper is to cap expenditure on schools in the richest districts. Caps on taxation and/or expenditure, in various forms, are widespread. I show how, when correctly structured, caps on expenditure can provide welcome tax relief to the richer districts, while improving school quality in districts that are less well off. Over time, such a policy has the added benefit of reducing geographical income disparities. I explore variations on this policy. If monetary transfers can be made between districts it is possible to improve school quality at the bottom while providing property tax relief to \textit{all} districts. This can be done by charging a fee to districts that raise excessive revenue, rather than imposing a hard cap, and transferring the money raised in this way to poorer districts. Such a policy may be necessary to improve welfare when poor districts have a high proportion of renters. This type of ``tax-on-tax'' policy is in fact in place in Vermont.\footnote{\cite{saas2007school}.} This paper provides a novel justification for such policies, and suggests ways in which they can be refined. 

In contrast to the literature on redistribution \textit{within} districts with a mobile population (see for example \cite{epple1991mobility}), which in general finds that that population mobility limits the feasible degree of local redistribution, I show how population mobility can be used to facilitate redistribution \textit{in the population as a whole}. At a high level, the intuition for this result is simple. Consider first the case in which there are only two school districts, $A$ and $B$. I show, unsurprisingly, that in equilibrium the district with higher quality schools will attract a richer subset of the population, as the rich will be able to outbid the poorer for homes in these districts. I show that a richer population of new arrivals to the district will lead to higher home values, benefiting the current homeowners. Local governments are strategic. They recognize the link between school quality and the wealth of new arrivals. When most homes in the district are owner-occupied, this provides an incentive for them to increase school expenditure.\footnote{There are many additional reasons why local governments might prefer a richer population. Richer residents could contribute to higher sales tax revenue, or current residents may have an explicit preference for richer neighbors. These factors only reinforce the effect that I identify.} (On the other hand, when the ratio of homeowners to renters is sufficiently low the converse will hold; districts will spend inefficiently little on public goods. This has interesting implications, which I discuss further in Section \ref{sec:renters}.) However increases in a given district's school expenditure impose a corresponding negative externality on other districts, which now receive a poorer population. The result is inefficiency; in both districts taxes and expenditure are higher in equilibrium than would be optimal given the equilibrium income distribution within each district.\footnote{I will sometimes refer to this phenomenon as ``excessive expenditure'', but it should be emphasised that ``excessive'' is meant in a very limited sense; given the equilibrium income distribution in a district, the district would be better off with a lower expenditure level. This is not to say that the level of education spending is excessive in a normative sense. Indeed, the policy objective is to \textit{increase} education spending in the poorest districts.} There exist binding expenditure caps such that the set of residents in each district remains the same when the caps are imposed, and which induce a better outcome from the perspective of both districts. This is possible because the caps reduce the negative externalities that districts impose on each other. 

Suppose now that there is a third district, $C$, with lower school quality than $A$ and $B$. It is still the case that there are caps on expenditure in $A$ and $B$ that reduce negative externalities and makes both $A$ and $B$ better off, despite the fact that some rich residents will choose to move to $C$ given the lower school quality in $A$ and $B$. But $C$ benefits from the richer new residents and correspondingly higher property values, which allows it to spend more on schools. 

Throughout, I refer to the public good being provided by school districts as education. In reality education may be just one of many goods, such as the number and maintenance of  parks and cleanliness of downtown areas, which together determine the perceived quality of a district. Such an interpretation is perfectly consistent with the model. The excessive expenditure caused by inter-distinct competition may take the form of superfluous decoration of public buildings or landscaping of business districts, intended to attract wealthier residents and support local home values. 

While I focus here on residential property taxation and location choice, similar forces are at play in other settings, such as the market for higher education in the United States. Increasing competition between colleges in the U.S. has been associated with higher tuition levels \citep{hoxby2000effects}. Popular commentary has attributed this increase, at least in part, to expenditures on extravagant amenities such as lazy rivers and opulent dinning halls \citep{horn_2019}. While it has been widely recognized that this phenomenon is driven by competition for students, the role of income heterogeneity has received less attention. This paper suggests that such spending may not be the result of efficient Tiebout competition. Rather, it is induced by the income sorting effect described above. 

This paper falls within the large literature on mobility and local public goods stemming from the seminal paper of \cite{tiebout1956pure}, which argues that by ``voting with their feet'', migrants induce efficient provision of local public goods. In contrast, my model predicts inefficiently high expenditure. This finding relates to the strand of literature focusing on the fiscal externalities of migration. Such externalities, as studied by \cite{buchanan1972efficiency}, \cite{flatters1974public} and \cite{stiglitz1982theory}, generally lead to inefficient equilibria. \cite{stiglitz1982theory} discusses the case in which a majority of residents rent, and also finds that this leads to under-provision of the local public good. \cite{starrett1980method}, and the subsequent comment by \cite{boadway1982method} highlight the importance of understanding migration responses to taxation and public goods quality when evaluating efficiency of public goods provision. To my knowledge, the literature has not explored the use of tax caps as a way to \textit{increase} public goods provision in a subset of districts. 

The key features of the model presented here are heterogeneous wealth across individuals, heterogeneous home qualities, and strategic choice of local public goods provision by districts. Models with one or another subset of these features have been studied in the literature, but no paper has, to my knowledge, incorporated them all. Each plays an important role, as we will see, in the substantive implications of the model. 

\vst
\noindent\textit{Heterogeneous wealth}

The strategic choice by districts of the level of expenditure on the public good is an important component of my analysis. \cite{fernandez1996income} also evaluates policies designed to improve school quality in a model with location choice, but with expenditure chosen after location choices have been made. In their model an expenditure cap on the wealthy districts induces a Pareto \textit{inferior} outcome. \cite{brueckner2001local} find empirical evidence of strategic behavior in the choice of property tax rates by local governments, suggesting that such behavior should not be ignored. 

\vst
\noindent\textit{Heterogeneous wealth, strategic districts}

\cite{epple1991mobility} study a model model of redistribution by local jurisdictions. Their model shares some features with the current paper, most importantly heterogeneous wealth levels and strategic districts that recognize the migration effects of their tax policies. The key conceptual difference however is that tax revenue is used to make cash transfers, as opposed to providing a local public good. This has important implications; in their model individuals also sort based on income, but high income individuals go to the districts with the lowest levels of taxation (and redistribution). In contrast, since in my model all individuals derive the same marginal utility from public goods consumption, high income individuals sort into the districts with the highest taxes (and levels of public goods). This is the determinant of the strategic behavior of districts. 

\vst
\noindent\textit{Heterogeneity of wealth and home quality}

\cite{de2003equilibria} study heterogeneous wealth in a spacial model in which location quality is determined by the distance to the city center. As in the current paper, they find that complete sorting of households according to wealth does not occur, in contrast to models with homogeneous home quality such as \cite{epple1991mobility}. Unlike the current paper however, taxation levels are chosen by myopic voters, thus eliminating they key strategic considerations that lead to over-taxation.

\vst
There is a large empirical literature studying the role of school quality in location choice. \cite{black1999better} identifies a significant causal effect of school quality on home values. \cite{kane2006school} and \cite{bayer2004tiebout} obtain similar results, while also emphasising the importance of individuals preferences with regards to their neighbors in reinforcing stratification and contributing to higher home prices in districts with good schools. Such preferences only reinforce the mechanism which leads to inefficiently high expenditure in my model, and increases the scope for policy intervention. \cite{bayer2019efficient} directly study the effect of school expenditure and property taxes on home prices, finding a significant positive effect. 

My analysis highlights the importance of two phenomena which have received limited attention in the empirical literature. First, the key determinant of both housing prices and location choice is not the level of school quality, but the differences in school quality across districts. The effect of school quality and tax levies on home values in neighboring districts has received surprisingly little attention in the empirical literature. Empirical strategies aimed at identifying causal effects of school quality on home prices and migration should focus on differentials in school spending, rather than levels. This suggests that empirical strategies using school assignment zone boundary discontinuities, such as \cite{black1999better} and \cite{kane2006school}, which exploit differences in school quality across districts, are more likely to give reliable estimates than those, such as \cite{bayer2019efficient}, that exploit exogenous variation in expenditures over time. If neighboring districts experience expenditure shocks simultaneously, such a strategy is likely to underestimate the true effect of school quality on home prices. 

Second, changes in local income distributions induced by changes in school quality differentials are the key driver of local property values and of the choice of expenditure levels by local governments. The simultaneous causality between local income levels and school quality has long been recognized. However empirical studies of the causal effect of schools on the income distribution are lacking. Empirical work should also seek to understand the strategic consideration of this effect by local governments.

I attempt to partly fill the gap in the empirical literature, while also testing the main assumptions and predictions of the model. To identify the causal impact of property taxes, I exploit municipal property tax caps in Massachusetts. State law caps the total property tax levy for each municipality, but allows municipalities to override this cap through a majority vote in a referendum. I use a regression discontinuity design, exploiting the discontinuity in outcomes at the 50\% vote share, to identify the effect of increases in the property tax levy on property values and incomes, both within a given municipality and in neighboring municipalities. I find strong evidence that an increase in the property tax levy raises home values within a given municipality. This is the key driver of the over-taxation result. Surprisingly, an increase in the tax levy in one municipality also raises property values in neighboring municipalities, even when these municipalities do not share a school district. Thus suggests that there are spillovers across districts that are not accounted for in the data. Additionally, I find some direct evidence of income sorting; an increase in the tax levy raises the average income in a municipality. 

The remainder of the paper is organized as follows. Section \ref{sec:model} presents the model, Section \ref{sec:housing_mkt} characterizes the equilibrium in the housing market, Section \ref{sec:game} formalizes the strategic interaction between districts, presents a key comparative statics result, characterizes equilibrium in the game between districts, and discusses the policy recommendations. Section \ref{section:empirical} contains the empirical analysis. Omitted proofs, alternative approaches, and supplementary tables are contained in the appendix.  

\section{The model}\label{sec:model}
For ease of exposition, I will first present the model and results with only two districts. The generalization to more than two districts, necessary for the central policy recommendation of the paper, is straightforward and will be discussed in Section \ref{sec:multiple}. Label the two districts $A$ and $B$. Each district has a school and a unit mass of houses.\footnote{The assumption of perfectly inelastic housing supply is made for tractability. The important implication of this is that taxes are fully capitalized into home prices. However the central intuition of the paper, that income sorting incentivizes over-taxation, continues to hold when supply is elastic. Very low elasticities of housing supply are observed in land-constrained metropolitan areas (see for example \cite{saiz2010geographic}). This assumption also highlights the distortions arising from population mobility: with perfectly inelastic housing supply, taxes are non-distortionary in the traditional sense. Nevertheless, tax caps can be Pareto improving, highlighting the migration externality that is the focus of this paper.} House quality is denoted by a number in $[0,1]$. The distribution of house quality in district $j$ is described by the CDF $Q^j$ with support $[\und{q}^j, \Bar{q}^j]$. Assume that $Q^j$ is differentiable with strictly positive derivative on its support (including 1-sided derivatives at boundaries) for $j \in  \{A,B \}$. Schools are financed by property tax revenue. School quality in district $j$ is given by $s(e^j)$, where $s$ is an increasing concave function and $e^j$ is expenditure in district $j$.\footnote{The technology converting expenditure to school quality need not be the same in all districts; all results continue to hold if school quality is given by $s^j(e^j)$ for $j \in \{A,B \}$. Such differences may reflect different needs of the student population, and should be taken into account when designing tax cap policies.} Districts choose tax schedules $\tau^A, \tau^B: \mathbb{R} \mapsto \mathbb{R}$, which give tax liability as a function of home value. School expenditure must be fully funded by local tax revenue.\footnote{Allowing for fixed subsidies from other sources does not change the characterization of equilibrium, but the dependence of subsidies on expenditure levels may matter for policy recommendations, as I will discuss below.}

Individuals are modeled with overlapping generations. Individuals live for two periods.\footnote{All results extend immediately to individuals living for any number of periods.} Each individual is characterized by their per-period income $w \in [\und{w},\bar{w}]$, which I refer to as their type. In the first period of their life, individuals purchase a house, enjoy the schools in their district of residence, and engage in consumption. In the second period they sell the house and engage in consumption. Individuals can borrow and save at rate $r$. Suppose that in their first period of life a type $w$ individual purchases a house of quality $q$ in district $j$ where education expenditure is $e^j$, paying a price $p_1$ and tax bill $\tau$, and saving an amount $b$. In the second period the individual sells the house at price $p_2$. Their total payoff is given by 

\begin{equation}
    q + s(e^j) + u(w - p_1 - \tau - b) + u(w + p_2 + (1+r)b). 
\end{equation}
where $u$ is continuously differentiable and strictly concave. Here $w - p_1 - \tau - b$ is first period consumption and $w + p_2 + (1+r)b$ is second period consumption. 

At the start of each period a new population of individuals with mass 1 arrives to replace the outgoing generation. Let $F(w)$ be the CDF of wealth levels in the incoming population. Assume that $F$ is differentiable with strictly positive derivative on its support $[\und{w},\bar{w}]$ (including 1-sided derivatives at boundaries). I refer to a \textit{location} as a pair $(q,j)$ denoting a home quality and district. Individuals are free to choose any location in the two districts. 

\section{Housing market equilibrium}\label{sec:housing_mkt}

The housing market is competitive. I will first consider an intermediate notion of equilibrium in the housing market, in which I ignore the constraint that tax revenue must be sufficient to fund the districts' chosen expenditure levels, as well as the strategic choice of these tax schedules and expenditure levels by the district governments in current and future periods. That is, I take the tax schedules and expenditure levels in each district as given, and study the equilibrium in the housing market. 

Fix the tax schedules and expenditure levels in each district. An equilibrium consists of price functions $p_t^j:[\und{q}^j, \bar{q}^j] \mapsto \mathbb{R}$ for $j \in \{A,B\}$ and $t \in\{1,2 \}$ such that markets clear given that all individuals choose their location optimally, taking prices and induced tax bills as given. Market clearing in this setting means that the mass of individuals moving to homes of quality less than or equal to $q$ in district $j$ is exactly $Q^j(q)$.\footnote{Throughout, statements about equilibrium allocations are subject to the caveat ``except for zero measure sets'', which I will not state.} Given price functions, I will sometimes abuse notation and write the tax bill as a function of the location, rather than the home price. 

Given  price functions, a location $(q,j)$ is characterized by home quality $q$, school quality $s(e^j)$, buying price $p_1^j(q)$, tax bill $\tau^j(q)$, and selling price $p_2^j(q)$. For a given \textit{payment-price vector} $(\tau^j(q), p_1^j(q), p_2^j(q))$, define the present discounted value (PDV) as
\begin{equation*}
    m^j = -p^j_1(q) - \tau^j_1(q) + \dfrac{1 }{1+r}p_2^j(q).
\end{equation*}
Define the \textit{location quality} of $(q,j)$ as $s(e^j) + q$. Given a payment-price vector $z = (\tau, p_1, p_2)$ define the money value function for type $w$ as 
\begin{equation}\label{eq0.1}
    V(w,z) = \max_{b \in \mathbb{R}} \ u\left(w - p_1 - \tau - b \right) +  u\left(w + p_2 + (1+r)b \right)
\end{equation}

Let $b(w,z)$ be the solution to this problem. The ranking on payment-price vectors induced by $V$, i.e. by optimal borrowing/saving, is the same for all types. In particular, payment-price vectors are ranked according to their PDVs.  When convenient, I will abuse notation and write $V(w,m)$ rather than $V(w,z)$, where $m$ is the PDV of $z$. In equilibrium it obviously must be the case that higher quality locations come with lower PDVs, otherwise no types would want to move to the low quality location, violating market clearing. 

Say that an equilibrium is \textit{monotone} if higher types always move to higher quality locations than lower types. The following monotonicity property greatly simplifies equilibrium characterization. It follows directly from concavity of $u$, which implies that higher income individuals have a higher marginal willingness to pay for location quality. The following lemma simplifies the statement of the market clearing condition.

\begin{lemma}\label{lem0.1}
    Any equilibrium is monotone. 
\end{lemma}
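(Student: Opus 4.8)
The plan is to prove monotonicity by a single-crossing argument on preferences, so that higher types always outbid lower types for higher-quality locations. Recall from the discussion preceding the lemma that the money value $V(w,z)$ depends on the payment-price vector $z$ only through its PDV $m$, and that it equals the maximal two-period utility subject to the lifetime budget constraint $x_1 + \tfrac{1}{1+r}x_2 = (1+\tfrac{1}{1+r})w + m$, where $x_1,x_2$ denote first- and second-period consumption. An agent of type $w$ located at $(q,j)$, with location quality $\ell = q + s(e^j)$ and PDV $m$, obtains total payoff $\ell + V(w,m)$. The paper already notes that in any equilibrium higher location quality must come with a strictly lower PDV (otherwise no one would occupy the low-quality location), so I take as given that across occupied locations $\ell$ and $m$ are inversely ordered.

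The heart of the argument is the single-crossing property that $w \mapsto V(w,m') - V(w,m)$ is strictly decreasing whenever $m' > m$. I would obtain this from the envelope theorem: letting $x_1^*(w,m)$ denote optimal first-period consumption, $\partial_m V(w,m) = u'(x_1^*(w,m))$. It then suffices to show $x_1^*$ is strictly increasing in $w$, i.e. that consumption is a normal good. Differentiating the first-order condition $u'(x_1) = (1+r)\,u'(x_2)$ together with the budget constraint with respect to total resources $R = (1+\tfrac{1}{1+r})w + m$ yields a $2\times 2$ linear system whose solution gives $\partial x_1^*/\partial R > 0$ (the two comparative-static derivatives must share a sign by the proportionality forced by $u'' < 0$, and the budget identity fixes that sign as positive). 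Hence $u'(x_1^*(w,m))$ is strictly decreasing in $w$, and integrating in $m$ shows $g(w) := V(w,m') - V(w,m) = \int_m^{m'} u'(x_1^*(w,t))\,dt$ is strictly decreasing in $w$. I expect this comparative-statics step, establishing strict normality of first-period consumption, to be the main obstacle, though it is routine given strict concavity of $u$ and an interior optimal $b$ (guaranteed by the assumed existence and uniqueness of $b(w,z)$).

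With single crossing in hand I finish by contradiction. Suppose some equilibrium is not monotone: there are types $w_1 < w_2$ such that $w_1$ occupies a location of quality $\ell_1$ and PDV $m_1$, while $w_2$ occupies a strictly lower-quality location $\ell_2 < \ell_1$ with PDV $m_2$. By the equilibrium inverse ordering of quality and PDV, $\ell_2 < \ell_1$ forces $m_2 > m_1$. Optimality of each agent's choice gives $\ell_1 + V(w_1,m_1) \ge \ell_2 + V(w_1,m_2)$ and $\ell_2 + V(w_2,m_2) \ge \ell_1 + V(w_2,m_1)$. Rearranging both and chaining them produces $V(w_1,m_2) - V(w_1,m_1) \le \ell_1 - \ell_2 \le V(w_2,m_2) - V(w_2,m_1)$, that is $g(w_1) \le g(w_2)$ with $g(w) = V(w,m_2) - V(w,m_1)$ and $m_2 > m_1$. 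This contradicts the strict decreasingness of $g$ established above. Thus no inversion of type against location quality can occur, which is exactly the claimed monotonicity.
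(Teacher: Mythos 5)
Your proof is correct and takes essentially the same approach as the paper's: both establish a single-crossing (decreasing-differences) property of $V$ in $(w,m)$ via the envelope theorem together with normality of consumption under strictly concave utility, and then conclude monotonicity from the chained optimality inequalities. The only differences are cosmetic---the paper differentiates $V$ in $w$ and compares $V_1$ across payment-price vectors (reducing to normality of second-period consumption), while you differentiate in $m$ and compare $u'(x_1^*)$ across types---though note that your normality step differentiates the first-order condition and thus tacitly assumes $u''$ exists, which the paper does not; the same conclusion follows from strict monotonicity of $u'$ (implied by strict concavity of $u$) via a direct comparison of the first-order conditions at two resource levels.
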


The power of Lemma \ref{lem0.1} is that there is in fact a unique allocation of individuals that satisfies monotonicity and fills all homes. This is best visualized by allocating homes from the bottom up, starting with the lowest-type individuals. Let $\Delta = s(e^A) - s(e^B)$ be the school quality gap between $A$ and $B$, and suppose $s(e^B) + \und{q}^B < s(e^A) + \und{q}^A$, so the worst quality locations are all in district $B$. Then monotonicity requires that the lowest types move to district $B$. This will continue until homes in district $B$ have been filled up to those with quality $q_* := \und{q}^A + \Delta $. Home $q^*$ in district $B$ has the same location quality as home $\und{q}^A$ in district $A$, the higher home quality exactly compensating for the lower school quality. At this point individuals of a given type will be split between districts, some individuals of a given type will more to district $A$ and some to district $B$, according to the relative availability houses in the two districts.\footnote{This description is of course informal given that there are a continuum of individuals.} Regardless of which district they move to however, they will have the same location quality. Eventually one of the districts will be full, and the remaining individuals, those with the highest types, will go to the other district. Formally, the equilibrium location allocation can be described as follows.

\begin{proposition}\label{prop1}
  Fix $e^A, e^B$, and without loss of generality let $B = \argmin_{j} \und{q}^j + s(e^j)$. The unique equilibrium location allocation takes the following form:
  \renewcommand{\labelenumi}{\roman{enumi}.}
  \begin{enumerate}
    \item Let $q_* = s(e^A) - s(e^B) + \und{q}^A$. Let $w_*$ satisfy $F(w_*) = Q^B(q_*)$. All individuals with $w < w_*$ locate in district $B$, with housing quality increasing in $w$. 
      
    \item If $\bar{q}^A + s(e^A) \geq \bar{q}^B + s(e^B)$, let $q^*$ satisfy $q^* + s(e^A) = \bar{q}^B + s(e^B)$ and let $w^*$ satisfy $F(w^*) = Q^A(q^*) + 1/2$. All individuals with $w > w^*$ locate in district A. If $\bar{q}^A + s(e^A) < \bar{q}^B + s(e^B)$, let $q^*$ satisfy $\bar{q}^A + s(e^A) = q^* + s(e^B)$ and let $w^*$ satisfy $F(w^*) = Q^B(q^*) + 1/2$. All individuals with $w > w^*$ locate in district B. Housing quality is increasing in $w$.
\end{enumerate}
\end{proposition}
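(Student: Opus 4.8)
The plan is to collapse the two-dimensional sorting problem over home quality and district into a one-dimensional matching along the index of location quality, and then to read off the allocation by inverting a cumulative distribution. By the discussion preceding \eqref{eq0.1}, the ranking that $V$ induces over payment-price vectors is the same for every type, so in any equilibrium a higher location quality must be compensated by a strictly lower PDV; together with Lemma \ref{lem0.1} this implies that the only attribute of a location relevant for the sorting of types is its location quality $s(e^j)+q$. I would first make this precise by letting $x(w)$ denote the location quality obtained by a type-$w$ individual and showing that Lemma \ref{lem0.1} forces $x(\cdot)$ to be nondecreasing.

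Second, I would encode market clearing through the cumulative location-quality distribution of the housing stock. Let
\[
  G(x) = Q^A(x - s(e^A)) + Q^B(x - s(e^B))
\]
be the mass of homes with location quality at most $x$ (each $Q^j$ read as $0$ below $\und{q}^j$ and as the district's total mass above $\bar{q}^j$), normalized so that $G$ ranges over $[0,1]$. Since each $Q^j$ is continuous with strictly positive density on its support, $G$ is continuous and strictly increasing across the whole range of attainable location qualities, hence invertible there. Market clearing --- the mass of individuals with income at most $w$ equals the mass of homes of location quality at most $x(w)$ --- then reads $F(w) = G(x(w))$, so that $x(w) = G^{-1}(F(w))$ is uniquely determined. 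This is the crux of the argument: it delivers uniqueness of the assignment of types to location qualities at one stroke. The indeterminacy in the overlap region, where a given type is indifferent between an $A$-home and a $B$-home of equal location quality, affects only which physical home an individual occupies and not the induced distribution; it is resolved by how the local increment of $G$ splits between $Q^A$ and $Q^B$.

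Third, I would translate $x(\cdot) = G^{-1}(F(\cdot))$ back into the $(q,j)$ description by cutting the income line at the two location qualities where the set of active districts changes. Below $s(e^A)+\und{q}^A$ only district $B$ supplies homes; this is the bottom region and gives $q_* = s(e^A)-s(e^B)+\und{q}^A$ together with the threshold $w_*$, where continuity and strict monotonicity of $F$ and $Q^B$ guarantee that the defining equation $F(w_*)=Q^B(q_*)$ has a unique solution. Above $\min\{\bar{q}^A+s(e^A),\ \bar{q}^B+s(e^B)\}$ only one district supplies homes, which yields the two cases of part (ii) according to which district attains the larger maximal location quality; in each case $q^*$ and $w^*$ are fixed by the analogous boundary identity and CDF-matching equation. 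Strict monotonicity of $G^{-1}\circ F$ then gives that housing quality is strictly increasing in $w$ within each region.

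I expect the main obstacle to be the careful treatment of the overlap region and of the region boundaries. One must check that routing the sorting entirely through $G$ is legitimate despite the ties in location quality there, so that market clearing neither double-counts nor omits homes, and that $G$ remains continuous and strictly increasing across the kinks at $s(e^A)+\und{q}^A$ and at $\min\{\bar{q}^A+s(e^A),\ \bar{q}^B+s(e^B)\}$, where the collection of contributing districts changes. The remaining step --- exhibiting a price schedule that supports this allocation, obtained by choosing PDVs strictly decreasing in $x$ and decomposing each into a payment-price vector $(\tau, p_1, p_2)$ consistent with the fixed tax schedules --- is routine given monotonicity and can be handled as a verification.
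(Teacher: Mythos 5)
Your proposal is correct, and it takes a genuinely different route from the paper. The paper's proof stays in district coordinates: it works with the within-district quantile functions $q^A(w),q^B(w)$ and establishes the three-region structure through a sequence of contradiction arguments (its Claims 1--3, each of the form ``otherwise markets cannot clear given Lemma \ref{lem0.1}''): the bottom segment of types must fill district $B$, the top segment fills whichever district has the larger maximal location quality, and location qualities are equalized across districts in between. You instead pool the two housing stocks into a single distribution $G$ over location quality and identify the equilibrium assignment as the unique monotone rearrangement $x(w)=G^{-1}(F(w))$, reading off the three regions from where the composition of $G$ changes. The ingredients are identical (Lemma \ref{lem0.1} plus market clearing), but your packaging buys uniqueness in one stroke and extends verbatim to $N$ districts, which is exactly what Section \ref{sec:multiple} needs; the paper's district-by-district formulation has the advantage of explicitly constructing the objects $q^j(w)$ and $\Gamma^j(w)$ and the middle-region indifference structure on which later results (Corollary \ref{cor:indifference} and the envelope arguments) are built. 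Two small points to tighten: first, $G$ is \emph{not} strictly increasing on the whole attainable range when the two location-quality supports are disjoint --- it is flat on the gap --- so invertibility should be claimed only on the union of the supports (the conclusion is unaffected, since no homes and hence no individuals are assigned qualities in the gap); second, your final step constructing supporting prices is not needed for the proposition as stated, which characterizes the allocation in \emph{any} equilibrium --- the paper likewise defers all price determination to the envelope condition and the subsidized-housing anchor.
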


It will convenient to work with functions $q^j: [\und{w},\bar{w}]: \mapsto [\und{q}^j,\bar{q}^A]$ which describe the home that a type-$j$ individual would move to in district $j$. Formally, let $\Gamma^j: [\und{w},\bar{w}] \mapsto [0,1]$ be the mass of new individuals with type less than or equal to $w$ moving to district $j$, which fully describe the location allocation given Lemma \ref{lem0.1}. The market clearing requirement that all individuals find a house can be written as
\begin{align}
    \Gamma^A(w) + \Gamma^B(w) &=  F(w) \ \ \forall \ w \in W \label{eq:marketclearing1}\\
    \Gamma^A(\Bar{w}) &= \Gamma^B(\Bar{w}) = \frac{1}{2}.\label{eq:marketclearing2}
\end{align}
Define the function $q^j(w)$ implicitly by
\begin{equation}\label{eq:location}
    Q^j(q^j(w)) = \Gamma^j(w).
\end{equation}
So $q^j(w)$ is the $\Gamma^j(w)$ quantile of $Q^j$. By Lemma \ref{lem0.1} home quality choices within each district will be monotone in type, so we can also refer to $q^j(w)$ as the district $j$ \textit{house choice} for type $w$. These are determined by (\ref{eq:location}) for all $w$ that move to district $j$ in equilibrium. Conversely, given $q^A,q^B$ we can back out $\Gamma^A$ and $\Gamma^B$ via (\ref{eq:marketclearing1}),(\ref{eq:marketclearing2}), and (\ref{eq:location}).

Proposition \ref{prop1} used the fact that optimal location choice implies monotonicity to pin down the equilibrium location allocation. In order pin down equilibrium home prices, we need to consider more carefully the incentives of the new arrivals. From the perspective of an individual, each location is characterized by a location quality and payment-price vector (or, equivalently, a PDV). For any given prices and tax schedule, the location allocation therefore specifies a payment-price allocation $z^j(w)$ with PDV $m^j(w)$. Under market clearing, an allocation of individuals satisfies individual optimality if and only if it satisfies an incentive compatibility constraint; no individual strictly prefers the location choice of another individual.\footnote{This equivalence depends crucially on the fact that taxes are based on location, wealth.} 

Given a location allocation, expenditure levels, tax schedules, and prices, let $U(w)$ be the value function for type $w$:
\begin{equation}
    U(w) = \max_{w'\in [\und{w},\bar{w}], \ j \in\{A,B\}} s(e^j) + q^j(w') + V(w,m^j(w')).
\end{equation}
Equivalently, $U(w) = \max\{U_A(w), U_B(w)\}$, where $U_j(w)$ is the value when $w$ is restricted to district $j$. Our goal is to characterize the equilibrium prices and location allocations. To do so, it is without loss to restrict attention to pricees that satisfy within-district IC: any type (weakly) prefers their own payment-price vector and house in district $j$ to that of any other type.\footnote{Within-side IC is not implied by IC. A type $w$ could prefer the district $A$ allocation of $w'$ to its own, as long as $w$ prefers its district $B$ allocation to both. But in such cases we can always modify $q^A, q^B$ to satisfy within-side IC without changing the actual location allocation. } By \cite{milgrom2002envelope}, if the mechanism is within-district IC then $U_j$ is absolutely continuous and satisfies the envelope formula.
\begin{equation}\label{eq0.8}
    U_j(w) = U_j(\und{w}) + \int_{\und{w}}^w V_1(x,z(x))dx \ \ \ \ \forall \ w.
\end{equation}
It will sometimes be helpful to separate the value function $U$ into its the monetary and non-monetary components. Let $M_j(w) = V(w,z^j(w))$, so that $U_j(w) = s(e^j) + q^j(w) + M_j(w)$.

The following lemma is an immediate consequence of continuous value functions is. It says that if a type strictly prefers a given district then there is a neighborhood of nearby types that also strictly prefers that district. 

\begin{lemma}\label{lem0.2}
If $U_A(w) > U_B(w)$ then there exists $\varepsilon$ such that for all $ w' \in (w-\varepsilon, w + \varepsilon)$, $U_A(w') > U_B(w')$. 
\end{lemma}
 
Lemma \ref{lem0.2} has the following immediate implication.
 
\begin{corollary}\label{cor:indifference}
Let $w^*,w_*$ be as in Proposition \ref{prop1}. All types $w \in [w_*, w^*]$ satisfy $q^A(w) + s(e^A) = q^B(w) + s(e^B)$ and $m^A(w) = m^B(w)$. Housing quality is increasing in $w$. 
\end{corollary}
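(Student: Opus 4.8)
The plan is to prove Corollary \ref{cor:indifference} by combining the location allocation from Proposition \ref{prop1} with the continuity of value functions encoded in Lemma \ref{lem0.2}, establishing that every type in the overlap region $[w_*, w^*]$ must be \emph{indifferent} between the two districts. First I would argue that no type $w \in (w_*, w^*)$ can strictly prefer one district. By Proposition \ref{prop1}, for $w$ strictly between the thresholds, positive masses of types arbitrarily close to $w$ locate in \emph{both} districts $A$ and $B$ (this is precisely the ``split'' region described in the text between Proposition \ref{prop1} and Corollary \ref{cor:indifference}). Suppose for contradiction that $U_A(w) > U_B(w)$ for some such $w$. By Lemma \ref{lem0.2}, there is a neighborhood $(w-\varepsilon, w+\varepsilon)$ on which $U_A > U_B$ strictly, so optimal location choice forces \emph{every} type in that neighborhood into district $A$. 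But this contradicts the market-clearing allocation of Proposition \ref{prop1}, which sends a positive mass of types from that neighborhood to district $B$. The symmetric argument rules out $U_B(w) > U_A(w)$. Hence $U_A(w) = U_B(w)$ for all $w$ in the open interval, and by continuity of $U_A, U_B$ the equality extends to the closed endpoints $w_*$ and $w^*$.

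Next I would translate the indifference $U_A(w) = U_B(w)$ into the two claimed equalities. Writing out the definitions,
\begin{equation*}
    s(e^A) + q^A(w) + V(w, m^A(w)) = s(e^B) + q^B(w) + V(w, m^B(w)),
\end{equation*}
so the location qualities and money values must jointly balance. To separate these into the two stated conditions $q^A(w) + s(e^A) = q^B(w) + s(e^B)$ and $m^A(w) = m^B(w)$, I would invoke the within-district IC / monotonicity structure together with market clearing at the boundary types. The key observation is that at $w = w_*$ we have $q^B(w_*) = q_*$ with $q_* + s(e^B) = \und{q}^A + s(e^A)$, the location qualities match by construction, forcing $V(w_*, m^A(w_*)) = V(w_*, m^B(w_*))$, and since $V(w, \cdot)$ is strictly increasing in the PDV (as payment-price vectors are ranked by PDV for all types, per the discussion preceding Lemma \ref{lem0.1}), this yields $m^A(w_*) = m^B(w_*)$. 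I would then propagate this equality across the whole interval using the envelope formula \eqref{eq0.8}: since $U_A(w) = U_B(w)$ on $[w_*, w^*]$ and both satisfy the same envelope integral with integrand $V_1(x, z(x))$, differentiating gives matched marginal money values, which together with the boundary condition pins down $m^A(w) = m^B(w)$ throughout, and the location-quality equality follows residually.

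The main obstacle, and the step deserving the most care, is the \emph{separation} of the single indifference equation into the two independent equalities on location quality and PDV. Indifference $U_A(w) = U_B(w)$ is one scalar condition, whereas the corollary asserts two. The resolution must exploit that the \emph{same} type $w$ is simultaneously indifferent across a whole interval of nearby types within each district (within-district IC) and across districts; this over-determination is what forces both the non-monetary and monetary components to match separately. Concretely, I expect to use that if location qualities differed, say $q^A(w) + s(e^A) > q^B(w) + s(e^B)$, then to maintain indifference we would need $V(w, m^A(w)) < V(w, m^B(w))$, i.e. $m^A(w) < m^B(w)$; but then a nearby higher type $w' > w$, having a higher marginal willingness to pay for location quality (strict concavity of $u$), would strictly prefer the higher-location-quality district $A$ allocation, violating the requirement from Proposition \ref{prop1} that types just above $w$ still split across both districts. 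This single-crossing argument is the crux. Finally, monotonicity of housing quality in $w$ is inherited directly from Lemma \ref{lem0.1} applied within each district.
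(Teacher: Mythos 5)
Your proof is correct, and its first half is the paper's own argument: assume a strict preference, invoke Lemma \ref{lem0.2} to get a neighborhood of types sharing that strict preference, and contradict market clearing because the monotone allocation of Proposition \ref{prop1} must send positive mass from every such neighborhood to both districts. Where you diverge is in the second half. The paper never needs to ``separate'' the single indifference equation into two conditions: the equality $q^A(w) + s(e^A) = q^B(w) + s(e^B)$ on the middle interval is already part of the allocation characterization (it is Claim 3, i.e.\ part iii, in the proof of Proposition \ref{prop1}, proved purely from monotonicity and market clearing with no reference to utilities), so once indifference $U_A(w) = U_B(w)$ is established, cancelling the equal location qualities gives $V(w, m^A(w)) = V(w, m^B(w))$, and strict monotonicity of $V(w,\cdot)$ in the PDV makes the payment claim ``immediate,'' exactly as the paper says. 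You instead re-derive the location-quality equality from utility reasoning: either by matching envelope derivatives via (\ref{eq0.8}) (valid, since $V_1(w,\cdot)$ is strictly monotone in $m$ by concavity of $u$, so $U_A \equiv U_B$ forces $m^A = m^B$ almost everywhere and hence everywhere by continuity, with the quality equality following residually --- note the boundary condition at $w_*$ is not even needed for this), or by your single-crossing argument (also valid, with the small caveat that the comparison must be made against the nearby higher type's \emph{optimal} district-$B$ bundle rather than the original type's district-$B$ bundle, which continuity of $q^B$ and $m^B$ handles). So your proof duplicates work that Proposition \ref{prop1} has already done, and is correspondingly longer than the paper's two-line argument; what it buys is a self-contained demonstration that the two equalities are jointly forced by indifference, single crossing, and the split structure of the allocation. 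Your worry about ``one scalar equation versus two claimed equalities'' is exactly the right thing to worry about, and your resolution of it is sound --- the paper simply resolves it by having already proved one of the two equalities at the allocation level.
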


Given the location allocation, the PVDs are determined via the envelope condition (\ref{eq0.8}). Proposition \ref{prop1}, and in particular point $iii.$, implies that we can decompose the individual's value function into the monetary and non-monetary components, without specifying their district choice. Thus write $U(w) = \ell(w) + M(w)$, where $\ell(w) = q^j(w) + s(e^j)$ when type $w$ chooses district $j$. I refer to $M$ as the money value.

Proposition \ref{prop1} is helpful for understanding migration patterns. It also greatly simplifies the game played by districts when choosing their expenditure levels and tax schedules. The location allocation is uniquely determined by the school quality gap between the two districts. This means that changes in the structure of the tax schedule, for example making property taxes more or less progressive, will not impact migration patterns (although they will affect welfare). This is due not to capitalization of taxes into the home price, but to the single-crossing property of preferences; richer individuals will always outbid others for the best locations. Since the location allocation is entirely pinned down by the school quality gap, the only strategic choice that the district must make is the level of expenditure. Given the expenditure levels, the district knows who will move where, and can structure the tax schedule to raise the desired revenue, without worrying about the impact this may have on location decisions. This separation will be used in Section \ref{sec:game} to analyse the equilibrium in the game played between districts, and derive the policy recommendations. The only missing piece is to understand equilibrium payment-price vectors. 

To pin down the equilibrium PDVs it only remains to determine the value of the lowest types. The envelope condition then fully determines all the PDVs. I assume that the money value of the lowest type, $M(\und{w})$, is fixed at some value $\und{M}$. This is due to the existence of subsidized housing.\footnote{Subsidized housing is just one explanation for the existence of a fixed outside option. It could also be the result of other forms of government intervention.} Assume that in each district $j$ there is a mass $\sigma> 0$ of tax exempt ``subsidized housing'' with quality $\und{q}^j$. Subsidized housing is available at a the same fixed, not market, price in both districts.\footnote{Really all that is needed is that there is subsidized housing available in whichever district the poorest new arrivals move to, as we will see.} Moreover, suppose that there is an additional mass $\rho$ of individuals with types continuously distributed on $[0,\und{w}]$, where $\sigma < \rho < 2\sigma$. Subsidized housing is allocated on the basis of income, with priority given to low income individuals. If type $\und{w}$ strictly prefers market housing in district $j$ to subsidized housing in district $j$ then so will some type below $\und{w}$, and thus, since $\rho > \sigma$, there will be excess demand for market housing. On the other hand, since $\rho < 2\sigma$, there will be vacant subsidized housing in district $j$ if $s(e^j) + \und{q}^j < s(e^{-j}) + \und{q}^{-j}$. Since these subsidized units are available, type $\und{w}$ must be indifferent between choosing the low quality market housing and the subsidized housing in whichever district $j$ satisfies $s(e^j) + \und{q}^j < s(e^{-j}) + \und{q}^{-j}$. Since subsidized housing is tax exempt and has a fixed price, this implies a fixed value for $M(\und{w})$.

Proposition \ref{prop1} implies that the location allocation is uniquely pinned down by the expenditure gap $s(e^A) - s(e^B)$. This is also enough to pin down the money value $w \mapsto M(w)$, given the fixed money value $\und{M}$ for type $\und{w}$. 

\begin{lemma}\label{lem3.5}
$M$ is uniquely determined by $s(e^A) - s(e^B)$
\end{lemma}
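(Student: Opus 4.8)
The plan is to combine the fact (Proposition \ref{prop1}) that the location allocation is pinned down by $\Delta = s(e^A)-s(e^B)$ with the envelope characterization (\ref{eq0.8}) to obtain an initial value problem for $M$ whose data depend only on $\Delta$, and then to invoke a standard uniqueness theorem.

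First I would use the decomposition $U(w)=\ell(w)+M(w)$, where $\ell(w)=q^j(w)+s(e^j)$ is the location quality of type $w$'s equilibrium location (well-defined independently of the district chosen, by Corollary \ref{cor:indifference}). The crucial observation is that although $\ell$ itself depends on the levels $s(e^j)$, its derivative does not: since $q^j$ is defined by $Q^j(q^j(w))=\Gamma^j(w)$ and the functions $\Gamma^A,\Gamma^B$ are fixed by Proposition \ref{prop1} once $\Delta$ is known, $\ell'(w)=(q^j)'(w)$ is determined by $\Delta$ alone, and agrees across districts on the indifference region $[w_*,w^*]$. I would also record that the boundary value $M(\und{w})=\und{M}$ is a fixed constant, independent of $\Delta$: type $\und{w}$ is indifferent between the lowest market home and the tax-exempt, fixed-price subsidized unit of the \emph{same} quality in the \emph{same} district, so the location-quality terms cancel and the two money values must coincide.

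Next I would turn the envelope condition into an ODE in $M$. Differentiating (\ref{eq0.8}) gives $U'(w)=V_1(w,m(w))$ almost everywhere, where $m(w)$ is the common PDV (equal across districts on the indifference region by Corollary \ref{cor:indifference}). Since for fixed $w$ the map $m\mapsto V(w,m)$ is continuous and strictly increasing it is invertible, so I can write $m(w)=\phi(w,M(w))$ and hence $V_1(w,m(w))=g(w,M(w))$ for a function $g$ built only from the primitives $u$ and $r$ via the envelope formula $V_1=u'(c_1^*)+u'(c_2^*)$. Combining with $U'=\ell'+M'$ yields $M'(w)=g(w,M(w))-\ell'(w)$ a.e., together with $M(\und{w})=\und{M}$. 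Every ingredient on the right-hand side, namely $g$ from primitives, $\ell'$ from $\Delta$, and the initial condition, is fixed once $\Delta$ is fixed, so any two candidate money-value functions sharing the same $\Delta$ solve the same initial value problem.

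The main obstacle is the uniqueness step: I must verify that this problem has at most one absolutely continuous solution. This reduces to checking that $g(w,\cdot)$ is locally Lipschitz, uniformly on the relevant compact range of $(w,M)$, which follows from strict concavity and continuous differentiability of $u$ (these make the optimal $b(w,z)$ and the optimal consumptions vary smoothly and make $u'$ Lipschitz on compacta). Given this, Picard--Lindel\"of, or a direct Gr\"onwall argument applied to the equivalent Volterra integral equation $M(w)=\und{M}+\int_{\und{w}}^w[g(x,M(x))-\ell'(x)]\,dx$, delivers uniqueness; and since $m(w)=\phi(w,M(w))$, the PDVs are pinned down as well. The only subtlety requiring care is that the envelope identity holds merely almost everywhere and that $U$ is only absolutely continuous across the district-switching types, so I would work with the integral form rather than the pointwise ODE to sidestep this.
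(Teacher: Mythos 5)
Your reduction of the lemma to an initial value problem is sound and mirrors the paper's logic: the decomposition $U = \ell + M$, the observation that $\ell'$ depends only on the gap $\Delta$ via Proposition \ref{prop1} and Corollary \ref{cor:indifference}, the fixed boundary value $M(\und{w}) = \und{M}$ from the subsidized-housing option, and the envelope formula (\ref{eq0.8}) turning the problem into $M'(w) = g(w,M(w)) - \ell'(w)$ a.e.\ with $g(w,M) = V_1(w,\phi(w,M))$. The gap is in your uniqueness step. You invoke Picard--Lindel\"of (or Gr\"onwall), which requires $g(w,\cdot)$ to be locally Lipschitz, and you assert this follows because strict concavity plus continuous differentiability of $u$ ``make $u'$ Lipschitz on compacta'' and make the optimal $b(w,z)$ vary smoothly. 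Neither claim is true under the paper's assumptions: $u$ is only assumed $C^1$ and strictly concave, and a $C^1$ strictly concave function need not have a locally Lipschitz derivative (take $u'(c) = 2 - (c-1)^{1/3}$ near $c=1$: continuous, strictly decreasing, positive, but not Lipschitz at $c=1$). Likewise, smoothness of the optimal savings rule would need $u''$ to exist and be bounded away from zero (implicit function theorem applied to the FOC); with only $C^1$ utility you get continuity of $b(w,z)$, not Lipschitz dependence. So as written, the Picard--Lindel\"of/Gr\"onwall step does not go through.

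The fix is available and is precisely the mechanism the paper uses in its Claim 1: strict concavity of $u$ together with the savings FOC implies that, at fixed $w$, a \emph{higher} money value corresponds to a \emph{lower} marginal value, i.e.\ $V(w,\hat{z}) > V(w,\tilde{z}) \Leftrightarrow V_1(w,\hat{z}) < V_1(w,\tilde{z})$. In your notation this says $g(w,\cdot)$ is strictly \emph{decreasing} in $M$ --- a one-sided (dissipative) condition that requires no Lipschitz hypothesis. Uniqueness for the forward problem then follows directly: if $M_1, M_2$ are two absolutely continuous solutions of the integral equation with $M_1(\und{w}) = M_2(\und{w})$, then $h = M_1 - M_2$ satisfies $\frac{d}{dw}h^2 = 2h\,\bigl(g(w,M_1)-g(w,M_2)\bigr) \leq 0$ a.e., so $h^2$ is non-increasing, nonnegative, and zero at $\und{w}$, hence identically zero. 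The paper runs this same comparison argument by hand (no separation of the two value functions can begin at any point of agreement), which is why it never needs Lipschitz regularity. Replacing your Picard--Lindel\"of appeal with this monotone-comparison argument closes the gap and makes your proof essentially equivalent to the paper's.
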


As long as there is a type that is indifferent between the two districts, there is a one-to-one mapping between the location allocation and the expenditure gap. This breaks down when one district's location quality dominates the other's; for example if $\und{q}^A + s(e^A) > \bar{q}^B + s(e^B)$. When this holds, small changes to either district's expenditure levels will have no impact on the location allocation. A corollary of Proposition \ref{prop1} and Lemma \ref{lem3.5} is that $M$ is uniquely determined by the expenditure gaps between the two districts, as long as there is no such dominance. 

\begin{corollary}\label{cor1}
Assume there exists a type that is indifferent between the two districts (i.e. there is no quality dominance as described above). Then $M$ is uniquely determined by the location allocations.
\end{corollary}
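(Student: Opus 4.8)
The plan is to leverage the two facts already established: by Lemma~\ref{lem3.5}, the money value $M$ is a function of the expenditure gap $\Delta := s(e^A) - s(e^B)$ alone, and by Proposition~\ref{prop1} the location allocation is itself a function of $\Delta$. The corollary will therefore follow if I can show that, under the hypothesis that an indifferent type exists, the expenditure gap $\Delta$ can be recovered from the location allocation. Once $\Delta$ is pinned down by the allocation, composing with Lemma~\ref{lem3.5} immediately yields that $M$ is determined by the allocation. So the entire content reduces to recovering $\Delta$.

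The key step is a direct reading of $\Delta$ off the allocation via Corollary~\ref{cor:indifference}. The hypothesis of no quality dominance is exactly the statement that the indifference interval $[w_*, w^*]$ from Proposition~\ref{prop1} and Corollary~\ref{cor:indifference} is nonempty. I would pick any type $w$ in the interior of this interval; for such a type, both house choices $q^A(w)$ and $q^B(w)$ are specified by the location allocation (individuals of this type are split across the two districts), and Corollary~\ref{cor:indifference} gives $q^A(w) + s(e^A) = q^B(w) + s(e^B)$. Rearranging, $\Delta = q^B(w) - q^A(w)$, which is a quantity entirely determined by the allocation. Thus $\Delta$, and hence $M$, is uniquely determined by the location allocation.

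The role of the hypothesis is where the only real subtlety lies, and it is worth flagging as the point the argument genuinely depends on. When one district's location quality dominates the other's, there is no indifferent type, the allocation is fully sorted into top and bottom halves, and this sorted allocation is insensitive to $\Delta$ over an entire range of expenditure gaps; the recovery step above then has no indifferent $w$ to exploit, and $\Delta$ cannot be read off. This is precisely why the no-dominance assumption is imposed. As an alternative substantiation of the ``one-to-one mapping'' remark preceding the corollary, I could instead argue via monotonicity: the cutoff $w_*$ defined by $F(w_*) = Q^B(\Delta + \und{q}^A)$ is strictly increasing in $\Delta$ because $F$ and $Q^B$ are strictly increasing, so distinct gaps yield distinct allocations. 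Both routes give injectivity of the map $\Delta \mapsto$ allocation on the no-dominance region and hence the same conclusion, but the indifference-difference identity $\Delta = q^B(w) - q^A(w)$ is the cleaner of the two and is the one I would present.
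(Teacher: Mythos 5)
Your proposal is correct and takes essentially the same route as the paper, which treats the corollary as an immediate consequence of Proposition \ref{prop1} and Lemma \ref{lem3.5} together with the one-to-one mapping between location allocation and expenditure gap under no dominance (a mapping the paper asserts in the text without formal proof). Your explicit recovery of the gap via $\Delta = q^B(w) - q^A(w)$ for an indifferent type, using Corollary \ref{cor:indifference}, simply substantiates the injectivity step that the paper leaves informal.
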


The following lemma, which will be useful later on, follows from a similar argument. It is not necessary to rule out quality dominance here, since the subsidized housing outside option pins down $M(\und{w})$.

\begin{lemma}\label{lem3.6}
Consider two equilibria in which all types in $[\und{w}, w']$ move to the same location. Then all types in this interval receive the same money values in the two equilibria.
\end{lemma}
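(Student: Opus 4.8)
The plan is to reduce the determination of the money value $M$ on $[\und{w}, w']$ to a self-contained integral equation with a fixed initial condition and a fixed forcing term, and then to invoke uniqueness via Grönwall's inequality. The first step is to record that the envelope integrand depends only on total lifetime resources. Writing $\kappa = 1 + \frac{1}{1+r}$, the borrowing/saving problem in (\ref{eq0.1}) has intertemporal budget constraint $c_1 + \frac{1}{1+r}c_2 = \kappa w + m$, which depends on the payment-price vector only through its PDV $m$. Hence the optimal consumption pair, and therefore both $V$ and $V_1$, depend on $(w,z)$ only through $R = \kappa w + m$. Writing $V = h(R)$ and $V_1 = g(R)$, strict concavity and $C^1$-ness of $u$ give that $h$ is strictly increasing and continuously differentiable with $h' > 0$, so $R = h^{-1}(M)$ and the integrand can be written as $V_1 = g(h^{-1}(M)) =: \psi(M)$, a fixed function of the money value alone (in particular independent of $w$ and of the district chosen).

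Next I would observe that both the boundary value and the forcing term are common to the two equilibria. The discussion preceding the lemma shows that the subsidized-housing outside option pins $M(\und{w}) = \und{M}$ in every equilibrium. Because types in $[\und{w}, w']$ occupy the same location $(q,j)$ in both equilibria, the location-quality term $\ell(w) = q^j(w) + s(e^j)$ coincides across the two equilibria on this interval, so in particular $U(\und{w}) = \ell(\und{w}) + \und{M}$ is common. Applying the envelope formula (\ref{eq0.8}) to the full value function $U$ and subtracting $\ell(w)$ yields, for each equilibrium $i \in \{1,2\}$,
\[
M_i(w) = U(\und{w}) - \ell(w) + \int_{\und{w}}^{w} \psi\bigl(M_i(x)\bigr)\, dx, \qquad w \in [\und{w}, w'],
\]
where the forcing term $U(\und{w}) - \ell(w)$ is identical for $i = 1,2$.

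Finally I would establish uniqueness. Over the compact interval $[\und{w}, w']$ the attained money values lie in a compact set, on which the corresponding resources $R$ stay in a compact range where $h'$ is bounded away from zero and $g'$ is bounded; hence $\psi = g \circ h^{-1}$ is Lipschitz, say with constant $L$. Setting $D(w) = M_1(w) - M_2(w)$, the two integral equations give $|D(w)| \le L \int_{\und{w}}^{w} |D(x)|\, dx$ with $D(\und{w}) = 0$, and Grönwall's inequality forces $D \equiv 0$ on $[\und{w}, w']$, which is the claim.

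I expect the main obstacle to be the apparent circularity: the envelope integrand is evaluated at the equilibrium PDV $m(x)$, itself an endogenous object that differs a priori across equilibria. The crux is the reduction of $V$ and $V_1$ to functions of total resources, which turns the integrand into the fixed function $\psi$ of $M(x)$ alone and converts the envelope identity into a genuine integral equation; once that is in hand the Lipschitz/Grönwall step is routine. A secondary technical point is confirming that (\ref{eq0.8}) may be applied to $U$ rather than merely to the within-district $U_j$ across any district switch inside $[\und{w}, w']$; this causes no difficulty because $\psi$ does not depend on which district is chosen and because indifferent types share a common PDV by Corollary \ref{cor:indifference}.
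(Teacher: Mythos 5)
Your core reduction is, in substance, the same engine that drives the paper's own argument. The paper proves Lemma \ref{lem3.6} by the "same argument" as Lemma \ref{lem3.5}, whose key step is the observation that, at fixed $w$, $V(w,\hat{z}) > V(w,\tilde{z})$ if and only if $V_1(w,\hat{z}) < V_1(w,\tilde{z})$ --- precisely your statement that the envelope integrand is a strictly decreasing function $\psi$ of the money value alone (your derivation via lifetime resources $R = \kappa w + m$ makes this explicit and is correct; note $\kappa h' = u'(c_1) + u'(c_2)$ by the savings FOC, matching the paper's expression for $V_1$). Where you genuinely diverge is the uniqueness mechanism. The paper runs a qualitative contradiction: if the two money-value functions agree at some type and strictly separate on an interval just above it, the envelope formula plus the inverse monotonicity of $V_1$ in $V$ forces the opposite ordering at the top of that interval; combined with the common boundary value $\und{M}$ (pinned by subsidized housing) and continuity, equality holds everywhere. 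You instead convert the envelope identity into an integral equation with common forcing and initial condition and invoke Gr\"onwall. Your handling of the boundary condition, the common forcing term, and the district-switch issue (via Corollary \ref{cor:indifference}) is fine.

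The gap is in the Lipschitz step. Gr\"onwall requires $\psi = g \circ h^{-1}$ to be Lipschitz on the relevant range, and you justify this by asserting that ``$g'$ is bounded'' on compacts. But $g = \kappa h'$, and the paper assumes only that $u$ is continuously differentiable and strictly concave --- not $C^2$, and not that $u'$ is locally Lipschitz. Under those assumptions $h'$ need not be locally Lipschitz: if $u'$ has cube-root-type decreases at two consumption levels $a<b$ matched by the intertemporal FOC, $u'(a) = (1+r)u'(b)$, then solving the FOC near $R^* = a + b/(1+r)$ gives $h'(R) - h'(R^*) \sim -c\,(R-R^*)^{1/3}$, so $\psi$ has no Lipschitz constant near the corresponding money value and the Gr\"onwall step fails. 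The repair is immediate and brings you back to the paper's mechanism: you do not need Lipschitz continuity, only that $\psi$ is \emph{strictly decreasing} (higher money value means larger lifetime resources, hence smaller marginal value). Since $D = M_1 - M_2$ is absolutely continuous with $D(\und{w}) = 0$ and $D'(w) = \psi(M_1(w)) - \psi(M_2(w))$ almost everywhere, $D'$ has sign opposite to $D$, so $(D^2)' = 2DD' \le 0$ a.e.; as $D^2$ is nonnegative, absolutely continuous, and starts at zero, $D \equiv 0$ on $[\und{w}, w']$. With that substitution your proof is complete under exactly the paper's assumptions, and constitutes a clean self-contained alternative to the paper's contradiction argument.
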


Lemma \ref{lem3.5} has what at first glance appears to be a startling implication. Since the money values depend only on the expenditure gap, not on levels, starting from the current generation of incoming residents all individuals payoffs can be arbitrarily improved from a given equilibrium (assuming $s$ is strictly increasing) by fixing the expenditure gap in every period and taking both districts' spending levels to infinity (this requires negative home prices). There is no free lunch however. The increase in current and future school quality is paid for by current residents who are selling their homes. 

For example, consider two period model and a steady state equilibrium such that home prices are constant over time. Then the home price in a given location satisfies
\begin{equation*}
    p = -\tau + \dfrac{1}{1+r}p - m
\end{equation*}
so the price is given by 
\begin{equation*}
    p = -\dfrac{1}{1+r}\left( \tau + m  \right)
\end{equation*}
Suppose that in both districts expenditure in all periods increases by $\varepsilon$, and that this is funded by an increase in the tax bill of $\varepsilon$ for each home. Then the location allocation and lump-sum equivalents remain unchanged. If the economy remains in steady state then the new price is 
\begin{equation*}
    \hat{p} = -\dfrac{1}{1+r}\left( \tau + m + \varepsilon \right).
\end{equation*}
Thus the decrease in the home price following the increase in expenditure is $\frac{1+r}{r} \varepsilon$. This means that the current generation of home sellers creates an endowment which pays a dividend in each period sufficient to fund the increased school expenditure.

This observation implies that the difficult task, from a social welfare perspective, is to increase school quality without disproportionately harming the current generation of residents. I now address this objective. The first step is to understand the game played between the districts. Before formalizing the interaction between the two districts, I will make a few more observations regarding the competitive equilibrium outcome.

\subsection{Comparative statics}
I now consider comparative statics on the distribution of residents. This result is the driving force behind the welfare conclusions and policy recommendations of the remainder of the paper. It is relevant beyond the specific district objectives discussed in the next section.

Suppose that school quality in one district increases, holding that in the other fixed. I want to know what happens to payment-price vectors and the welfare of both new arrivals and existing residents. In any equilibrium with payment price vector $z^j$ in district $j$, let $z_q^j$ be the payment-price vector associated with home $q$ in district $j$, and let $w^j_q$ be the type moving home $q$ in district $j$. Let $m^j_q$ be the PDV function.

\begin{proposition}\label{prop2}
Consider expenditure profiles $\{e^A, e^B \}$ and $\{\hat{e}^A, \hat{e}^B \}$ with $\hat{e}^A - \hat{e}^B > e^A - e^B$ (without loss of generality let $\hat{e}^B = e^B$). Let $\hat{m},m$ be the associated PDV functions.  Then
\begin{itemize}
    \item If $\und{q}^A + s(e^A) \geq \bar{q}^B +s(e^B)$ then $\hat{m}^B_q = m^B_q$ and $m^A_q > \hat{m}^A_q$ for all $q$.
    \item If $\bar{q}^A + s(\hat{e}^A) < \und{q}^B + s(\hat{e}^B)$ then $\hat{m}^A_q = m^A_q$ and $\hat{m}^B_q > m^B_q$ for all $q$.
\end{itemize}
If neither of these hold then
\begin{itemize}
    \item If $\und{q}^A + s(\hat{e}^A) \geq \und{q}^B + s(\hat{e}^B)$ let $q^*$ satisfy $\und{q}^A + s(\hat{e}^A) = q^* + s(\hat{e}^B)$. Then $m_q^A > \hat{m}_q^A$ for all homes in district $A$. In district $B$ $\hat{m}_q^B > m_q^B$ for $q > q^*$ and $\hat{m}_q^B = m_q^B$ for $q \leq q^*$.
    \item If $\und{q}^A + s(\hat{e}^A) < \und{q}^B + s(\hat{e}^B)$ let $\hat{q}^*$ satisfy $\hat{q}^* + s(\hat{e}^A) = \und{q}^B + s(\hat{e}^B)$. Then $m_q^A > \hat{m}_q^A$ for all $q > q^*$ and $m_q^A = \hat{m}_q^A$ for all $q \leq q^*$. In district $B$ $\hat{m}_q^B > m_q^B$ for all $q$.
\end{itemize}
Moreover, whenever $m^A_q > \hat{m}^A_q$ holds, 
\begin{equation*}
 \limsup_{\hat{e}^A \searrow e^A} \dfrac{\hat{m}^A_q - m^A_q}{\hat{e}^A - e^A } < 0
\end{equation*}
and whenever $\hat{m}^B_q > m^B_q$, 
\begin{equation*}
 \limsup_{\hat{e}^A \searrow e^A} \dfrac{\hat{m}^B_q - m^B_q}{\hat{e}^A - e^A } > 0.
\end{equation*}

\end{proposition}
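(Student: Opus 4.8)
The backbone of my argument is a \emph{within-district pricing ODE} for the PDV. Write $\mu^j(q):=m^j_q$ for the PDV attached to home $q$ in district $j$, and $w^j(q)$ for the type occupying it. Differentiating the envelope formula (\ref{eq0.8}), $U_j'(w)=V_1(w,z(w))$, together with $M_j(w)=V(w,m^j(w))$ and $U_j(w)=s(e^j)+q^j(w)+M_j(w)$, eliminates $V_1$ and yields, after changing variables from type to home quality,
\begin{equation*}
\frac{d\mu^j}{dq}(q)=-\frac{1}{V_2\big(w^j(q),\mu^j(q)\big)},
\end{equation*}
where $V_2=\partial V/\partial m>0$. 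Two structural facts drive everything. First, $V_2$ is \emph{strictly decreasing in $w$}: a higher type consumes more in both periods, hence has lower marginal utility of wealth, by strict concavity of $u$ (and $V_2$ is decreasing in $\mu$ for the same reason). Second, the level $s(e^j)$ enters location quality additively and so drops out of the ODE; the only channels through which a change in the gap can move $\mu^j$ are the boundary value and the type-in-home profile $w^j(\cdot)$.

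I would next pin down the boundary values and identify the \emph{frozen region}. The subsidized-housing option fixes $M(\und w)=\und M$, hence fixes $\mu$ at the lowest home of the poorest district. Let $q_*=\und q^A+s(e^A)-s(e^B)$ be the \emph{old} interface location, i.e.\ the top of the maximal bottom interval of types occupying the same home in the same district in both equilibria. On this range $w^j(q)=(Q^j)^{-1}\!\circ F$ is common and the boundary is fixed, so Lemma~\ref{lem3.6} gives unchanged money values and hence unchanged PDVs: this is the ``$=$'' conclusion (the whole unaffected district in the dominance cases; the bottom of the poorest district in the interior cases). At the $A$--$B$ interface, Corollary~\ref{cor:indifference} gives $\mu^A=\mu^B$, tying $\mu^A$ at $A$'s lowest home to $\mu^B$ at the matching $B$-home. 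Since raising the gap pushes that matching $B$-home strictly up from $q_*$ to $\hat q_*$, integrating the ODE over $[q_*,\hat q_*]$ shows the interface value of $\mu^A$ strictly \emph{falls}. The four cases of the proposition are precisely the configurations (which district is poorest; whether one district's location quality dominates) that determine which region is frozen.

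The comparison step is the crux and the \textbf{main obstacle}, because the two equilibria solve the ODE with \emph{different} coefficient functions $w^j(\cdot)$ and a \emph{moving} interface, so no single-ODE comparison applies directly. The key monotonicity, which I would read off Proposition~\ref{prop1} and the market-clearing identities for $\Gamma^A,\Gamma^B$, is that raising the gap weakly raises the type in every home of the improving district ($w^A_{\mathrm{new}}\ge w^A_{\mathrm{old}}$, as $\Gamma^A$ falls at low types) and weakly lowers it in $B$ ($w^B_{\mathrm{new}}\le w^B_{\mathrm{old}}$). Setting $h=\mu_{\mathrm{new}}-\mu_{\mathrm{old}}$ for district $B$ and inserting $\pm 1/V_2(w^B_{\mathrm{old}},\mu^B_{\mathrm{new}})$, I split
\begin{equation*}
h'(q)=\underbrace{\Big[\tfrac{1}{V_2(w^B_{\mathrm{old}}(q),\mu^B_{\mathrm{new}}(q))}-\tfrac{1}{V_2(w^B_{\mathrm{new}}(q),\mu^B_{\mathrm{new}}(q))}\Big]}_{\ge 0}\;-\;\lambda(q)\,h(q),
\end{equation*}
where the bracket is nonnegative (strict for $q>q_*$) because $w^B_{\mathrm{new}}\le w^B_{\mathrm{old}}$ and $V_2$ is decreasing in $w$, and $\lambda\ge 0$ is bounded (from $V_2$ decreasing in $\mu$). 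With $h(q_*)=0$, the integrating factor gives $\tfrac{d}{dq}\big(h\,e^{\int\lambda}\big)\ge 0$, hence $h(q)>0$ for $q>q_*$, i.e.\ $\hat m^B_q>m^B_q$. For $A$ the analogous split has a nonpositive source bracket (since $w^A_{\mathrm{new}}\ge w^A_{\mathrm{old}}$) and a strictly negative boundary gap at $A$'s lowest home, so the same estimate yields $\mu^A_{\mathrm{new}}<\mu^A_{\mathrm{old}}$ throughout, i.e.\ $m^A_q>\hat m^A_q$ for all homes in $A$.

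Finally, the $\limsup$ bounds follow by tracking the order of the boundary perturbation. Set $\delta:=s(\hat e^A)-s(e^A)$, which is $\asymp s'(e^A)(\hat e^A-e^A)$ with $s'>0$. In the dominance cases the allocation is frozen, so both PDV curves solve the \emph{same} ODE with boundary values differing by exactly $\delta$ (the $\delta$-shift in $M$ at the interface); thus $\mu_{\mathrm{old}}-\mu_{\mathrm{new}}$ equals, to first order, a strictly positive multiple of $\delta$ uniformly in $q$, and dividing by $\hat e^A-e^A$ and sending $\hat e^A\searrow e^A$ gives the strict signs. In the interior cases the interface boundary gap and the width $\hat q_*-q_*=\delta$ of the newly-frozen strip for $B$ are both of order $\delta$, and the comparison above has coefficients that are locally Lipschitz and bounded on the relevant compact range, so the PDV gap remains bounded below by a positive multiple of $\delta$; this delivers both $\limsup_{\hat e^A\searrow e^A}(\hat m^A_q-m^A_q)/(\hat e^A-e^A)<0$ and the corresponding strictly positive limit for $B$.
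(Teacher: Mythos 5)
Your route is genuinely different from the paper's. The paper never writes a pricing ODE: it quantifies the type shift home-by-home via market clearing, extracting a uniform $\delta>0$ with $w^A(q,e'')-w^A(q,e')\ge\delta(e''-e')$, and then argues by contradiction, defining $\tilde q=\inf\{q:\hat m^A_q\ge m^A_q-\delta(\hat e^A-e^A)\}$ and showing that $\tilde q>\und{q}^A$ would force a violation of the Milgrom--Segal envelope identity for $\hat U$ (its Parts 1.A and 2), with a separate boundary step at $\und{q}^A$ coming from the marginal type's indifference (Part 1.B); district $B$ is dispatched in one sentence as ``symmetric.'' Your constructive alternative --- derive $d\mu^j/dq=-1/V_2\bigl(w^j(q),\mu^j(q)\bigr)$, anchor the solutions at the subsidized-housing value and at the interface indifference, and run a Gronwall comparison in the coefficient $w^j(\cdot)$ --- rests on the same two economic pillars (monotone type shifts from market clearing, and $V_2$ strictly decreasing in $w$ by concavity) and delivers the $\limsup$ bounds transparently as first-order boundary perturbations. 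Two smaller debts: the ODE presupposes absolute continuity and a.e.\ differentiability of $q\mapsto\mu^j(q)$, which the paper deliberately avoids by working with difference quotients of the integral envelope condition (this is fillable, since $V$ is $C^1$ with $V_2$ locally bounded away from zero, so $m^j=V(w,\cdot)^{-1}(M_j)$ inherits absolute continuity from $U_j$, but it must be said); and your $w^j(q)=(Q^j)^{-1}\circ F$ should read $F^{-1}\circ Q^j$.

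The substantive problem is your identification of the frozen region with the statement's equality region. You freeze district $B$'s PDVs up to the \emph{old} interface $q_*=\und{q}^A+s(e^A)-s(e^B)$ and assert this ``is the `=' conclusion.'' It is not: in the first interior case the statement claims $\hat m^B_q=m^B_q$ for all $q\le q^*$, where $q^*=\und{q}^A+s(\hat e^A)-s(\hat e^B)>q_*$. On the strip $(q_*,q^*]$ your own machinery contradicts the statement: there the old equilibrium places the indifferent types $F^{-1}\bigl(Q^A(q-\Delta)+Q^B(q)\bigr)$ in those homes (strictly richer than $F^{-1}(Q^B(q))$, since $q-\Delta>\und{q}^A$), while the new equilibrium places the strictly poorer $F^{-1}(Q^B(q))$; your Gronwall step then yields $\hat m^B_q>m^B_q$ strictly on $(q_*,q^*]$, not equality. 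So either your comparison fails there or the proposition's $q^*$ for district $B$ should be defined with $s(e^A)$ rather than $s(\hat e^A)$. Working through the market-clearing identities, your version appears to be the correct one --- and note that the other interior bullet, where the equality region concerns district $A$, is consistent with your principle, since the falling district's frozen region is indeed cut at the new interface $\hat q^*$. But a blind proof cannot silently substitute $q_*$ for $q^*$: you must either prove equality on $(q_*,q^*]$ (which your argument rules out) or flag explicitly that the stated cutoff is in error. The paper's own proof gives no adjudication, precisely because it never works out the district-$B$ side; that is exactly why this mismatch needed to be surfaced rather than papered over.
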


\section{District actions}\label{sec:game}

The equilibrium characterization says that if a district raises expenditure on schools, holding fixed that of the other district, it will attract a richer subset of the incoming individuals. This conclusion is independent of the tax schedules used to collect the additional revenue. There are many reasons to suspect that districts may prefer richer residents. Richer individuals may make private donations to help fund schools, or spend more at local businesses. I will focus on another consequence of having a richer population of newcomers, which is that it will generally lead to higher home prices, benefiting current residents.

Competition for richer residents will push a district to gather more tax revenue and invest in higher quality schools. Taxes, however, will be inefficiently high. To be precise, assume that the district prefers to attract a richer population of newcomers. Suppose a district's optimal level of revenue is $e$, which results in some population of both old and new residents. Then fixing the population, the district would prefer to lower $e$, even though this comes at the cost of lower quality schools. This inefficiency means that both districts may benefit from a cap on tax revenue.

\subsection{Two period government}\label{sec:district_game}
Let me now formalize the game played between the two districts. The main intuition, that the desire to attract richer residents leads to over-taxation, is easily seen in a simple model in which districts commit to the tax bills associated with each home. 

District governments are in office for two periods, and all districts are on the same election cycle. In what follows I consider the situation both district governments have just entered office, and refer to this period as period 1. At the start of period 1 the government of district $j$ commits to expenditure levels $e^j_1$, $e^j_2$ for each of the next two periods. The government also commits to the tax bills associated with each home. That is, it chooses $\tau_1^j, \tau_2^j$ where $ \tau_1^j: [\und{q}^j, \bar{q}^j] \mapsto \mathbb{R}$ specifies the tax bill for each home in period $i$.\footnote{Expenditure levels for period 2 may need to be determined in advance for planning purposes, or may reflect campaign promises. The fact that the government commits to tax bills for each home, rather than a more general function of home prices for example, is motivated by the fact that home value assessments occur infrequently.} Given the announced expenditure levels and tax schedules, competitive equilibrium payment-price vectors are determined for every house and individuals choose their location. 

I assume that the district government is an agent maximizing a welfare objective, which will be discussed below. Many papers in the literature assume that district decisions are made by majority rule. Reality is often somewhere in between; decisions are made by officials who face periodic elections. Modeling the district's decision through voting would not change the qualitative results, as the incentive to over-tax would still exist for the median voter. I do not model voting for two reasons. First, I want to emphasize that Parteo improvements are available even when the district government is maximizing welfare. Second, majority rule does not fit well with other applications, such as competition between universities. 

District governments and individuals must anticipate the actions of subsequent regimes, as these will impact current home prices. I will assume that the strategies of all regimes satisfy a natural \textit{Markov property}; they depend only on the wealth (income net borrowing and saving) of old residents in the first period. This is the only payoff relevant state variable. Under this assumption, it is natural to assume that individuals' beliefs about the savings strategies of other individuals are also Markov: they depend only on the current period wealth distributions and tax schedules in the two districts. These two assumptions are related. Individuals faced with the same payoff-relevant state expect others to behave in the same way, independent of history. Thus it is rational for each individual to follow a Markov strategy, since the wealth of all individuals in the subsequent period will be the same, again regardless of history, and thus the strategies of the subsequent regimes will be the same given the Markov assumption on governments.  

The objective of the district government can depend on expenditures, tax schedules, prices, and the wealth distributions of old and new residents, provided it is increasing in the welfare of old and new residents. For clarity, I focus on districts that care about school quality and the welfare of old residents, so their objective is given by 
\begin{equation*}
    \Pi^j(e^j_1, e^j_2, \tau^j_1, \tau^j_2) = \theta (s(e^j_1) + s(e^j_2)) + \int u(\tilde{w}^j_q + p^j(q)) dQ^j,
\end{equation*}
where $\tilde{w}^j_q$ is the wealth of old residents (their income net of borrowing and saving that they did in the previous period) and $\theta$ is the weight placed on school quality.\footnote{In the appendix I consider an alternative formulation of the district government objective, for which the two-period commitment assumption is not needed. The strength of the version presented here is that the utilitarian objective is intuitive and familiar.}   It is easy to see that if the district directly cares about the welfare of new residents it will want to attract a richer set. I want to show that the district has an incentive to over-tax even without this benefit. Furthermore, it is natural to think that the decisions of the government are shaped by old residents, if they are made between the departure of the previous generation and the arrival of new residents. 

Consider the savings problem faced by the new entrants in period 2. The income distributions of the new arrivals in the two districts is determined uniquely by $e^A_2$ and $e^B_2$. Furthermore, given the Markov assumptions on individual strategies, their savings behavior will be independent of the actions taken in period 1. Thus the wealth distribution of these residents in the subsequent period, the first of the new regime, will also be independent of period 1 actions. By the Markov assumption on government strategies, the continuation game will therefore be independent of period 1 actions. This observation is summarized in the following lemma.

\begin{lemma}\label{lem5.7}
Fix the period 2 tax schedules (and thus the expenditure levels) of both districts. The actions of all individuals (location and savings choices) and regimes (taxation and expenditure choices) from period 2 on are independent of the actions taken in period 1.
\end{lemma}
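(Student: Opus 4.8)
\section*{Proof proposal}

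The plan is to argue by forward induction on periods $t \geq 2$ that the period-$t$ state — which by the Markov assumption on regimes is the wealth distribution of the old residents entering period $t$ — together with all individual and regime actions from period $t$ on, is determined without reference to any period-1 variable. The two engines of the argument are Proposition \ref{prop1}, which pins down the period-2 location allocation (and hence the income distribution of new arrivals in each district) purely from the period-2 expenditure gap $s(e^A_2) - s(e^B_2)$, and the two Markov assumptions, which make savings choices and regime choices functions of the current state alone.

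First I would establish the base case, period 2. The period-2 regime actions $(e^A_2, e^B_2, \tau^A_2, \tau^B_2)$ are exactly the objects fixed in the hypothesis, so they are trivially independent of period-1 actions. Since $s(e^A_2) - s(e^B_2)$ is thereby fixed, Proposition \ref{prop1} makes the location allocation of the new arrivals — and hence the income distribution of newcomers in each district — independent of period 1. By the Markov assumption on individual strategies, the savings choice of a newcomer of type $w$ depends only on the current-period wealth distributions and tax schedules in the two districts, all of which we have just shown to be fixed; so every newcomer's savings, and therefore the wealth distribution they carry into period 3, is independent of period-1 actions.

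Next I would run the inductive step. Suppose the period-$t$ state is independent of period-1 actions. The Markov assumption on governments makes the period-$t$ regime actions a function of this state alone, hence fixed; Proposition \ref{prop1} then fixes the period-$t$ income distribution of new arrivals; and the Markov assumption on individuals fixes their savings, hence the period-$(t+1)$ state. This closes the induction and yields the claim for all $t \geq 2$.

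The step I expect to be the main obstacle is justifying that a newcomer's savings really is a function of the current state only, since the optimal $b$ in \eqref{eq0.1} depends on the resale price $p_2$, i.e.\ the period-$(t+1)$ buying price, which is determined by the continuation equilibrium rather than by current-period primitives. Here the work is done by the consistency of Markov beliefs: because every individual expects others to save according to a state-contingent rule, the period-$(t+1)$ state — and hence, through the Markov regime strategies and Proposition \ref{prop1}, the period-$(t+1)$ prices — are themselves functions of the current state, so the apparent forward dependence collapses back onto the current state and the induction is well-founded. I would make this explicit by observing that the forward induction simultaneously pins down the entire continuation price path, so that the savings problem each newcomer solves has data that are fixed once the current state is fixed.
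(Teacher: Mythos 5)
Your proposal is correct and takes essentially the same route as the paper: the paper's own proof is the informal argument immediately preceding the lemma, which likewise uses Proposition~\ref{prop1} to pin down the period-2 location allocation from the fixed period-2 expenditures and then invokes the two Markov assumptions to propagate independence of savings, states, and regime actions forward. Your explicit induction and your handling of the forward dependence of savings on continuation prices simply formalize what the paper argues in prose (including its remark that Markov beliefs make the continuation a function of the current state alone).
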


Lemma \ref{lem5.7} makes it easy to study the problem of the government choosing it's period 1 tax regime. It implies that home prices in period 2 will depend only on taxes and expenditure in period 2. Thus the districts' problem separates across periods. The choice of the period 2 variables $e_2,\tau_2$ is more complicated. These affect a district's objective through the period 1 prices, as given by equation (\ref{eq5.1}) below. However the affect that $e_2$ has on $p_2$, and thus on $p_1$, depends on the reaction of future governments to the future characteristics of district populations. Fortunately, since the districts' problem separates across periods, we can focus simply on the choice of $e_1$ and $\tau_1$. 

Consider the problem faced by the government of district $A$. I fix the strategy of district $B$ and suppress dependence on this in the notation that follows. Given the choice of $e^A_1$, the set of residents moving to district $A$ and the payment-price vectors for each house are uniquely determined. The population of new residents can be described by the lowest type that is indifferent between districts $A$ and $B$, denoted by $w_*(e^A_1)$. Let $m^A(q|e^A_1)$ be the PDV of the payment-price vector for house $q$. Then $p^A(q)$ is determined by
\begin{equation}\label{eq5.1}
    p_1^A(q|e_1,\tau_1) = - \tau_1(q) + \dfrac{1}{1+r}p_2^A(q|e_2, \tau_2) - m^A(q|e_1).
\end{equation}
By Lemma (\ref{lem5.7}), $p_2^A(q|e_2,\tau_2)$ is independent of the period 1 tax schedules. Fixing $e^A_2, \tau^A_2$ and the actions of the district $B$ government, the objective of district $A$ is 
\begin{equation*}
    \Pi^A(e_1, \tau_1) = \theta (s(e_1) + s(e^A_2)) + \int u(\tilde{w}^A_q + p^A(q|e_1,\tau_1) ) dQ^A,
\end{equation*}
where $p^A(q| e_1, \tau_1)$ is given by (\ref{eq5.1}). Only the expenditure levels of a given district affect the payoffs of the other; the specific tax schedules used to raise revenue have no cross-district impact, conditional on expenditure levels. The optimal tax schedule with which a district can raise its specified revenue is a function of the expenditure levels chosen by each of the two districts. Thus we can write each district's payoffs simply as a function of the profile of expenditure levels. Continuity of $u$ and $s$ guarantees that these payoffs are continuous. An equilibrium in district expenditure levels is an equilibrium in this game. Existence of such an equilibrium follows from standard results when it is possible to bound the space of feasible expenditure levels.

\begin{lemma}\label{lem:existence}
Assume $u$ and $s$ are continuously differentiable, and $\lim_{x \rightarrow \infty} s'(x) = 0$, $\lim_{x \rightarrow 0} u'(x) = \infty$. Then an equilibrium in district expenditure levels exists.
\end{lemma}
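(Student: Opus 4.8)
The plan is to compactify the strategy space and then invoke a standard existence theorem for continuous games. By the discussion preceding the lemma, each district's payoff can be written as a continuous function of the expenditure profile $(e^A,e^B)$ alone, with the revenue-raising tax schedule optimized out and, by Lemma \ref{lem5.7}, the period-2 variables held fixed and independent of the period-1 choices. The only obstacle to a fixed-point argument is that the natural action space $[0,\infty)$ is not compact, so the substance of the proof is to exhibit a uniform bound $\bar{e}$ such that no district ever optimally spends more than $\bar{e}$, regardless of its rival's choice.

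To produce such a bound I would first make the reduced payoff explicit. Fixing $e^A$ and writing $c_q = a_q - \tau_1(q)$ for the consumption of the old resident in home $q$, where $a_q = \tilde{w}^A_q + \frac{1}{1+r}p_2^A(q) - m^A(q\,|\,e^A)$, the district chooses $\tau_1$ to maximise $\int u(c_q)\,dQ^A$ subject to the budget constraint $\int \tau_1\,dQ^A = e^A$. Strict concavity of $u$ together with this single linear constraint implies the optimum equalises consumption across homes, so the reduced payoff equals $\theta(s(e^A)+s(e^A_2)) + u(K(e^A)-e^A)$, where $K(e^A) = \int a_q\,dQ^A = C_0 - \int m^A(q\,|\,e^A)\,dQ^A$ and $C_0$ collects the terms independent of $e^A$.

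The main obstacle is to bound this object from above uniformly. The key observation is that the PDVs cannot be arbitrarily negative in any housing-market equilibrium. The type $w^A_q \le \bar{w}$ occupying home $q$ saves optimally so that consumption is positive in both periods; since the present value of that type's consumption stream equals $w^A_q\frac{2+r}{1+r} + m^A(q\,|\,e^A)$, positivity forces $-m^A(q\,|\,e^A) \le \bar{w}\,\frac{2+r}{1+r}$ for every $q$. Hence $\int m^A(q\,|\,e^A)\,dQ^A$ is bounded below by a constant, so $K(e^A) \le \bar{K}$ for some $\bar{K}$ independent of $e^A$ and of $e^B$, and the equalised consumption $K(e^A)-e^A \le \bar{K}-e^A$ tends to $-\infty$. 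Once $e^A > \bar{K}$ it becomes non-positive, so keeping all old residents' consumption positive is infeasible and the payoff is $-\infty$; with the benefit term flattening ($s'(x)\to 0$) while the marginal cost of the implied taxation explodes ($u'(x)\to\infty$ as consumption approaches $0$), the payoff is eventually strictly decreasing, so every best response lies in the compact set $[0,\bar{K}]$.

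It then remains to restrict both districts to $[0,\bar{K}]$ and apply a standard result: with compact action spaces and continuous payoffs, Glicksberg's theorem delivers a Nash equilibrium. I expect the uniform bound just described to be the crux, with the existence step routine once compactness is in hand. If a pure-strategy equilibrium is wanted, one additionally needs quasiconcavity of the reduced payoff in own expenditure — equivalently, convexity of $e^A \mapsto \int m^A(q\,|\,e^A)\,dQ^A$ — for which the derivative bounds in Proposition \ref{prop2} are the natural input; verifying this is the one point where more than boilerplate is required.
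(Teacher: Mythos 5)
Your proposal is correct and follows essentially the same route as the paper, which offers no formal proof beyond the in-text remark that payoffs are continuous in the expenditure profile (with tax schedules optimized out and period-2 variables fixed via Lemma \ref{lem5.7}) and that ``existence of such an equilibrium follows from standard results when it is possible to bound the space of feasible expenditure levels.'' Your consumption-positivity argument bounding the PDVs, and hence the feasible expenditure levels, supplies exactly the compactification step the paper leaves implicit, after which the appeal to a standard fixed-point existence theorem coincides with the paper's intended argument.
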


The important implication of Proposition \ref{prop2} is that districts will over-tax relative to what would be optimal given the equilibrium wealth distribution of new arrivals. Formally, let $\bar{e} = e^A_1$ be the equilibrium choice of expenditure by district $A$. Fix the gap between the expenditure levels of $A$ and $B$ at  $e^A_1 - e^B_1$, and consider the problem of district $A$ choosing its period 1 actions given its period 2 actions and the expenditure gap with $B$. This amounts to modifying the objective from $\Pi^A$ to
\begin{equation*}
    \tilde{\Pi}^A(e_1, \tau_1| \bar{e}) = \theta (s(e_1) + s(e^A_2)) + \int u(\tilde{w}^A_q + \tilde{p}^A(q|e_1,\tau_1,\bar{e} )) dQ^A
\end{equation*}
where 
\begin{equation*}
    \tilde{p}^A(q|e_1,\tau_1, e^A_1) = - \tau_1(q) + \dfrac{1}{1+r}p_2^A(q|e^A_2, \tau^A_2) - m^A(q|\bar{e}). 
\end{equation*}
The difference between $\tilde{p}^A$ and $p^A$ is that $m^A$ is fixed in the latter but varies in response to $e^A$ in the former. Let $\tilde{e}^A_1(\bar{e})$ be the solution to this problem. Given a choice of $e_1$, let $\tau^A_1(e_1), \tilde{\tau}^A_1(e_1|\bar{e})$ be the optimal tax schedules for collecting $e_1$ when the objectives are $\Pi^A$ and $\tilde{\Pi}^A(\cdot,\cdot|\bar{e})$ respectively. When the district takes into account the effect that higher school quality has on the PVDs of its homes it will have an additional incentive to increase expenditure, above what it would if the PVDs were fixed at equilibrium levels. Put differently, if they could be guaranteed their equilibrium set of new arrivals, districts would prefer to have lower expenditure.

\begin{proposition}\label{prop5.8}
$ e^A_1 \geq \tilde{e}^A_1(e_1^A)$, with equality if and only if $s(e^A_1) + \bar{q}^A \leq s(e^B) + \und{q}^B$. The same conclusion holds switching the labels $A$ and $B$. 
\end{proposition}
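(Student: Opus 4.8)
The plan is to collapse each district's two-instrument problem (over the expenditure level $e_1$ and the whole tax schedule $\tau_1$) into a one-dimensional problem in $e_1$ alone, and then compare the real reduced objective $\Pi^A$ with the modified one $\tilde\Pi^A(\cdot|\bar e)$ by a revealed-preference argument anchored at the equilibrium value $\bar e = e^A_1$, sharpened to strictness by a marginal argument. The economic content is that the only difference between the two problems is whether the money values $m^A$ respond to $e_1$, and Proposition \ref{prop2} signs that response.

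First I would solve the inner problem over $\tau_1$ with $e_1$ fixed. Because the location allocation and the PDVs $m^A(q|e_1)$ depend only on the expenditure gap and not on the shape of $\tau_1$ (the separation around Proposition \ref{prop1} and Lemma \ref{lem5.7}), and because taxes are fully capitalized, $p_1^A(q) = -\tau_1(q) + \frac{1}{1+r}p_2^A(q) - m^A(q|e_1)$, the old owner of home $q$ consumes $c_q = \tilde w^A_q + p_1^A(q)$, which is affine and strictly decreasing in $\tau_1(q)$. The district maximizes $\int u(c_q)\,dQ^A$ subject to $\int \tau_1\,dQ^A = e_1$; by strict concavity of $u$ and the linear budget constraint the optimum equalizes $c_q$ across $q$ at a common level $c(e_1)$, and imposing the budget constraint gives $c(e_1) = K - \bar M^A(e_1) - e_1$, where $\bar M^A(e_1) = \int m^A(q|e_1)\,dQ^A$ and $K = \int \big[\tilde w^A_q + \frac{1}{1+r}p_2^A(q)\big]\,dQ^A$ is independent of $e_1$ (period-2 prices and old-resident wealth do not depend on period-1 actions, by Lemma \ref{lem5.7}). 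The reduced objectives are $\Pi^A(e_1) = u(c(e_1)) + \theta s(e_1) + \text{const}$ and, since the modified problem freezes $m^A$ at its $\bar e$ level, $\tilde\Pi^A(e_1|\bar e) = u(\tilde c(e_1)) + \theta s(e_1) + \text{const}$ with $\tilde c(e_1) = K - \bar M^A(\bar e) - e_1$.

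Next I would use two facts: $c(\bar e) = \tilde c(\bar e)$, so the two objectives agree at $\bar e$; and, by Proposition \ref{prop2}, $\bar M^A(e_1)$ is non-increasing in $e_1$, constant precisely on the dominance region $s(e_1) + \bar q^A \leq s(e^B) + \und{q}^B$ and strictly decreasing otherwise. For the weak inequality, for $e_1 > \bar e$ monotonicity gives $\bar M^A(e_1) \leq \bar M^A(\bar e)$, hence $c(e_1) \geq \tilde c(e_1)$ and $\Pi^A(e_1) \geq \tilde\Pi^A(e_1)$; combined with optimality of $\bar e$ for $\Pi^A$ this yields $\tilde\Pi^A(e_1) \leq \Pi^A(e_1) \leq \Pi^A(\bar e) = \tilde\Pi^A(\bar e)$, so no point right of $\bar e$ beats $\bar e$ in the modified problem and $\tilde e^A_1(\bar e) \leq \bar e$. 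In the dominance case, $\bar M^A$ is also constant for $e_1 \leq \bar e$, so $\Pi^A = \tilde\Pi^A$ there; since $\bar e$ maximizes $\Pi^A$ over $e_1 \leq \bar e$ it maximizes $\tilde\Pi^A$ over that range too, which together with the right-side bound makes $\bar e$ a global maximizer of the (strictly concave) $\tilde\Pi^A$, giving equality.

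Finally, for strictness in the no-dominance case I would argue by contradiction. Note $\tilde\Pi^A$ is smooth and strictly concave (it is $u$ of an affine function of $e_1$ plus $\theta s$), so if $\tilde e^A_1(\bar e) = \bar e$ its first-order condition forces $\theta s'(\bar e) = u'(c(\bar e))$. Substituting this into the right-hand difference quotient of $\Pi^A$ at $\bar e$ and using the strict $\limsup$ bound of Proposition \ref{prop2}, which makes $-[\bar M^A(e_1) - \bar M^A(\bar e)]/(e_1 - \bar e)$ bounded away from zero as $e_1 \searrow \bar e$, shows the upper right derivative of $\Pi^A$ at $\bar e$ is strictly positive, contradicting optimality of $\bar e$; hence $\tilde e^A_1(\bar e) < \bar e$. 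The argument for district $B$ is symmetric. I expect the main obstacle to be the reduction step, namely establishing rigorously that the optimal tax equalizes old residents' consumption and that $K$ is genuinely independent of $e_1$, which leans on the separation of the allocation from the tax schedule and on Lemma \ref{lem5.7}; a secondary technical point is interchanging the $\limsup$ with the integral over $Q^A$ when passing from the pointwise bound in Proposition \ref{prop2} to the bound on $\bar M^A$ needed for the strict conclusion.
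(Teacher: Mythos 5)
Your proof is correct and follows the same skeleton as the paper's: both anchor the comparison at $\bar e=e^A_1$, where $\Pi^A$ and $\tilde\Pi^A(\cdot|e^A_1)$ coincide, use Proposition \ref{prop2} to show $\Pi^A$ dominates $\tilde\Pi^A$ to the right of $\bar e$ (an increasing-differences property), conclude $\tilde e^A_1(e^A_1)\le e^A_1$ from optimality of $e^A_1$ for $\Pi^A$, and obtain strictness from the $\limsup$ bound in Proposition \ref{prop2}. The difference is in execution. The paper never solves the inner problem over $\tau_1$: it compares $\Pi(e',\tau^A_1(e'))$ with $\tilde\Pi(e',\tilde\tau^A_1(e'))$ by revealed preference, since any schedule raising $e'$ yields pointwise higher prices under $m^A(\cdot|e')$ than under the frozen $m^A(\cdot|e^A_1)$, and it invokes the envelope theorem for differentiability of $\tilde\Pi$. (Incidentally, the paper's proof writes this key inequality with the sign reversed, $m^A(q|e')\ge m^A(q|e^A_1)$; your direction is the correct one.) You instead solve the inner problem in closed form, using strict concavity of $u$ and the linear budget constraint to equalize old residents' consumption, which yields $c(e_1)=K-e_1-\bar M^A(e_1)$. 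This buys transparent strict concavity, an explicit first-order condition for the strictness step, and it surfaces a technical point both proofs need: passing the pointwise $\limsup$ of Proposition \ref{prop2} through the integral over $Q^A$. That gap closes either by reverse Fatou (the difference quotients are bounded above by $0$ by the weak part of Proposition \ref{prop2}) or by noting that the paper's proof of Proposition \ref{prop2} in fact establishes a uniform bound $\hat m^A_q< m^A_q-\delta(\hat e^A-e^A)$. The costs of your route: consumption equalization relies on $\tau_1$ being an unrestricted function of $q$, and your first-order condition assumes $\bar e$ is interior; the paper's revealed-preference comparison needs neither, though its own strictness step is equally terse about interiority.
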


The incentive for districts to engage in excessive expenditure, in the sense of Proposition \ref{prop5.8}, comes from the increased property values enjoyed by current residents. This depends on the fact that residents own their homes. Renters would suffer from increased property values. Renters can be accommodated in the model, so long as some districts continue to have a preference for high-income populations. In practical terms this is a reasonable assumption. Wealthy districts with a high percentage of home owners will likely be more concerned with supporting property values than on maintaining affordable rental rates. I will return to the role of renters when discussing policy recommendations when there are more than two districts.

Abusing notation, write $\tilde{\Pi}(e|e^A_1) \equiv \tilde{\Pi}(e, \tilde{\tau}^A_1(e|e^A_1)|e^A_1)$. As noted in the proof of Proposition \ref{prop5.8}, $\tilde{\Pi}(\cdot|\bar{e})$ is differentiable in $e$. In fact, concavity of $u$ implies that $\tilde{\Pi}(\cdot|\bar{e})$ is concave as well. Then Proposition \ref{prop5.8} implies that both districts would benefit by reducing their expenditure levels. An \textit{expenditure cap} is an upper bound on the level of expenditure a district can choose. 

\begin{proposition}
Assume $e^j_1 > \tilde{e}^j_1$ for $j = A,B$. For any period 1 expenditure caps $\bar{e}^A_1 \in [\tilde{e}^A_1(e^A_1), e^A_1)$ and $\bar{e}^B_1 \in [\tilde{e}^B_1(e^B_1), e^B_1)$ such that $\bar{e}^A_1 - \bar{e}^B_1 = e^A_1 - e^B_1$, both districts are strictly better off, and the welfare of each is strictly decreasing in the cap level. 
\end{proposition}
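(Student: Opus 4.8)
The plan is to reduce the statement, via the gap-preserving caps, to a comparison of each district's fixed-allocation objective $\tilde{\Pi}^j(\cdot\,|\,e^j_1)$ evaluated at two points, and then invoke concavity together with Proposition \ref{prop5.8}. The crucial first observation is that the constraint $\bar{e}^A_1 - \bar{e}^B_1 = e^A_1 - e^B_1$ preserves the equilibrium expenditure gap. By Proposition \ref{prop1} the location allocation depends only on this gap, and by Lemma \ref{lem3.5} so does the money-value function; hence if both districts spend exactly at their caps, every new arrival moves to the same location as in the uncapped equilibrium and each district faces the same PDV function $m^j(\cdot)$. Consequently the only thing that changes relative to the uncapped equilibrium is each district's own expenditure and the tax schedule used to fund it, with $m^j$ frozen at its equilibrium value. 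This is precisely the situation captured by the fixed-allocation objective, so district $j$'s payoff at the capped profile equals $\tilde{\Pi}^j(\bar{e}^j_1\,|\,e^j_1)$.

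Granting that both caps bind (addressed below), the welfare conclusions follow quickly. At $e = e^j_1$ the fixed allocation coincides with the actual equilibrium allocation and $\tilde{p}^j = p^j$, so $\tilde{\Pi}^j(e^j_1\,|\,e^j_1)$ equals district $j$'s uncapped equilibrium payoff. By hypothesis $\tilde{e}^j_1(e^j_1) < e^j_1$, and strict concavity of $u$ makes $\tilde{\Pi}^j(\cdot\,|\,e^j_1)$ strictly concave with maximizer $\tilde{e}^j_1(e^j_1)$, hence strictly decreasing on $(\tilde{e}^j_1(e^j_1), e^j_1]$. Since $\bar{e}^j_1 \in [\tilde{e}^j_1(e^j_1), e^j_1)$ lies strictly left of $e^j_1$ and at or right of the maximizer, we get $\tilde{\Pi}^j(\bar{e}^j_1\,|\,e^j_1) > \tilde{\Pi}^j(e^j_1\,|\,e^j_1)$, i.e. district $j$ is strictly better off. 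The same monotonicity, applied as the common cap level rises (with $\bar{e}^B_1 = \bar{e}^A_1 - (e^A_1 - e^B_1)$ moving in lockstep to preserve the gap), shows each district's welfare is strictly decreasing in the cap level.

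The main obstacle is to verify that, under these caps, spending at the cap is actually an equilibrium of the game between districts — otherwise the allocation need not be preserved and the reduction above fails. I would show that each district's best response to the other spending at its cap is to spend at its own cap. Writing district $A$'s full payoff as $\Psi^A(e^A, e^A - e^B)$, where the first argument captures own school quality and revenue and the second captures the allocation through $m^A$, the marginal incentive to raise $e^A$ at the capped profile splits into the fixed-allocation term $\partial_e \tilde{\Pi}^A(\bar{e}^A_1\,|\,e^A_1) \le 0$ and the strictly positive price externality $\partial_g \Psi^A$ from attracting richer residents. The latter is strictly positive exactly under the no-dominance condition of Proposition \ref{prop5.8} (equivalently $e^A_1 > \tilde{e}^A_1$), which activates the strict ``Moreover'' inequalities of Proposition \ref{prop2}. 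Because the two terms carry opposite signs, a naive sign comparison is inconclusive; the delicate point is that the price externality must still outweigh the fixed-allocation loss at the lower level $\bar{e}^A_1$.

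I expect to close this gap by arguing that the best-response map has slope in $[0,1]$: since the cross-district interaction operates only through the gap, a district optimally absorbs a cut in its rival's expenditure by cutting its own by no more than the same amount, so $A$'s best response to $\bar{e}^B_1 < e^B_1$ is at least $e^A_1 - (e^B_1 - \bar{e}^B_1) = \bar{e}^A_1$. Combined with quasi-concavity of the full objective in own expenditure, this forces the constrained optimum to the cap, so both caps bind and the allocation is preserved. Establishing the slope bound — monotone comparative statics on $\Psi^A$ in $(e^A, e^B)$ using the curvature of $u$ and the Proposition \ref{prop2} bounds on how $m^A$ responds to the gap — is the technical heart of the argument.
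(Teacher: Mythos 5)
Your first two paragraphs are precisely the paper's own argument. The paper proves this proposition (see the text immediately preceding it and the proof of its $N$-district analogue, Proposition \ref{prop5}) by exactly the reduction you describe: gap-preserving caps leave the location allocation (Proposition \ref{prop1}) and the money values (Lemma \ref{lem3.5}) unchanged, so each district's payoff when both spend at their caps is $\tilde{\Pi}^j(\bar{e}^j_1\,|\,e^j_1)$; concavity of $\tilde{\Pi}^j(\cdot\,|\,e^j_1)$ together with Proposition \ref{prop5.8} (the maximizer $\tilde{e}^j_1$ lies strictly below $e^j_1$ under the stated hypothesis) then makes $\tilde{\Pi}^j$ strictly decreasing on $[\tilde{e}^j_1, e^j_1]$, which delivers both the strict improvement and the monotonicity in the cap level.

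The divergence is your third part, and that is where your proposal, as written, is incomplete. You are right that the paper's definition of a cap (an upper bound on what a district \emph{can choose}) invites the question of whether spending at the caps is actually an equilibrium of the capped game; the paper never addresses this, implicitly identifying ``impose gap-preserving caps'' with ``both districts spend at their caps.'' But you do not close the question either: the slope bound on the best response is announced as a plan, not proved, and it is not a routine monotone-comparative-statics exercise. Write district $A$'s payoff as $\Psi(e,g)$ in own expenditure $e$ and the gap $g$. Assuming a smooth interior best response $\beta$, implicit differentiation of the first-order condition $\Psi_1+\Psi_2=0$ gives $\beta' = (\Psi_{12}+\Psi_{22})/(\Psi_{11}+2\Psi_{12}+\Psi_{22})$, and (using the second-order condition) the property you need, $\beta'\le 1$, holds exactly when $\Psi_{11}+\Psi_{12}\le 0$. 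Concavity gives $\Psi_{11}\le 0$, but the very force that produces over-taxation in Proposition \ref{prop5.8} --- a larger gap lowers $m^A$ and raises home prices (Proposition \ref{prop2}), enriching old residents and so lowering the utility cost of raising revenue --- gives $\Psi_{12}\ge 0$: own expenditure and the gap are \emph{complements} in the district objective. The condition you need therefore pits two terms of opposite sign against each other and does not follow from concavity of $u$ alone; establishing it would require additional assumptions or a genuinely different argument. In short: under the paper's implicit reading (the capped profile itself is the post-policy outcome), your first two paragraphs already constitute the complete proof and the third is unnecessary; under your stricter equilibrium reading, the proof remains open precisely at the step you call the technical heart.
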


\subsection{More than two districts}\label{sec:multiple}

With two districts expenditure caps benefit both districts, but also lower school quality everywhere. However with more than two districts the policy maker can achieve a surprising outcome using expenditure caps: all districts can benefit strictly from expenditure caps, and some can have strict improvements in school quality. The intuition is simple. It is still the case that when one district increases its expenditure it attracts a wealthier population of new arrivals. Moreover a version of Proposition \ref{prop2} continues to hold; this change in the incoming population leads to worse payment-price vectors, which translate to higher home prices in equilibrium. Thus districts will over tax. The novel feature with more than two districts is that each district will over tax relative to what it would if its expenditure gap with one of the remaining districts was fixed, holding fixed the expenditure level of the other district. Thus expenditure caps below the equilibrium level for only two districts improve both of the capped districts' welfare. The remaining district will have a richer population, and can appropriate some of the surplus by increasing expenditure. 

With more than two districts the location allocation must still be monotone, for the same reasons as before, and will take the form outlined in Proposition \ref{prop1}. The key extension is a generalization of Proposition \ref{prop2} to the case of more than two districts. That is, increasing expenditure in district $j$ while holding that of the other districts fixed leads to lower PVDs in district $j$ and higher PVDs in other districts. 

\begin{proposition}\label{prop4}
Consider expenditure profiles $\{e^j \}_{j=1}^N$ and $\{\hat{e}^j \}_{j=1}^N$ with $\hat{e}^1 > e^1$ and $\hat{e}^j = e^j$ for $j \neq 1$. Let $\hat{m}^j,m^j$ be the associated lump sum equivalents to the payment-price functions. Then
\begin{itemize}
    \item If $\und{q}^1+ s(e^1) \geq \bar{q}^j + s(e^j)$ for all $j \neq 1$ then $\hat{m}^j_q = m^j_q$ for all $q$ and $j \neq 1$, and $ m^1_q > \hat{m}^1_q$ for all $q$. 
    \item If $\und{q}^j+ s(e^j) \geq \bar{q}^1 + s(e^1)$ for all $j \neq 1$ then $\hat{m}^j_q > m^j_q$ for all $q$ and $j \neq 1$, and $ m^1_q = \hat{m}^1_q$ for all $q$. 
\end{itemize}
If neither of these conditions holds then 
\begin{itemize} 
    \item If $\und{q}^1 + s(\hat{e}^1) \geq \und{q}^j + s(\hat{e}^j)$ let $q^*$ satisfy $\und{q}^1 + s(\hat{e}^1) = q^* + s(\hat{e}^j)$. Then $m_q^1 > \hat{m}_q^1$ for all homes in district $1$. In district $j$ $\hat{m}_q^j > m_q^j$ for $q > q^*$ and $\hat{m}_q^j = m_q^j$ for $q \leq q^*$.
    \item If $\und{q}^1 + s(\hat{e}^1) < \und{q}^j + s(\hat{e}^j)$ let $\hat{q}^*$ satisfy $\hat{q}^* + s(\hat{e}^1) = \und{q}^j + s(\hat{e}^j)$. Then $m_q^1 > \hat{m}_q^1$ for all $q > q^*$ and $m_q^1 = \hat{m}_q^1$ for all $q \leq q^*$. In district $j$ $\hat{m}_q^j > m_q^j$ for all $q$.
\end{itemize}
Moreover, whenever $m^1_q > \hat{m}^1_q$ holds, 
\begin{equation*}
 \limsup_{\hat{e}^1 \searrow e^1} \dfrac{\hat{m}^1_q - m^1_q}{\hat{e}^1 - e^1 } < 0
\end{equation*}
and whenever $\hat{m}^j_q > m^j_q$, 
\begin{equation*}
 \limsup_{\hat{e}^1 \searrow e^1} \dfrac{\hat{m}^j_q - m^j_q}{\hat{e}^1 - e^1 } > 0.
\end{equation*}
\end{proposition}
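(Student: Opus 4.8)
The plan is to reduce the $N$-district comparative static to the two-district statement already proved in Proposition \ref{prop2}. The enabling observation is that, by Lemma \ref{lem0.1} and Proposition \ref{prop1}, the entire housing-market equilibrium — both the monotone location allocation and, through the envelope formula (\ref{eq0.8}) anchored at the subsidized-housing value $\und M$, the schedule of PDVs — depends on the districts only through the distribution of \emph{location qualities} $q + s(e^j)$ induced across all homes, together with the wealth distribution $F$. Raising $e^1$ while holding $e^j$ fixed for $j \neq 1$ shifts district $1$'s block of location qualities rigidly upward by $s(\hat{e}^1) - s(e^1)$ and leaves the location-quality supply contributed by districts $2, \dots, N$ untouched.

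First I would collapse districts $2, \dots, N$ into a single composite district $R$ whose home-quality distribution, measured directly in location quality, is $H(\lambda) = \sum_{j \neq 1} Q^j(\lambda - s(e^j))$ and whose school quality is normalized to $0$. By the preceding observation, the equilibrium location allocation and PDV schedule of the $N$-district economy, restricted to the pair (district $1$, the aggregate of the rest), coincide with those of the two-district economy $\{1, R\}$, for both profiles $\mathbf{e}$ and $\hat{\mathbf{e}}$. Proposition \ref{prop2}, applied to $\{1, R\}$ with district $1$ playing the role of the expanding district, then delivers all four cases and the cutoff $q^*$ verbatim: when district $1$'s location-quality range lies entirely above (resp. below) $R$'s, the rest is unchanged and district $1$'s PDVs fall (resp. district $1$ is unchanged and the rest rises); and in the overlap case $R$'s PDVs are unchanged below the cutoff location quality $\und q^1 + s(\hat{e}^1)$ and strictly higher above it, while every home in district $1$ sees a strictly lower PDV.

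It then remains to disaggregate. Since each district $j \neq 1$ retains its own $(Q^j, e^j)$, the location quality $q + s(e^j)$ of each of its homes is identical under $\mathbf{e}$ and $\hat{\mathbf{e}}$; hence the composite statement ``the PDV at a given location quality is unchanged / strictly higher'' transfers home by home to each $j \neq 1$, giving $\hat{m}^j_q = m^j_q$ for $q + s(e^j) \leq \und q^1 + s(\hat{e}^1)$ (i.e. $q \leq q^*$) and $\hat{m}^j_q > m^j_q$ otherwise, exactly as stated. District $1$'s homes, by contrast, change location quality, but Proposition \ref{prop2} already reports the comparison for the \emph{same} home across equilibria, yielding $m^1_q > \hat{m}^1_q$ directly. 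The two $\limsup$ derivative bounds are inherited in the same way from the corresponding first-order bounds in Proposition \ref{prop2}.

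The main obstacle is justifying the reduction rigorously, because the composite supply $H$ need not satisfy the regularity maintained for individual districts: its density can vanish on gaps between the shifted supports $[\und q^j + s(e^j), \bar q^j + s(e^j)]$ or jump where two of them abut, whereas Proposition \ref{prop2} is stated for a district with strictly positive density on a connected support. I would discharge this either by verifying that the proof of Proposition \ref{prop2} uses only (i) monotone matching (Lemma \ref{lem0.1}), (ii) the envelope characterization (\ref{eq0.8}), and (iii) the fixed bottom value $\und M$ — none of which needs connectedness or a positive density — so that Proposition \ref{prop2} extends to an arbitrary location-quality supply; or, failing that, by arguing directly: express the PDV as a function of location quality through the envelope relation (so $dm/d\lambda = -1/V_m$ along the allocation), note that raising $e^1$ lowers the cumulative supply $G_{\mathbf{e}}(\lambda) = \sum_j Q^j(\lambda - s(e^j))$ pointwise and hence makes the occupant of each tier weakly poorer, and sign the change in the PDV schedule by an ODE comparison exploiting concavity of $u$. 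The genuinely delicate point in either route is district $1$'s \emph{own} homes, where a rise in location quality (which by itself lowers the PDV) competes with the arrival of a poorer clientele (which raises it); the net strict decrease is pinned down by anchoring district $1$'s schedule to the value of the best location available outside district $1$, which is unaffected by $e^1$, and by using Lemma \ref{lem3.6} together with the strictly positive densities of $F$ and $Q^1$ to secure the strict first-order rate.
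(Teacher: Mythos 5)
Your reduction is sound and, with the fallback you describe actually carried out, proves the proposition; but it is a genuinely different route from the paper's. The paper disposes of Proposition \ref{prop4} in one sentence --- the proof is ``essentially identical'' to that of Proposition \ref{prop2} --- i.e.\ it reruns the two-district argument with the $N$-district market-clearing identity $\sum_j Q^j\bigl(q + s(e^1) - s(e^j)\bigr) = F(w)$. You instead aggregate districts $2,\dots,N$ into a composite district $R$ measured in location quality, invoke Proposition \ref{prop2} as a black box for the pair $\{1,R\}$, and then disaggregate; the disaggregation step is exactly right, since homes in districts $j \neq 1$ keep their location quality, so the composite cutoff $\und{q}^1 + s(\hat{e}^1)$ translates home-by-home into the per-district cutoffs $q^*$ in the statement. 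What your approach buys is modularity; what it costs is precisely the regularity issue you flag, and here one caveat matters: your route (i) is not accurate as stated. The paper's proof of Proposition \ref{prop2} \emph{does} use strictly positive densities on connected supports --- the bound $w^A(q,e'') - w^A(q,e') \geq \delta(e''-e')$ comes from implicit differentiation of the market-clearing identity and needs the non-expanding district's density to be bounded away from zero at the matched quality, and differentiability of $q^A$ with strictly positive derivative is invoked to pass to (\ref{eq4.15}). So Proposition \ref{prop2} does not extend ``for free'' to an arbitrary location-quality supply; where the composite density vanishes (gaps between the shifted supports) the $\delta$ bound fails and your fallback --- anchoring district 1's lowest home to the best location available outside district 1, as in Part 1.B of the paper's proof, and propagating through the envelope formula --- is the argument that actually has to be made. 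It is worth noting that this obstacle is not an artifact of your reduction: the paper's own one-line rerun faces the identical issue, since with $N$ districts the other districts' location-quality supports can leave gaps (indeed, even in the two-district proof the claimed uniform $\delta$ bound already fails for district-$A$ homes whose location quality exceeds $\bar{q}^B + s(e^B)$). In that sense your proposal, by naming the delicate point and supplying the anchoring argument, is more explicit about the real technical content than the paper's proof.
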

The proof is essentially identical to that of Proposition \ref{prop2}.

Consider the game described in Section \ref{sec:district_game}, now with $N> 2$ districts. The relevant generalization of Proposition \ref{prop5.8} is the following. For a given district $k$ consider it's choice of expenditure level when the expenditure gap between $k$ and some set of districts is held fixed at the equilibrium level, holding fixed the expenditure levels of other districts not in this set. In this problem district $k$ will choose a lower level of expenditure than it would when the expenditure levels of all other districts are held fixed. 

To formalize this, fix a set of districts $Z \subset \{2, \dots, n \}$. Let $e^j_1$ be the equilibrium first period expenditure of district $j$. Consider the problem of district $1$ choosing it's expenditure level and tax schedule when the expenditure gap between it and all districts in $Z$ is held fixed, fixing the expenditure level of districts not in $Z$ at the equilibrium level. I refer to this as the ``fixed gap'' problem. Let $\tilde{\Pi}(e_1, \tau_1 )$ be it's objective in this case, and as before define $\tilde{\tau}(e_1)$ be the optimal period 1 tax schedule for collecting revenue $e_1$ and let $\tilde{e}^1_1$ be the optimal expenditure level. 

\begin{proposition}\label{lem6.9}
$ e^1_1 \geq \tilde{e}^1_1$, with equality if and only if $s(e^1_1) + \bar{q}^1 \leq s(e^j) + \und{q}^j$ for all $j \in Z$.
\end{proposition}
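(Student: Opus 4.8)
The plan is to reproduce the proof of Proposition \ref{prop5.8} almost verbatim, with the block $Z$ playing the role of the single competitor $B$ and Proposition \ref{prop4} replacing Proposition \ref{prop2}. First I would reduce the claim to a comparison of marginal values. Fix the period-2 variables and, for each revenue target, the optimal period-1 tax schedule, so that both $\Pi$ and $\tilde{\Pi}$ can be written as functions of $e_1$ alone; as noted in the text, concavity of $u$ makes $\tilde{\Pi}(\cdot)$ concave. Since $e^1_1$ is the equilibrium choice it satisfies the first-order condition $\Pi'(e^1_1) = 0$ at an interior optimum (the corner case is immediate). It therefore suffices to show $\tilde{\Pi}'(e^1_1) \leq 0$: concavity of $\tilde{\Pi}$ then forces its maximizer $\tilde{e}^1_1$ to lie weakly below $e^1_1$, with strict inequality exactly when $\tilde{\Pi}'(e^1_1) < 0$.

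The next step is to compute $\Pi'(e^1_1) - \tilde{\Pi}'(e^1_1)$. At $e_1 = e^1_1$ the fixed-gap configuration coincides with the equilibrium configuration, so the two problems share the same PDV profile $m^1(\cdot \mid e^1_1)$, the same optimal tax schedule, and hence the same multiplier on the revenue constraint $\int \tau_1 \, dQ^1 = e_1$. Applying the envelope theorem to each inner tax-optimization problem, the school-quality term $\theta s'(e_1)$ and the revenue-constraint term cancel in the difference, leaving only the channel through which $e_1$ moves the money values:
\begin{equation*}
    \Pi'(e^1_1) - \tilde{\Pi}'(e^1_1) = -\int u'\bigl(\tilde{w}^1_q + p^1(q)\bigr)\bigl[\psi(q) - \tilde{\psi}(q)\bigr]\,dQ^1,
\end{equation*}
where $\psi(q) = \partial m^1_q / \partial e^1$ is the response of the PDV when district $1$ raises expenditure alone (the equilibrium problem) and $\tilde{\psi}(q)$ is its response when the districts in $Z$ raise expenditure in lockstep to hold the gap fixed. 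Since $u' > 0$, the sign of the difference is governed by that of $\tilde{\psi} - \psi$.

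The crux, and the step I expect to be the main obstacle, is to show $\tilde{\psi}(q) \geq \psi(q)$ for every $q$, with strict inequality on a positive-measure set precisely when district $1$ overlaps some district in $Z$, i.e. when $s(e^1_1) + \bar{q}^1 > s(e^j) + \und{q}^j$ for some $j \in Z$. The argument rests on Proposition \ref{prop4} together with the resident-set characterization of money values in Lemma \ref{lem3.6}. Raising $e^1$ alone makes district $1$ more attractive relative to every other district, shifting its resident set upward and, by Proposition \ref{prop4}, strictly lowering $m^1$ wherever district $1$ is not quality-dominated, so $\psi \leq 0$. Holding the gap with $Z$ fixed preserves the boundaries between district $1$ and the districts of $Z$, so the fixed-gap move neutralizes only the migration of residents between district $1$ and $Z$; the competitive gain of district $1$ against the fixed, non-$Z$ districts is unchanged. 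Hence the upward shift of district $1$'s resident set, and with it the drop in $m^1$, is no larger under the fixed-gap move, giving $\tilde{\psi} \geq \psi$. Making this monotonicity rigorous is where the allocation analysis behind Proposition \ref{prop4} must be invoked carefully, since one must track how the simultaneous movement of $Z$ redistributes the indifferent types.

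Finally I would read off the equality condition and conclude. If $s(e^1_1) + \bar{q}^1 \leq s(e^j) + \und{q}^j$ for all $j \in Z$, then district $1$ is quality-dominated by every district in $Z$ and borders none of them, so the movement of $Z$ is irrelevant to district $1$'s allocation; its resident set responds to the fixed-gap move exactly as to the alone move, whence $\tilde{\psi} = \psi$ almost everywhere by Lemma \ref{lem3.6}, $\tilde{\Pi}'(e^1_1) = \Pi'(e^1_1) = 0$, and concavity yields $\tilde{e}^1_1 = e^1_1$. If instead district $1$ overlaps some $j \in Z$, the neutralized migration is genuine, so $\tilde{\psi} > \psi$ on a positive-measure set, the displayed difference is strictly positive, $\tilde{\Pi}'(e^1_1) < 0$, and therefore $\tilde{e}^1_1 < e^1_1$. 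This matches the stated characterization, and relabeling gives the conclusion for any district in place of district $1$.
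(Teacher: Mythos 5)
Your overall strategy---compare the alone-move and fixed-gap problems through the response of district $1$'s money values, invoke Proposition \ref{prop4}, and finish with concavity of $\tilde{\Pi}$---is the same skeleton as the paper's, but two steps as written do not go through. First, your computation is phrased entirely in terms of derivatives $\psi(q) = \partial m^1_q/\partial e^1$ and $\tilde{\psi}(q)$. Nothing in the paper establishes that $e \mapsto m^1_q(e)$ is differentiable: Propositions \ref{prop2} and \ref{prop4} deliberately deliver only one-sided $\limsup$ bounds on difference quotients, and the proofs of Proposition \ref{prop5.8} and of this proposition are built on \emph{discrete} increasing differences precisely to avoid assuming such derivatives exist. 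The paper's proof defines, for each $e_1 > e^1_1$, the lockstep money values $\tilde{m}^1_q(e_1)$ and the alone-move money values $m^1_q(e_1)$, shows $\tilde{m}^1_q(e_1) \geq m^1_q(e_1)$, and deduces the single-crossing inequality $\Pi(e_1) - \Pi(e^1_1) \geq \tilde{\Pi}(e_1) - \tilde{\Pi}(e^1_1)$; the strict part then uses only the $\limsup$ conditions of Proposition \ref{prop4}, never a derivative of $m$.

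Second, the step you yourself flag as the crux---$\tilde{\psi} \geq \psi$---is left as a heuristic, and the heuristic you give (``the competitive gain of district $1$ against the fixed, non-$Z$ districts is unchanged'') is not correct: in the lockstep move the districts of $Z$ also rise against the non-$Z$ districts and absorb part of the richer in-migration that district $1$ would capture alone, so those gains are diluted, not unchanged. The missing idea is the decomposition the paper uses: starting from the alone-move profile, restore the gaps by raising the expenditure of each $j \in Z$ \emph{one district at a time}; each such raise is exactly an application of Proposition \ref{prop4} with $j$ in the role of the moving district, and each weakly increases district $1$'s money values, with the equality cases read off from that proposition's dominance conditions. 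Chaining these inequalities gives $\tilde{m}^1_q(e_1) \geq m^1_q(e_1)$ for every $e_1 > e^1_1$ directly, with no claim needed about how the marginal migrating types are redistributed. With that replacement---and with your marginal comparison restated as the discrete inequality $\tilde{\Pi}(e_1) - \tilde{\Pi}(e^1_1) \leq \Pi(e_1) - \Pi(e^1_1) \leq 0$---your argument and reading of the equality case would align with the paper's.
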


Given concavity of the fixed gap objective we have the following immediate implication.

\begin{proposition}\label{prop5}
Let $Z \subset \{1, \dots, N \} $ be a set of districts such that $e^j_1 > \tilde{e}^j_1$ for all $j \in Z$. For any first period expenditure caps $\bar{e}^j_1 \in [\tilde{e}^j_1, e^j_1)$ for all $j \in Z$ with $e^j_1 - \bar{e}^j_1 = e^k_1 - \bar{e}^k_1$ for all $j,k \in Z$, and fixing the expenditure of districts $j \not\in Z$ at $e^j_1$, all districts are weakly better off. All districts in $Z$ are strictly better off, and any district $k \not\in Z$ is strictly better off if there exists $j\in Z$ such that $\bar{q}^j + s(e^j_1) > \und{q}^k + s(e^k_1)$. The welfare of each district is decreasing in the cap level. 
\end{proposition}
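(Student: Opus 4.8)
The plan is to compare two expenditure profiles — the equilibrium profile $(e^1_1,\dots,e^N_1)$ and the capped profile in which each $j\in Z$ spends $\bar{e}^j_1$ while each $k\notin Z$ continues to spend $e^k_1$ — and to show district by district that welfare is weakly, and where claimed strictly, higher under the caps. The organizing observation is that, because $e^j_1-\bar{e}^j_1$ equals the \emph{same} constant $c>0$ for every $j\in Z$, passing from the equilibrium profile to the capped profile is precisely a fixed-gap deviation in the sense of Proposition \ref{lem6.9}: it lowers every district in $Z$ by a common amount, thereby preserving all gaps \emph{within} $Z$, while holding every district outside $Z$ fixed. I treat the two groups separately.

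For a district $j\in Z$, the move to the capped profile is exactly a change of its fixed-gap variable from $e^j_1$ to $\bar{e}^j_1$ (Proposition \ref{lem6.9} applied with $j$ as the deviating district and $Z\setminus\{j\}$ as the set whose gaps are frozen; the common-reduction condition $e^j_1-\bar{e}^j_1=e^k_1-\bar{e}^k_1$ is exactly what makes the residual districts of $Z$ move with $j$). Hence district $j$'s welfare change equals $\tilde{\Pi}^j(\bar{e}^j_1)-\tilde{\Pi}^j(e^j_1)$. By Proposition \ref{lem6.9} the fixed-gap optimum satisfies $\tilde{e}^j_1\le e^j_1$, and the hypothesis $j\in Z$ gives the strict inequality $\tilde{e}^j_1< e^j_1$. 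Using the already-noted concavity of the fixed-gap objective, $\tilde{\Pi}^j$ is strictly decreasing on $[\tilde{e}^j_1,e^j_1]$; since $\tilde{e}^j_1\le\bar{e}^j_1<e^j_1$ this yields $\tilde{\Pi}^j(\bar{e}^j_1)>\tilde{\Pi}^j(e^j_1)$, i.e. a strict improvement, and the same monotonicity gives that $j$'s welfare is decreasing in the cap level $\bar{e}^j_1$.

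For a district $k\notin Z$ the expenditure level, and hence the school-quality term $\theta(s(e^k_1)+s(e^k_2))$, is unchanged, so only the old-residents' term $\int u(\tilde{w}^k_q+p^k(q))\,dQ^k$ can move, and it moves only through the money value $m^k$, since $p^k_1(q)=-\tau^k_1(q)+\tfrac{1}{1+r}p^k_2(q)-m^k(q)$ with $p^k_2$ and $\tilde{w}^k_q$ unchanged (by Lemma \ref{lem5.7} and predetermination of old residents). I obtain the sign of the change in $m^k$ by iterating Proposition \ref{prop4} read in the direction of \emph{decreasing} expenditure: lowering $e^j$ for a single $j\in Z$, all else fixed, weakly lowers $m^k_q$ for every $q$, and strictly on a set of homes of positive measure whenever $j$ and $k$ interact, which is exactly the condition $\bar{q}^j+s(e^j_1)>\und{q}^k+s(e^k_1)$. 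Chaining these single-district reductions across all $j\in Z$ shows $m^k$ weakly falls, so prices weakly rise: evaluating $k$'s objective at its old tax schedule already gives a weak improvement (as $u$ is increasing), and re-optimizing $\tau^k_1$ subject to the unchanged revenue requirement can only help; under the stated overlap condition at least one link is strict, giving a strict improvement. Read in the increasing direction, the same comparative static shows $k$'s welfare is decreasing in the cap level.

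The main obstacle is the non-$Z$ analysis: Proposition \ref{prop4} governs a change in a \emph{single} district's expenditure, whereas here several districts of $Z$ drop simultaneously, so I must verify that the induced changes in $m^k$ aggregate monotonically — that each single-district reduction moves $m^k$ weakly downward regardless of the order and of the intermediate configuration, so no cancellation occurs — and that the overlap inequality $\bar{q}^j+s(e^j_1)>\und{q}^k+s(e^k_1)$ is precisely the interaction hypothesis under which Proposition \ref{prop4} delivers strictness on a positive-measure set of homes. A secondary point to pin down is that, for $j\in Z$, the physical move to the capped profile really coincides with the fixed-gap deviation underlying $\tilde{\Pi}^j$ and $\tilde{e}^j_1$, after which Proposition \ref{lem6.9} combined with concavity closes the $Z$ case.
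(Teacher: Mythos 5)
Your decomposition is exactly the paper's: for $j \in Z$ the paper's one-line proof invokes Proposition \ref{lem6.9} together with concavity of the fixed-gap objective $\tilde{\Pi}$, and your identification of the capped profile with the fixed-gap deviation (the common reduction $c$ preserves all expenditure gaps within $Z$, so district $j$'s welfare at the capped profile is $\tilde{\Pi}^j$ evaluated at $\bar{e}^j_1$) is the right way to make that step explicit; for $k \notin Z$ the paper simply says that Proposition \ref{prop4} implies the conclusion, and your chaining of single-district reductions---each application of Proposition \ref{prop4} at any intermediate profile moves $m^k$ weakly in the same direction, so no cancellation occurs---is the correct way to fill that in. The weak-improvement claims and the monotonicity-in-the-cap claims are fine on both sides of the partition.

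The gap is in the strictness step for $k \notin Z$. You assert that the hypothesis $\bar{q}^j + s(e^j_1) > \und{q}^k + s(e^k_1)$ ``is precisely the interaction hypothesis under which Proposition \ref{prop4} delivers strictness.'' It is not. In Proposition \ref{prop4}, the homes of a non-perturbed district $k$ whose money values move strictly are those with location quality above the \emph{bottom} of the perturbed district's quality range: the first bullet says that if the perturbed district lies entirely above the others, then $\hat{m}^k_q = m^k_q$ for every $q$, and the third bullet gives strict movement only for $k$'s homes above $q^*$, the home matching the perturbed district's lowest location quality. So what Proposition \ref{prop4} actually requires for a strict effect on $k$ is $\bar{q}^k + s(e^k_1) > \und{q}^j + s(\bar{e}^j_1)$ --- the capped district must not sit entirely above $k$ after the cap --- whereas the stated condition says that $j$ does not sit entirely below $k$, a different (mirrored) inequality. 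Concretely, if every $j \in Z$ lies entirely above district $k$ both before and after the caps, the stated condition holds trivially, yet Proposition \ref{prop4}'s first bullet gives $m^k$ unchanged and the allocation around $k$ unchanged, hence $k$ is exactly indifferent, not strictly better off. The externality in this model propagates upward in location quality (capping a district that dominates $k$ does nothing to $k$; capping a district that $k$ dominates, or overlaps, does), so the strictness claim cannot be closed by the argument you give. This mismatch appears to originate in the proposition's own statement (the bounds look swapped), but a correct proof must either work with the corrected overlap condition or flag the discrepancy; asserting that the two conditions coincide is the step that fails.
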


Proposition \ref{prop5} gives the central policy recommendation of the paper. The central government can provide tax relief to a set of wealthier districts while reducing their excessive expenditure. This results in less geographic income inequality and improves the tax base in the remaining districts, leading to higher home values in these districts. These remaining districts can then increase their expenditure slightly without making the capped districts worse off.

Rather than imposing a hard cap, the policy can be formulated as a fee that must be paid by districts that engage in excessive expenditure, as is done in some U.S. states, e.g. Vermont. Revenue gathered from this fee can be transferred directly to the districts that need it most. This type of ``tax-on-tax'' policy will be discussed further below, as it may be necessary for generating Pareto improvements when there are some districts in which most residents rent rather than own their homes.   

Expenditure caps are in place in many states. An important take away from the model is that expenditure caps should be used selectively to correct geographic income imbalances. The analysis can also be seen as providing a novel justification for tax caps. Critics of such caps may argue that localities have a right to tax themselves to support high expenditure if they so choose. Such arguments, however, ignore the fiscal externalities of high expenditure. 

As noted above, renters reduce the incentives of local governments to compete for high-income residents and raise property values. As long as at least two districts engage in this type of competition however, the policy recommendation holds; these districts are the ones that can be capped. The central planner may also be concerned about rental rates. Reducing geographic income disparity inequality in school qualities will necessarily result in higher property values in the poorest districts. The central planner will need to use other tools to directly address higher rental rates.

\subsection{Importance of heterogeneous home quality}\label{sec:hetero_home}

Heterogenetiy of home qualities is a salient real-world feature. Moreover, models in which homes are undifferentiated may miss the important forces identified here. This may explain why the upward pressure on expenditure caused by competition for richer residents has not greater attention in the literature.  

Consider the two district model as described above, but in which all homes are identical. The within-district equilibrium will be similar: if $e^A > e^B$ then the richest half of the new arrivals will go to district $A$ and the rest to district $B$. The home price gap between the two districts will be determined by the indifference condition of the marginal household; the median-wealth individual should be indifferent between the two districts. Assume, as before, that there is a non-market housing option available to the poorest individuals. Then home prices in district $B$ will be pinned down by the indifference of the poorest type between the non-market and market housing in district $B$. Then home prices in district $A$ are determined by the indifference condition of the median type. 

An important difference between this model and that which is the focus of this paper is that the expenditure level in district $A$ will not affect the home price in district $B$, so long as $e^A > e^B$. This observation generalizes to any number of districts: the specific expenditure level in higher-expenditure districts does not matter for those with lower expenditure. As a result, there cannot be Pareto improvements through expenditure caps. 

\subsection{Future periods}

Thus far I have discussed policies that affect only the current period tax schedules. I show (Lemma \ref{lem5.7}) that changes to the first period schedules will not affect the actions of future district governments or new arrivals. In general however, the policy maker will be interested in making changes over many periods. Nonetheless, the political constraint on which policies can be enacted is given by the preferences of current residents and district governments. The question then is what long-term policies can be enacted without making any current districts worse off. Given that a policy has been identified that is strictly (under some conditions) Pareto-improving in the current period, continuity of the district objectives guarantees that there are Pareto improving policies which affect expenditure in all periods. In particular, the central government can impose a sequence of expenditure caps on a set of districts $Z$ in all periods. This makes districts outside of $Z$ better off. It may be necessary to impose caps districts outside of $Z$ as well to ensure that their increase in expenditure does not erase the tax-relief gains for districts in $Z$. These caps will still be above the level of expenditure that districts outside of $Z$ would have chosen in the absence of any policy intervention.

\subsection{Renters}\label{sec:renters}

Suppose that in each district some exogenous fraction of the homes are only available to be rented. Let $Q^j = O^j + R^j$, where $O^j$ be the conditional CDF of home quality for owner-occupied homes in district $j$, and $R^j$ the conditional CDF for rental homes. The key difference between renters and owners is that the former, once they have made their location choice, are hurt by increases in home prices (as these translate to higher rents), while the later benefits from increases in home prices. This is most easily illustrated in the current model if we assume that renters will remain in their chosen district for two periods. Renters cannot sign long-term contracts; the rate they pay in each period is determined by the current housing market equilibrium. Owners, as before, sell their home after one period.\footnote{This asymmetry is not important. The key is that renters may be hurt by increases in home values after they have made their location choice. We could assume instead, for example, that all individuals live for three periods and reside in their chosen district for the first two.}

As noted in Section \ref{sec:housing_mkt}, the number of periods does not affect the conclusions of that section. A renter of a quality $q$ home in district $j$ pays $f^j_t(q) + \tau^j_t(q)$ in periods $t = 1,2$, where $f^j_t(q)$ is the rental rate net of tax. The renter, of course, receives no value from the sale of the home when they leave. Like buyers, renters are able to borrow and save to smooth consumption over the two periods. Rental contracts last for only one period; the renter cannot lock in their second period rate.  

It is easy to see that the presence of rental homes does not affect the monotonicity of the equilibrium location allocation identified in Lemma \ref{lem0.1}: by exactly the same argument as before, if a low-income individual prefers a high quality location to a low quality one, regardless of whether these are rental or owner-occupied, then so will a high-income individual. Recall that we call an equilibrium monotone if the allocation of location qualities is monotone in individuals' types. 

\begin{lemma}\label{lem:retal_monotonicity}
Any equilibrium is monotone, regardless of the composition of rented v.s. owner-occupied properties.
\end{lemma}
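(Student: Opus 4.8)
The plan is to reproduce the argument behind Lemma \ref{lem0.1}, checking that every step survives the introduction of rental homes. The only property of owners used there is a single-crossing condition: higher-income individuals have a strictly higher marginal willingness to pay, in money, for location quality. Since location quality $q + s(e^j)$ enters an individual's payoff additively and identically whether the home is rented or owned, it suffices to verify that a renter's money value function has the same concavity structure as the owner's $V$, and that in equilibrium higher-quality locations still carry lower present-discounted monetary cost.

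First I would write down the renter's analogue of $V$. A renter of a quality-$q$ home in district $j$ faces net-of-resale payments $f^j_t(q) + \tau^j_t(q)$ in periods $t = 1, 2$ and can borrow and save at rate $r$, so their money value solves a two-period consumption-smoothing problem identical in form to \eqref{eq0.1}, with this payment stream in place of $(p_1 + \tau, -p_2)$. Collapsing the two budget constraints through the bond $b$, this value depends on $(w,\cdot)$ only through lifetime resources $w\bigl(1 + \tfrac{1}{1+r}\bigr) + m$, where $m$ is the PDV of the net payments --- exactly the combination that appears for owners. Consequently the marginal value of an extra unit of PDV is strictly decreasing in $w$ by strict concavity of $u$, which is the single-crossing property, and it holds verbatim for renters.

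Next I would record the equilibrium ordering of monetary terms: among locations filled in equilibrium, strictly higher location quality comes with strictly lower PDV, regardless of tenure. This is a tenure-free dominance argument. A location offering strictly higher quality and weakly higher PDV than another would be strictly preferred by every type, since both the additive quality term and the money value are increasing, so the dominated location could not be occupied, contradicting market clearing. Hence for any two occupied locations $X,Y$ with qualities $\ell_X > \ell_Y$ we have $m_X < m_Y$.

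Finally I would close by the exchange argument. Suppose the allocation were not monotone, so some $w_L < w_H$ have $w_L$ at a location $X$ with $w_H$ at a location $Y$ of strictly lower quality. Optimality of $w_L$ gives that $w_L$ weakly prefers $X$ to $Y$; since $m_X < m_Y$ and the marginal value of PDV is strictly decreasing in type, the single-crossing inequality upgrades this to a strict preference of $w_H$ for $X$ over $Y$, contradicting the optimality of $w_H$ at $Y$. Because the only inputs are the additive, tenure-independent quality term and the common concave money-value structure, all four owner/renter pairings of $(X,Y)$ are handled at once. The step requiring genuine care --- and the one I would treat as the main obstacle --- is the verification that the renter's value function inherits the decreasing-marginal-value-of-wealth property, i.e.\ that changing the timing and sign of the housing payments (rent in both periods, no resale) does not disturb the single-crossing; everything else is a transcription of the proof of Lemma \ref{lem0.1}.
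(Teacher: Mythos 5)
Your proof is correct and takes essentially the same approach as the paper: the paper's own justification is precisely that the single-crossing argument of Lemma \ref{lem0.1} applies verbatim regardless of tenure, since a renter's payment stream enters the same two-period consumption-smoothing problem and location quality is additive and tenure-independent. Your collapse of the budget constraint into lifetime resources $w\bigl(1+\tfrac{1}{1+r}\bigr)+m$ is just a cleaner restatement of the paper's envelope/FOC computation establishing that the marginal value of PDV is decreasing in wealth, so the two arguments coincide in substance.
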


Proposition \ref{prop1} describes the unique monotone location allocation which must necessarily arrive in equilibrium. Note, moreover, that the results of Section \ref{sec:housing_mkt} refer only to the PDV of the equilibrium payment-price vectors. Thus all results of this section, in particular Proposition \ref{prop2}, continue to hold regardless of the composition of rented and owner-occupied properties. To put this another way, in equilibrium the present discounted value of consumption can depend only on location quality, not on whether an individual chooses to rent or buy. This means that for a renter arriving in period 1
\begin{equation}\label{eq:rental_rate}
    f^j_1(q) + \tau^j_1(q) + \frac{1}{1+r}\left( f^j_2(q) + \tau^j_2(q)\right) = - m^j(q)
\end{equation}
where $m^j(q)$ is, as before, the equilibrium PDV when there are no renters. 

We can now examine the effect that the division of the housing market between rental and owner-occupied houses has on district government actions. The game played by districts will be the same as that considered in Section \ref{sec:district_game}. The district objective is to maximize school quality and the welfare of current residents, i.e. those who arrived in the previous period. With renters, this objective is given by 
\begin{align*}\label{eq:renter_objective}
    \Pi^j(e^j_1, e^j_2,\tau^j_1, \tau^j_2) &= \theta \left(s(e^j_1) + s(e^j_2)\right) \\
    &+ \int u(\tilde{w}_O^j(q) + p^j(q)) dO^j +
    \int u(\tilde{w}_R^j(q) - f^j_1 - \tau^j_1(q)) dR^j,
\end{align*}
where $p^j(q)$ depends on $e^j_1, e^j_2, \tau^j_1$, and $\tau^j_2$, as given by equation (\ref{eq5.1}). Here $\tilde{w}^j_O$ and $\tilde{w}^j_R$ are the sum of income and net savings for owners and renters respectively. Although renters and owners in a given location have the same PDV of consumption, their savings behavior will generally differ since an owner-occupied home is also an instrument for saving. 

Lemma \ref{lem5.7} continues to hold, so the districts' problem of choosing tax schedules remains separable across periods. Thus, as before, we can focus on the problem of choosing $e^j_1$ and $\tau^j_1$. The determination of home prices is as before. For renters, this means that $f^j_2(q)$ is independent of $e^j_1$ and $\tau^j_1$. The period 1 rental rate is then determined by $e^j_1$ and $\tau^j_1$ through (\ref{eq:rental_rate}), where $f^j_2(q)$ is fixed. A reduction in $m^j(q)$ induced by increases in the relative quality of district $j$ schools leads to higher rental rates. On the other hand, such changes benefit owners through higher home prices. When the proportion of the district $j$ housing stock that is made up of rental properties is high, Proposition \ref{prop4} will therefore imply that district $j$'s equilibrium spending, $e^j_1$, is lower than what it would spend in the fixed-gap problem $\tilde{e}^j_1$ (i.e. the reverse of the conclusion of Proposition \ref{lem6.9}). If this is true for all districts then, by the same argument as Proposition \ref{prop5}, all districts would benefit from an expenditure floor. 

It is interesting to consider what policies would be Pareto improving when there is a set of districts $Z$ that have a low proportion of renters (so $e^k_1 > \tilde{e}^k_1$ for $k\in Z$), and the rest have a high proportion of renters (so $e^j_1 < \tilde{e}^j_1$ for $j \not\in Z$). In this case an expenditure cap on districts in $Z$ would benefit those districts, but hurt the remaining districts. An exception to this would be if the support of location qualities for districts in $Z$ did not overlap with those outside of $Z$. In this case a cap for districts in $Z$, provided it was not too low, would not lead to any change in the location allocation. In this case a ``tax-on-tax'' policy for districts in $Z$ would be necessary to transfer some of the surplus generated by the tax relief provided by a cap on these districts to districts outside of $Z$.

The preceding analysis of housing market equilibrium with rental properties also sheds some light on observed patterns of home ownership, location choice, and income. The rate of home ownership in the U.S. is lower for poor households than for rich. Moreover, there is a high degree of residential segregation based on income, with poorer households concentrated in neighborhoods with a high fraction of renters. Existing research on this phenomenon highlights the role of credit constraints faced by poorer households, along with historical practices of racial discrimination in the housing market. Poor households have difficulty saving for a down payment, and face higher interest rates on their mortgages due to greater income volatility. (\cite{carasso2005improving}). While credit constraints and discrimination are certainly important constraints on the ability of poor households to buy a home, the preceding analysis shows that the observed pattern of home ownership may arise even without these factors (households here are not credit constrained, and there is no explicit discrimination). In this context, poor households concentrate in districts with a large number of rental properties for two reasons. First, as discussed above, the aversion to increasing home prices among renters leads such districts to under-invest in the local public good. Given this under investment, monotonicity of the housing market equilibrium implies that the income distribution of new arrivals will be lower in such districts. Thus patterns of local public goods provision reinforce credit constraints and discrimination. 

This analysis also suggests a role for housing policy in alleviating education inequalities. Policies aimed at improving homeownership among lower-income households should should incentivize these households to purchase homes \textit{within} the districts in which they currently live. This will help such households enjoy the property-value benefits of improving school quality. 

\subsection{Extensions}
One interesting extension would be to study the dynamic  implications of districts having different compositions of rented v.s. owner-occupied homes. As districts with a high proportion of renters have a lower incentive to encourage high-income new arrivals, and thus set lower taxes, they will further reduce their ability to raise revenue in future periods. This should create a self-reinforcing cycle of polarization, with poor renters concentrating in districts with low public goods provision. One could also consider multi-dimensional public goods. Presumably the relative salience of some features for potential home buyers would distort incentives for public goods provision. 

\section{Empirical evidence}\label{section:empirical}

The fact that equilibrium expenditure is too high in all districts, which implies that tax caps can be Pareto improving and increase school quality in some districts, is driven by the sorting of higher income individuals into districts with higher expenditure. Districts with higher expenditure, and thus better schools, attract the richer individuals of the incoming generation. However the distribution of income across districts is essentially a zero sum game. Districts that increase expenditure and attract richer new arrivals impose a negative externality on neighboring districts, inducing lower wealth levels and property values in these districts. 

Using data from Massachusetts, I seek to identify whether income sorting of this type occurs. Massachusetts' Proposition 2 1/2 imposes caps on both the level and the rate of change of taxation, but municipalities can hold referenda to increase the cap. I use close referenda to identify plausibly exogenous changes in expenditure. I use these changes to identify the impact of increased expenditure on the income distribution and property wealth within a district and in neighboring municipalities. The model has been discussed in the context of school funding, but it applies more generally to any problem of local public goods provision. In the empirical analysis I consider all such investments, and do not differentiate between different types of local public goods.  

\subsection{Institutional setting}

Property taxes in Massachusetts are levied exclusively by municipalities, unlike some states in which there are multiple taxing bodies that may levy taxes on a given property. In 1981, in an effort to limit local property taxation, the Massachusetts state legislature passed Proposition 2 1/2, which took effect in 1983. Proposition 2 1/2 imposes two caps on the amount that a municipality can levy. First, a municipality cannot levy more than 2.5 percent of the assessed property base (including real and personal property). This limit is know as the \textit{levy ceiling}. Additionally, the law controls the rate at which the levy can increase. The maximum allowable levy in a given year is called the \textit{levy limit}. The levy limit cannot exceed the levy ceiling. The levy limit increases by 2.5 percent each year.\footnote{The levy limit may also increase as a result of new growth in the tax base that is \textit{not} the result of reevaluation.}

Both the levy limit and levy ceiling can be increased through a local referendum. An increase in the levy limit is known as an \textit{override}. This action can be taken only if a community is below its levy ceiling. An override has a permanent effect on the levy limit, as the higher levy limit is used to calculate that of future years. The levy ceiling cannot be permanently changed, but it can be temporarily increased (known as an exclusion) through a referendum for the purposes of raising funds for debt service costs (debt exclusion) or for capital expenditures (capital outlay expenditure exclusion). 

\subsection{The data}

I obtained data on Proposition 2 1/2 referenda, dating to the introduction of the regulation in 1983, from Massachusetts' Division of Local Services (DLS). Municipalities are required to report on referenda that have passed, as this is necessary to ensure that they are complying with the law. Districts also report referenda that do not pass. Unfortunately reporting of failed referenda is not required by law. From the frequency of reported losses in the dataset it seems that around half of all losses are not reported. In a given year, a municipality may hold multiple referenda. The only discernible pattern in the reporting of losses is that in years in which there is also a win, the reporting of which is mandatory, municipalities are more likely to also report losses, compared to years in which there was no win. In order to address potential selection issues associated with this pattern of reporting, I restrict attention to municipality-year observations in which there were either only losses or only wins. Within this sample there does not seem to be any pattern in which municipalities report losses. Since reporting losses has no consequences for municipalities, it seems reasonable to assume that the decision to report a loss is orthogonal to all other variables of interest. I maintain this assumption throughout. 

The data include both the year of the vote and the year in which the measure was to take effect. Most successful Overrides and Capital Exclusions take effect the year after the vote is held. Overrides, however, have a cumulative effect on the levy limit over time. Debt Exclusions on the other hand are often in effect for multiple years, and may not take effect in the year immediately following the vote. To accommodate both the cumulative effect of overrides and the delayed effect of Debt Exclusions I record referenda according to their vote date. Thus we would generally expect a referendum in year $t$ to begin to effect the levy in year $t+1$. 

In order to identify changes in income I use municipal level income data, based on tax returns, provided by the Massachusetts Department of Revenue (DOR). I also incorporate reports from the DLS and DOR on residential property values, home values, single-family home values, tax levies, and average single family tax bills by municipality. 

The key variables of interest are the changes in tax levies and income per capita in each municipality, and in neighboring municipalities. I use GIS data to determine the neighbors of each municipality, determined by geographical contiguity. Approximately half of Massachusetts municipalities belong to some form of multi-town school district. Moreover, around 30\% of referenda in the data concern school expenditure. Since the interest is on expenditure externalities, I refine the set of neighbors for each district by considering only those that do not belong to the same school district. I also exclude Boston from the analysis. Boston has had no close referenda, and given it's size relative to neighboring municipalities, the aggregate measure of property values is unlikely to be much impacted by the expenditure of any of it's neighbors.

\subsection{Empirical strategy}

The empirical strategy is to use sharp and fuzzy regression discontinuity designs (RDD), where the running variable is the percent of ``Yes'' votes in a referendum. I relate the discontinuity in the passage of a referendum at the 50\% yes vote margin, and the induced discontinuity in property tax levels, to changes in the outcomes of interest. 

Municipalities frequently hold multiple referenda per year. For municipality-year observations in which there are multiple referenda I use the average vote margin as the running variable. For the analysis presented here, I aggregate these observations, and simply tally whether or not a municipality had a close win or a close loss. As discussed above, I also restrict attention to observations for which there are either only losses or only wins. For this sample the average vote margin is the correct running variable for determining whether a measure passed or failed.  

\subsection{Results}

I first examine the effect of referendum outcomes on tax levies. I focus here on the average tax bill for single family households; similar patterns hold for the aggregate level of residential property taxation. Let $Margin_{i,t}$ be the fraction of ``Yes'' votes in municipality $i$ and year $t$ (averaged across referenda if there were multiple votes), and $W_{i,t} := \mathbbm{1}_{Margin_{i,t} \geq 0}$ indicate whether or not the referendum passed.\footnote{Recall that I am restricting attention to observations in which either all votes were wins or all were losses.} Define 
\begin{align*}
    M_{i,t} := &\Big(1, Margin_{i,t}, Margin_{i,t}^2, Margin_{i,t}^3,\\
    &W_{i,t}\times Margin_{i,t}, W_{i,t}\times  Margin_{i,t}^2, W_{i,t}\times Margin_{i,t}^3\Big)'
\end{align*}
which is a vector of polynomial terms and interactions, included to allow for functional form flexibility.\footnote{The results continue to hold using local estimation techniques. Scatter plots showing the relationship between vote margin and levy at different lags are included in Appendix \ref{app: empirical results}.}

Let $AvgTax_{i,t}$ be the average tax bill for a single family home in municipality $i$ and period $t$. Let $\tilde{\Delta}AvgTax^k_{i,t} := (AvgTax_{i,t + k} - AvgTax_{i,t})/AvgTax_{i,t}$ be the growth rate in the average tax between year $t$ and $t+k$. I estimate a set of sharp RDDs given by 
\begin{equation}\label{eq:levypercentchange}
    \tilde{\Delta}AvgTax^k_{i,t} = \gamma'M_{i,t} + \kappa \cdot W_{i,t} + \varepsilon_{i,t}.
\end{equation}
Figure \ref{fig:levypercentchange} displays the coefficient on $W_{i,t}$ for different lags $k$. As expected, passage of a referendum is associated with higher growth, measured as percent change, with a larger effect for longer lags.
\begin{figure}[ht]
    \centering
    \caption{Discontinuity in levy growth}
    \includegraphics[scale = .25]{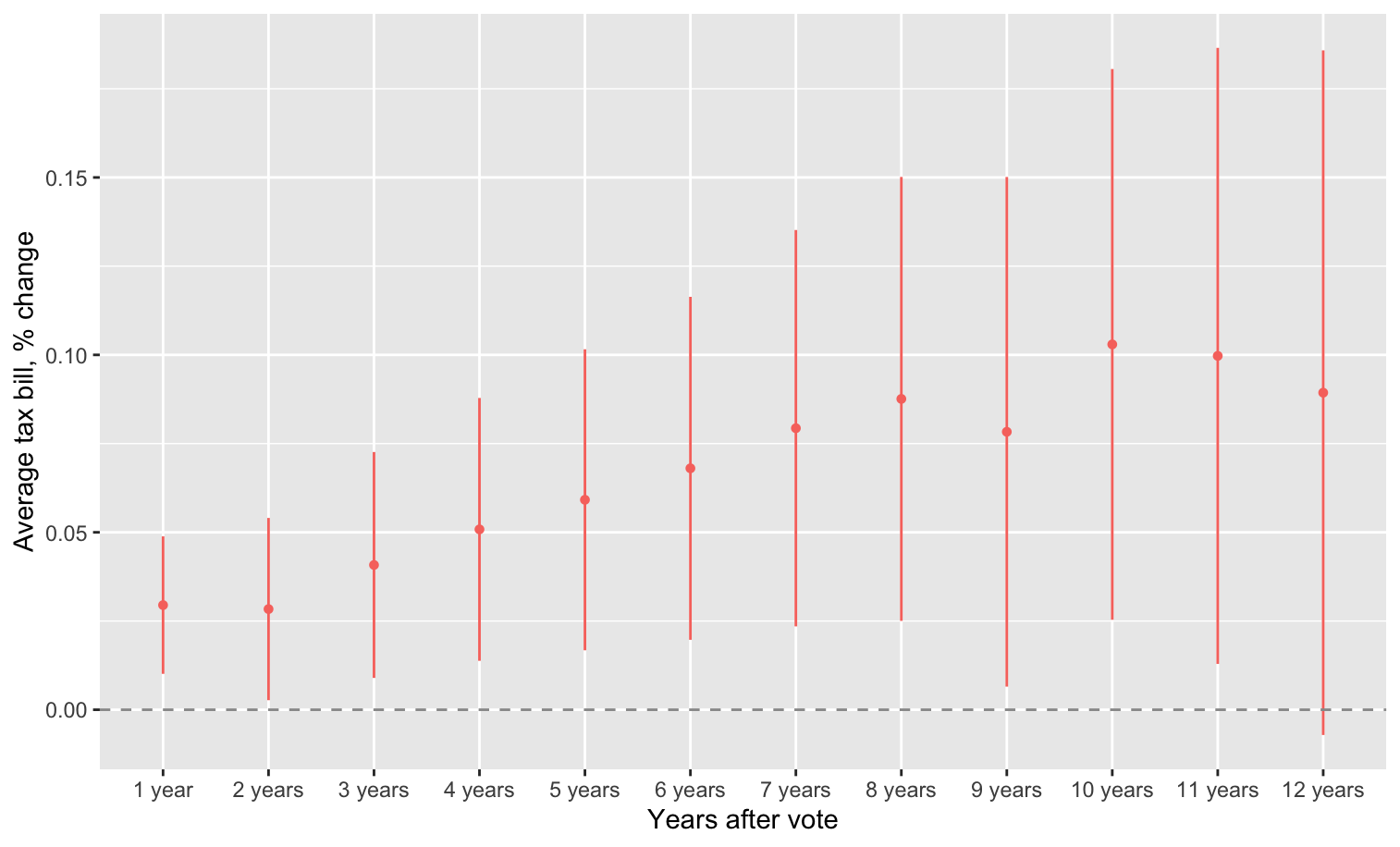}
    \label{fig:levypercentchange}
\end{figure}
I also estimate the discontinuity in tax levels associated with passage of a referendum. The regression equation is 
\begin{equation}\label{levylevel}
    AvgTax_{i,t+k} = \gamma'M_{i,t} + \kappa \cdot W_{i,t} + \varepsilon_{i,t}.
\end{equation}
The coefficients on $W_{i,t}$ for different lags $k$ are displayed in Figure \ref{fig:levylevel}.
\begin{figure}[ht]
    \centering
    \caption{Discontinuity in levy level}
    \includegraphics[scale = .25]{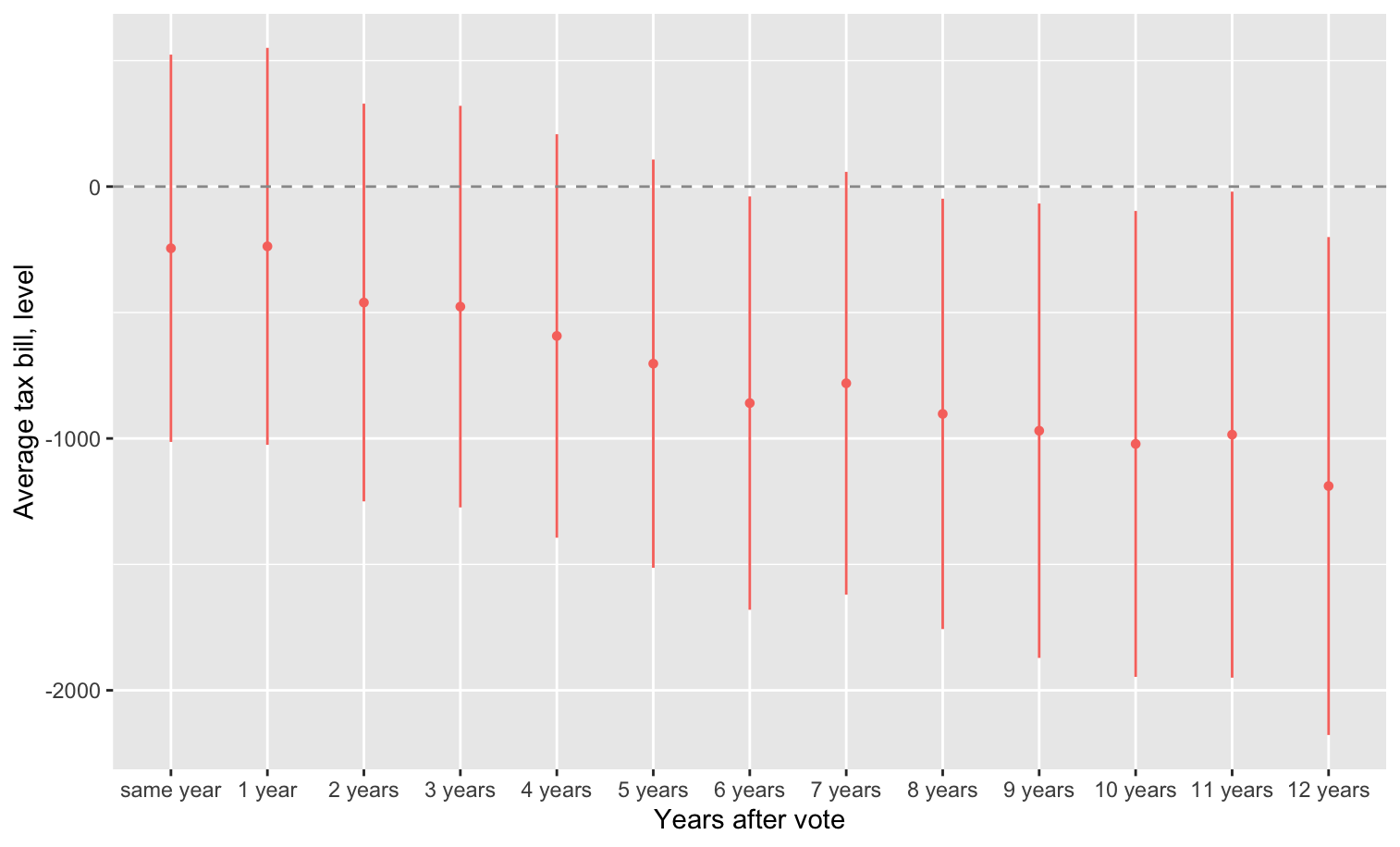}
    \label{fig:levylevel}
\end{figure}
Surprisingly, in the long run passage of an exclusion or override measure appears to have a negative effect on the level of  the tax levies. This pattern reflects, to some extent, the fact that income levels appear to be slightly lower to begin with in municipalities that pass a referendum. While this difference is not significant, it may compound over time; despite a higher growth rate, the levy in the lower income districts increases less than that of the higher income districts. Another explanation for this pattern could be that passage of an override or exclusion makes it less likely that another such measure will pass later on.

\subsubsection{Income sorting}
Looking directly at income data, I find some evidence that higher income households sort into districts with higher tax levies. However the relationship between levies and incomes is significant only in levels; I find no evidence that growth in the levy is associated with growth in incomes per capita. The model of interest is
\begin{equation}\label{eq:incomePC}
    IncomePC_{i,t} = \beta_0 + \beta_1 AvgTax_{i,t} + \varepsilon_{i,t}
\end{equation}
where $IncomePC_{i,t}$ is the income per capita in municipality $i$ and year $t$. The coefficient of interest, $\beta_1$, is identified using a fuzzy RDD exploiting the discontinuity in the tax bill induced by passage of a referendum, estimated by:
\begin{equation}
    AvgTax_{i,t} = \gamma' M_{i,t-j} + \kappa \cdot W_{i,t-j} +\sigma_{i,t}, 
\end{equation}
where $Margin_{i,t}$ is the fraction of ``Yes'' votes in municipality $i$ and year $t$, and $W_{i,t} := \mathbbm{1}_{Margin_{i,t} \geq 0}$ indicates whether or not the referendum passed. Figure \ref{fig:levylevel} shows that there is a significant discontinuity in the levy only after a few years have elapsed from the data of the vote. I estimate the model using different lags $j = 5, \dots, 12$ in the first stage to identify $\beta_1$, and report the results for each in Figure \ref{fig:incomePC}. It is worth emphasising that this figure reports different estimates of the same coefficient: the contemporaneous effect of increasing the tax levy on income per capita in a municipality.\footnote{Estimates of the effect of the levy on incomes in the subsequent year are similar.} The difference is in the identification strategy. The estimate of $\beta_1$ is interpreted as the local average treatment effect (LATE). 
\begin{figure}[ht]
    \centering
    \caption{Effect of tax level on income per capita}
    \includegraphics[scale = .25]{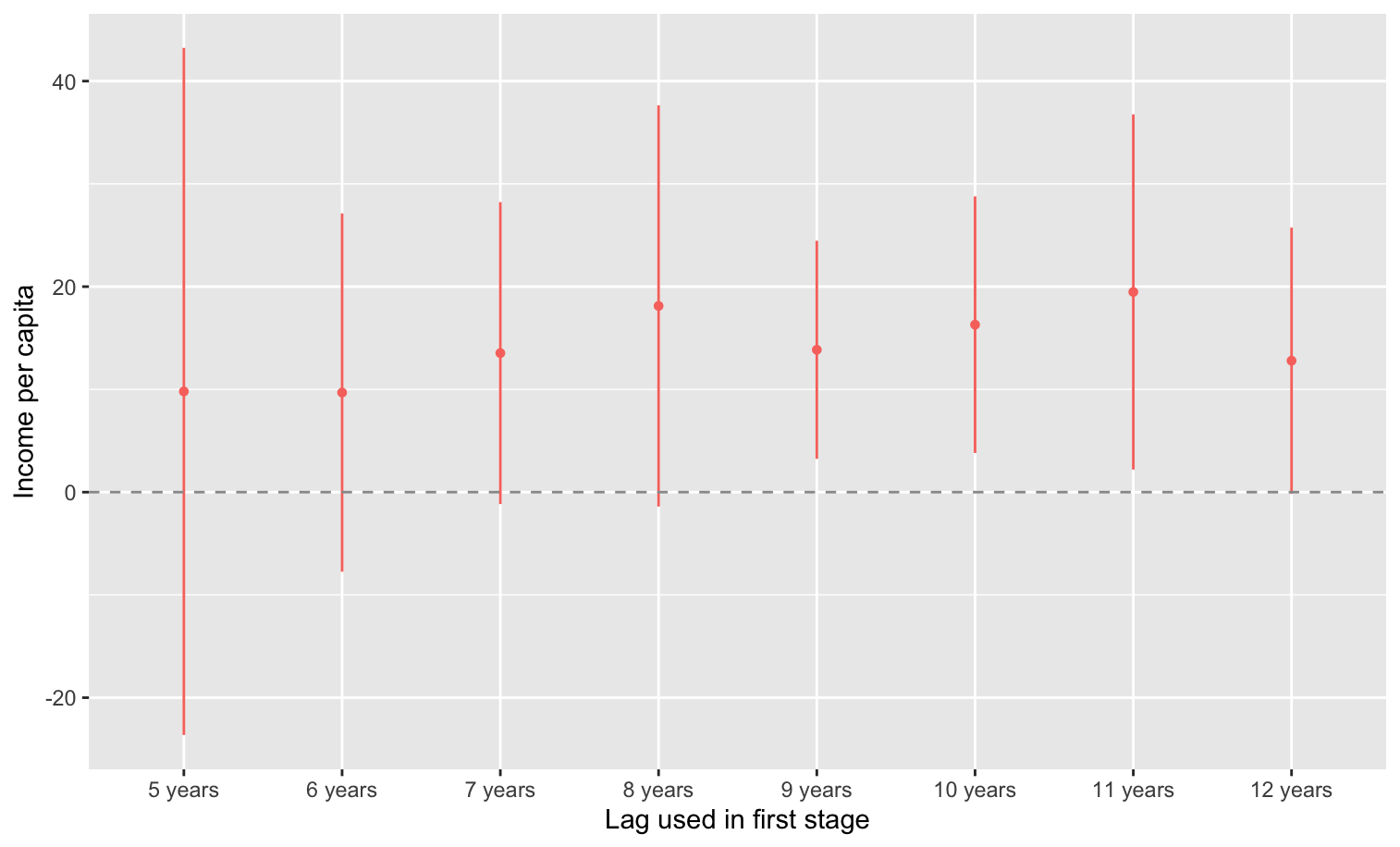}
    \label{fig:incomePC}
\end{figure}
All identification strategies give an estimate of the increase in income per capita resulting from a \$1 increase in the per family tax levy of between \$10 and \$20.

\subsubsection{Home values}

Further evidence of income sorting in response to tax increases comes from changes in home values. A benefit of looking at home values for evidence of income sorting, rather than at incomes per capita, is that the former reflects \textit{marginal} changes to the population, home prices being driven largely by characteristics of the new arrivals, whereas the latter is a population mean. We should thus expect home prices to be more sensitive to income sorting. In the model of Section \ref{sec:housing_mkt} the effect of increased taxation on home values in a municipality is ambiguous. On one hand, higher quality public goods increase demand for housing. However the higher taxes are also capitalized into home prices, leading to lower prices. The net effect depends on the relative size of these effects. More generally, it is unlikely that both home values and tax levies increase without some degree of income sorting. Presumably the increased levy translates to some degree into desirable local public goods in the given municipality. It would be surprising if both the cost of living and the quality of local public goods increased without higher incomes of new arrivals.  

I find strong evidence that an increase in taxation cause increases in home values. This holds even when I control for differences in initial conditions between municipalities that pass referenda and those that do not by looking an growth rates, rather than levels. I focus here on assessed single family home values. The results are similar if I look at assessed values for all residential property, or use Zillow's home value index rather than assessed values. 

Let $HomeValue_{i,t}$ be the average single family home value, and $\tilde{\Delta}HomeValue^k_{i,t}$ be the percent change from year $t$ to year $t+k$. As a first step, I estimate the following sharp RDD
\begin{equation}\label{eq:sharpRDD_ownvalue}
    \tilde{\Delta}HomeValue^k_{i,t} = \gamma' M_{i,t} + \kappa \cdot W_{i,t} + \varepsilon_{i,t}
\end{equation}
for $k = 1, \dots , 12$. The resulting estimates for $\kappa$ are summarized in Figure \ref{fig:sharpRDD_ownvalue}.
\begin{figure}[ht]
    \centering
    \caption{Discontinuity in home value growth}
    \includegraphics[scale = .25]{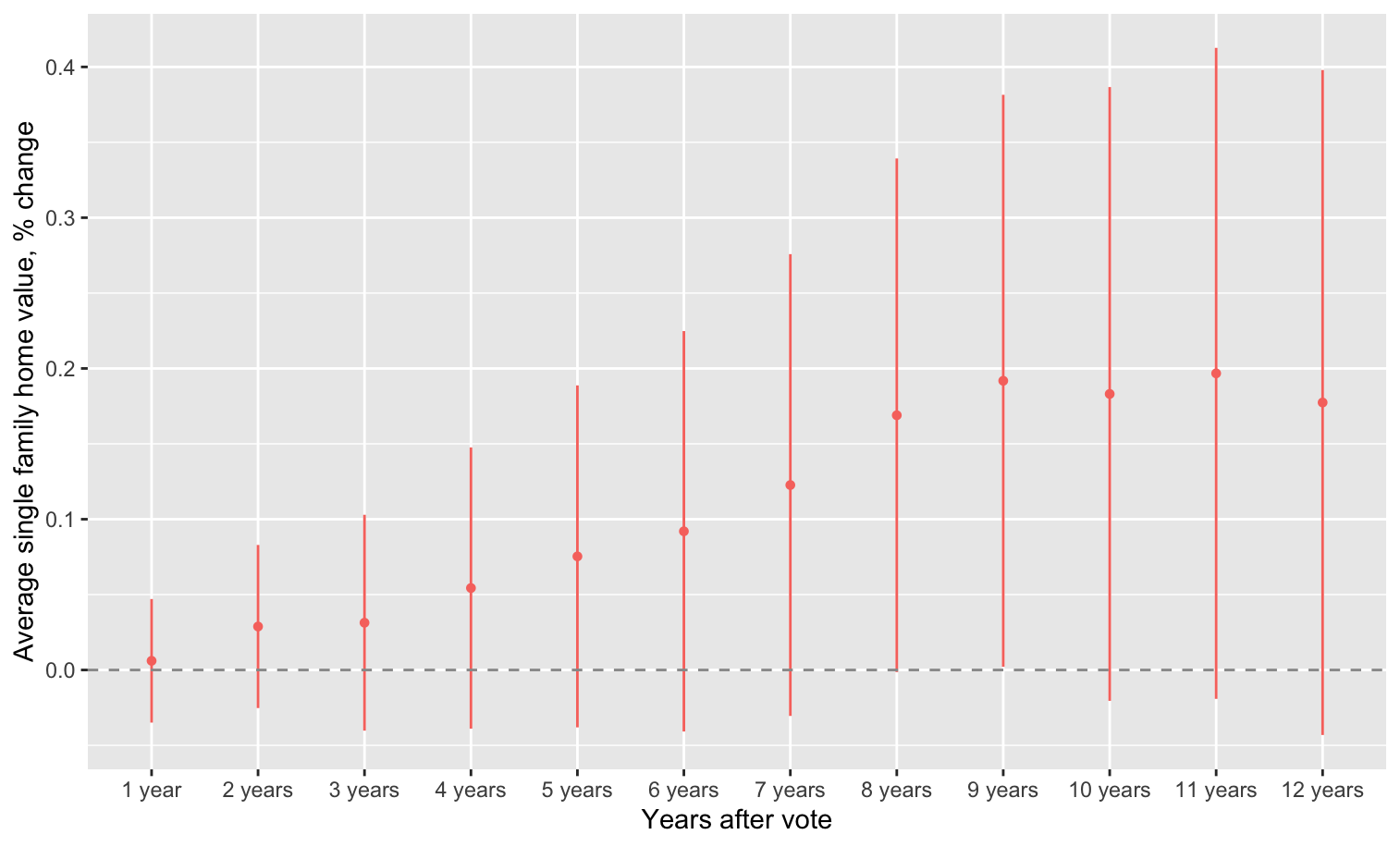}
    \label{fig:sharpRDD_ownvalue}
\end{figure}
The estimates obtained using a local linear estimator, with the Imbens-Kalyanaraman optimal bandwidth selection, are almost identical, and are summarized in Table \ref{tab:sharpRDD_ownvalue}.

\begin{table}[ht]
    \centering
    \begin{tabular}{c c c c c c c c c }
    \hline
       Lag & 5 yrs & 6 yrs& 7 yrs& 8 yrs& 9 yrs& 10 yrs& 11 yrs& 12 yrs \\
       \hline 
       \hline
       Estimate  & 0.0766 & 0.0915 & 0.1239* & 0.1569* & 0.1801* & 0.1749* & 0.1603 & 0.1558  \\
       p-value & 0.05495 & 0.06364 & 0.0245 & 0.0170 & 0.0123 & 0.0324 & 0.1262 & 0.0968 \\
    \hline
    \end{tabular}
    \caption{Discontinuity in home value growth}
    \label{tab:sharpRDD_ownvalue}
\end{table}

Do estimate the effect of the levy on the levy on home values I again use a fuzzy RDD. For contemporaneous effects, the model of interest is
\begin{equation*}
    \tilde{\Delta} HomeValue^k_{i,t} = \beta_0 + \beta_1 \cdot \tilde{\Delta}AvgTax^k_{i,t} + \varepsilon_{i,t}
\end{equation*}
where $\tilde{\Delta}AvgTax^k_{i,t}$ is the growth rate in the average tax bill for single family homes. The estimated LATE for different lags are summarized in Figure \ref{fig:fuzzyRDD_avgfamval}. Estimates using local linear regression, with Imbens-Kalyanaraman optimal bandwidth, are reported in Table \ref{tab:fuzzyRDD_avgfamval}.   
\begin{figure}[ht]
    \centering
    \caption{Effect of tax growth on home value growth}
    \includegraphics[scale = .25]{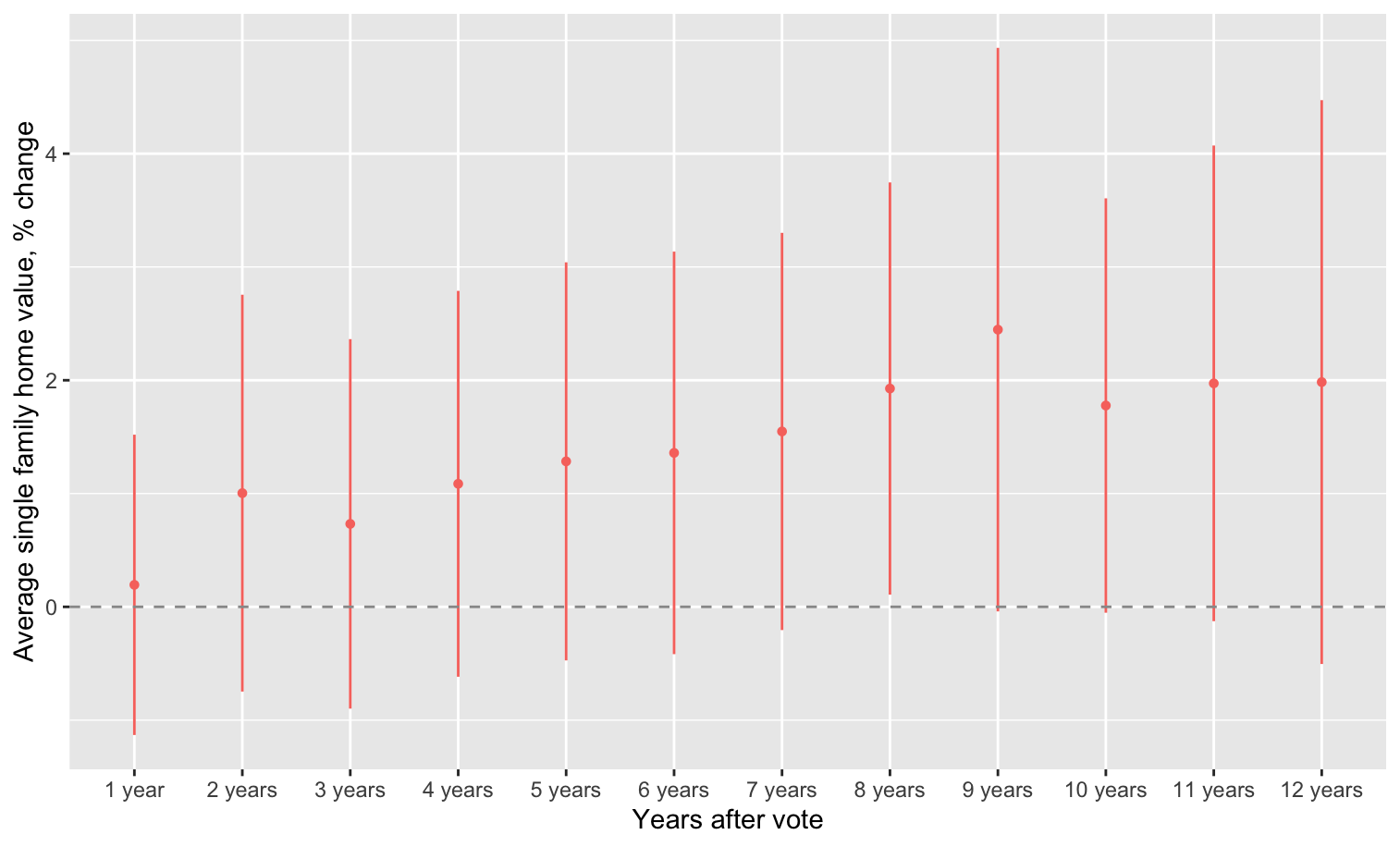}
    \label{fig:fuzzyRDD_avgfamval}
\end{figure}

\begin{table}[ht]
    \centering
    \begin{tabular}{c c c c c c c c c }
    \hline
       Lag & 5 yrs & 6 yrs& 7 yrs& 8 yrs& 9 yrs& 10 yrs& 11 yrs& 12 yrs \\
       \hline 
       \hline
       Estimate  & 1.351 & 1.583 & 1.765* & 2.246** & 3.047* & 2.392* & 2.5543 & 2.744  \\
       p-value & 0.0489 & 0.0121 & 0.0260 & 0.0061 & 0.0202 & 0.0152 & 0.0793 & 0.0212 \\
    \hline
    \end{tabular}
    \caption{Effect of tax growth on home value growth}
    \label{tab:fuzzyRDD_avgfamval}
\end{table}

In addition to the contemporaneous effect of levy growth, we can study the delayed effect of levy growth on home values in later years. Consider the model 
\begin{equation*}
    \tilde{\Delta} HomeValue^k_{i,t} = \beta_0 + \beta_1 \cdot \tilde{\Delta}AvgTax^{1}_{i,t} + \varepsilon_{i,t}.
\end{equation*}
In this case the parameter of interest is the LATE of levy growth in the first year following a vote on the growth in home values over the next $k$ years. Again, I estimate $\beta_1$ at for different time horizons using a fuzzy RDD. The results using a polynomial specification are summarized in Figure \ref{fig:fuzzyRDD_avgfamvalcumulative}. These are almost identical to those obtained from local linear regression using the Imbens-Kalyanaraman bandwidth, reported in Table \ref{tab:fuzzyRDD_avgfamvalcumulative}.

\begin{figure}[ht]
    \centering
    \caption{Effect of tax growth on lagged home value growth}
    \includegraphics[scale = .25]{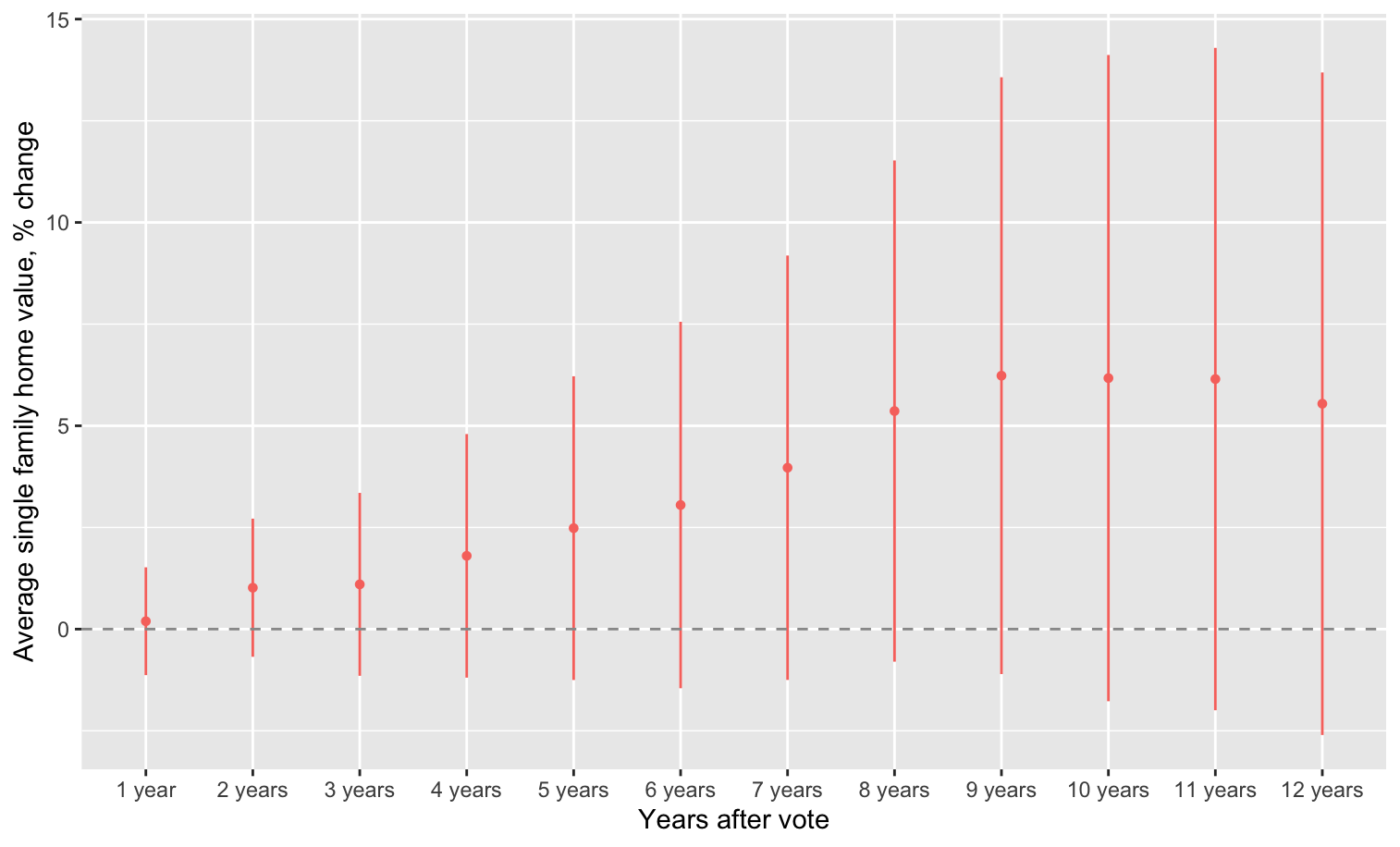}
    \label{fig:fuzzyRDD_avgfamvalcumulative}
\end{figure}

\begin{table}[ht]
    \centering
    \begin{tabular}{c c c c c c c c c }
    \hline
       Lag & 5 yrs & 6 yrs& 7 yrs& 8 yrs& 9 yrs& 10 yrs& 11 yrs& 12 yrs \\
       \hline 
       \hline
       Estimate  & 2.685 & 3.297 & 4.646* & 5.675* & 6.601* & 6.840* & 5.613 & 5.629  \\
       p-value & .0545 & 0.0777 & 0.0234 & 0.0214 & 0.0253 & 0.0410 & 0.1447 & 0.1146 \\
    \hline
    \end{tabular}
    \caption{Effect of tax growth on lagged home value growth}
    \label{tab:fuzzyRDD_avgfamvalcumulative}
\end{table}

\subsection{Neighboring municipalities}

In the model of Section \ref{sec:housing_mkt}, a district that increases its expenditure imposes a negative externality on other districts by attracting the higher income new arrivals. I find however that the effect of increases in municipal levies on neighboring district home values is positive, and of similar magnitude the change in the home values within the municipality. Figure \ref{fig:fuzzyRDD_nbravgfamvalcumulative} displays fuzzy RDD estimates of the effect of tax growth in the first year following a vote on the average growth rate in single family home values in neighboring municipalities (analogous to the results in Figure \ref{fig:fuzzyRDD_avgfamvalcumulative}). 

\begin{figure}[ht]
    \centering
    \caption{Effect of tax growth on lagged home value growth in neighboring municipalities}
    \includegraphics[scale = .25]{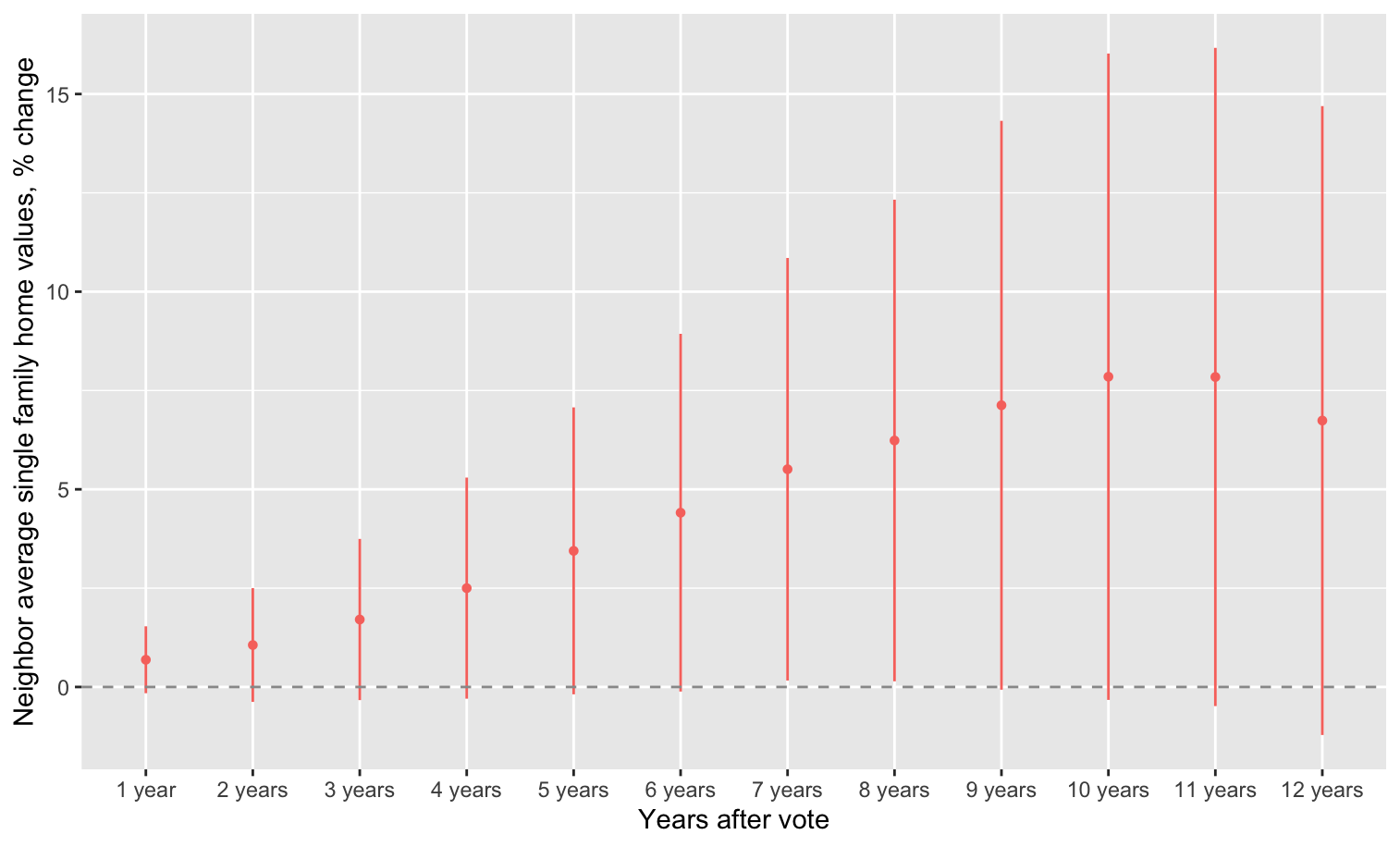}
    \label{fig:fuzzyRDD_nbravgfamvalcumulative}
\end{figure}

The results suggest that at a local level there are significant spillovers from municipal spending. While I eliminate spillovers coming through shared school districts, there are other services, such as sanitation, that are sometimes shared by neighboring municipalities. Moreover, changes in the composition of the population in one municipality may affect the preferences for neighboring municipalities, as in the large literature on assortative location choice. 

It appears that municipalities also respond to the changes in the tax levies of their neighbors. Figure \ref{fig:nbrlevypercentchange} shows the discontinuity in the tax bill growth rate in neighboring municipalities associated with a successful referendum (analogous to Figure \ref{fig:levypercentchange}). 

\begin{figure}[ht]
    \centering
    \caption{Discontinuity in levy growth in neighboring municipalities}
    \includegraphics[scale = .25]{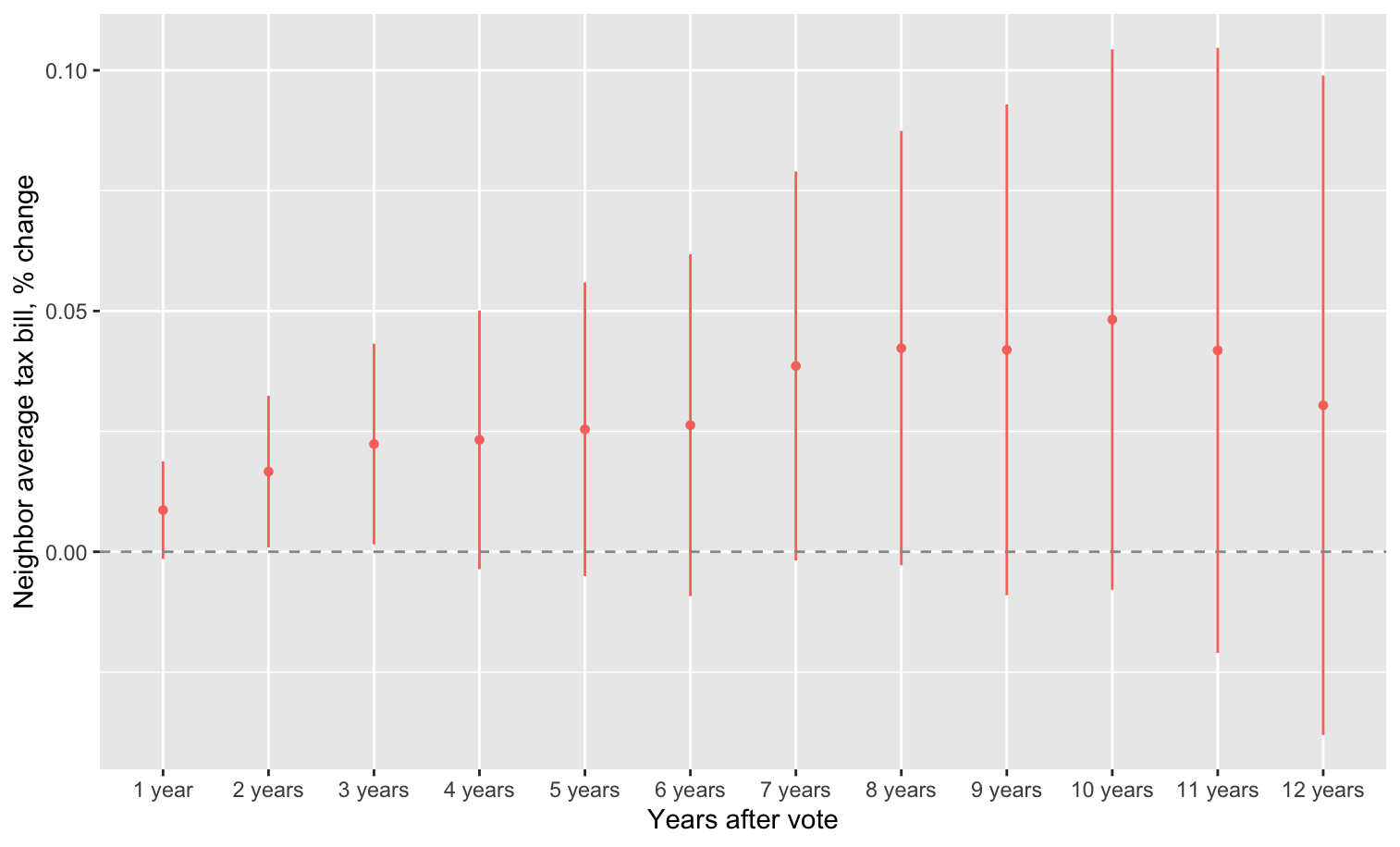}
    \label{fig:nbrlevypercentchange}
\end{figure}

Taxation in neighboring municipalities appears to respond positively to a successful referendum. The estimated discontinuity in levy growth for neighboring municipalities is displayed in Figure \ref{fig:nbrlevypercentchange}. There is some evidence that there is also an increase in the number of referenda held in neighboring municipalities, although the effects are not statistically significant. 

\subsection{Discussion}\label{sec:empiricaldiscussion}

Overall, I find some direct and indirect evidence of income sorting. The estimated effect of taxation levels on income per capita, corresponding to the model in (\ref{eq:incomePC}) and summarized in Figure \ref{fig:incomePC}, show a positive relationship between the levy and incomes. I prefer to be cautious in interpreting these results however, given that the effect disappears when I control for differences in initial conditions between the win and loss groups by estimating a model with percent changes rather than levels. 

The stronger evidence of income sorting comes from home values. Home sales prices should provide a more precise measure of marginal changes to the population, as the reflect the characteristics of new arrivals rather than the population mean. The effect on sales prices translates into higher assessed values. If find that despite the fact that higher taxes should be to some extend capitalized into home prices, leading to lower prices, increasing the growth in the tax bill leads to higher growth in home values. This is true for both contemporaneous (Figure \ref{fig:fuzzyRDD_avgfamval}) and lagged (Figure \ref{fig:fuzzyRDD_avgfamvalcumulative}) effects. 

Finally, I also study the effect of tax levies on neighboring municipalities. I find that increases in tax levels lead to both higher taxes and higher home values in neighboring municipalities. This may be because of some combination of unaccounted for spillovers in public goods expenditure, assortative moving, and strategic responses by other municipalities. The results suggest that the negative externalities of increased taxation may be less localized that we might expect ex-ante. Immediate neighbors may benefit, at the expense of more distant municipalities. The fact that a changes in a municipality's tax levy lead to similar changes in those of its neighbors is evidence of strategic responses by municipal governments. 

An interesting topic for future empirical work, suggested by the model and highlighted by the observed effects on neighboring municipalities, would be to better quantify the fiscal externalities of municipal expenditure. This could be done by looking directly at the effect of measurable features of local public goods, such as expenditure per pupil in schools, on tax levels and property values in neighboring districts. It would also be useful to study which types of expenditure have a greater impact on own and neighboring property values. Heterogeneity of these effects would likely be associated with distortions in the bundle of public goods provided by each municipality. Finally, the strategic behavior of municipalities with regards to taxation should depend on the fraction of renters versus owners. While both renters and owners dislike higher property taxes, assuming that the taxes are to some extent passed on from landowners to tenants, owners benefit from the increase in home values resulting from income sorting. Thus there should be less incentive to increase taxation in districts with a high proportion of renters. On the other hand, property taxes may be less salient for renters if they are passed through in the rental rate, rather than paid directly.

\section{Conclusion}

This paper analyses a dynamic model of local public goods provision. In equilibrium, competition by districts for wealthier residents leads to excessive taxation. The central government improves the welfare of the wealthiest districts by capping their expenditure level, mitigating the negative externalities of competition for high income residents. This surplus can the be transferred to low income districts, either through direct monetary transfers or by allowing migration flows to reduce geographic income disparity and increase the tax base in the low income districts. In either case, these districts will be able to increase expenditure on education. 

The fact that all districts benefit from such policies makes them politically feasible. No additional revenue sources are required, and interference in the local control of education funding is minimal. There is flexibility in determining which districts have schools that are over-funded, from the point of view of the policy maker, and should therefore be targeted for tax relief. The surplus generated by this tax relief can be flexibly divided among the districts, depending on the preferences of the central authority and political exigencies. 

The key driver of inefficiency in local public goods provision, which in turn implies that tax caps can be Pareto improving, is the sorting of richer new arrivals into districts with higher expenditure. I use data from Massachusetts to examine empirically whether this externality exists. I use a regression discontinuity design, exploiting regulation that requires municipalities to hold referenda in order to make certain increases in property taxation. I find some direct evidence that increasing the residential tax bill increases the mean household income. There is also strong evidence that increases in taxes increase home values. I interpret this as indirect evidence of income sorting. Surprisingly, I find that increasing taxation in a given municipality also increases home values in neighboring municipalities. This suggests that the negative externalities of expenditure are not as local as one might expect. It appears that the story is one of regional, rather than municipal competition. Nonetheless, tax caps may be an effective way to reduce negative fiscal externalities at the state level, while increasing funding for the poorest districts.

\newpage

\appendix

\section{Empirical}\label{app: empirical results}

\subsection{Percent change in levy}
The plots display the relationship between the margin of ``Yes'' votes (horizontal axis) and the percent change in the average single family tax bill (vertical axis) over different time horizons. It is easy to see that there is under-reporting of losses. An important identifying assumption, as stated above, is that the decision to report a loss, when there is not also a win that needs to be reported, is independent of municipality characteristics.

\begin{figure}[ht]
    \centering
    \includegraphics[scale = .3]{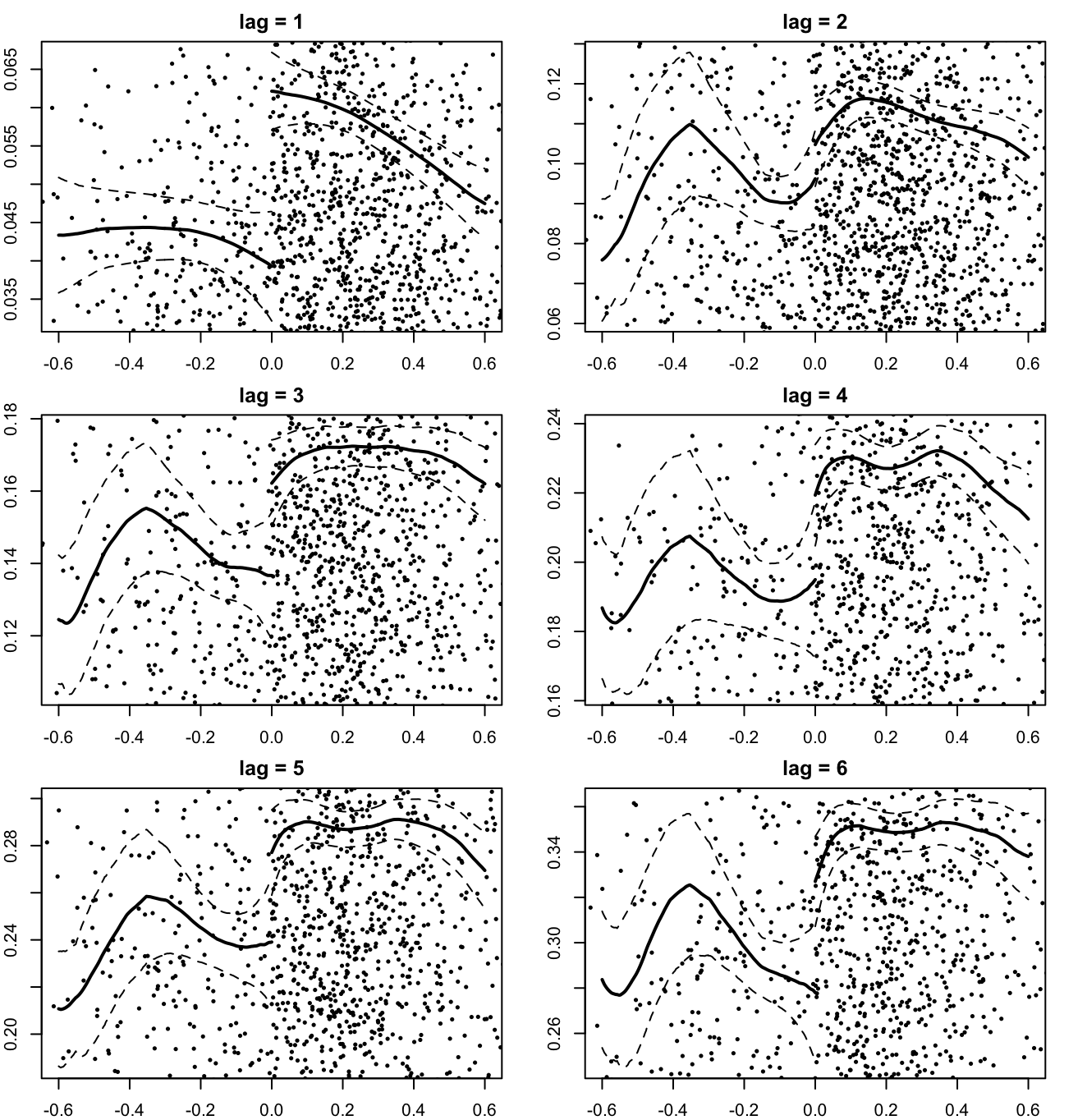}
    \label{fig:scatter16_levypercentchange}
\end{figure}

\begin{figure}[ht]
    \centering
    \includegraphics[scale = .3]{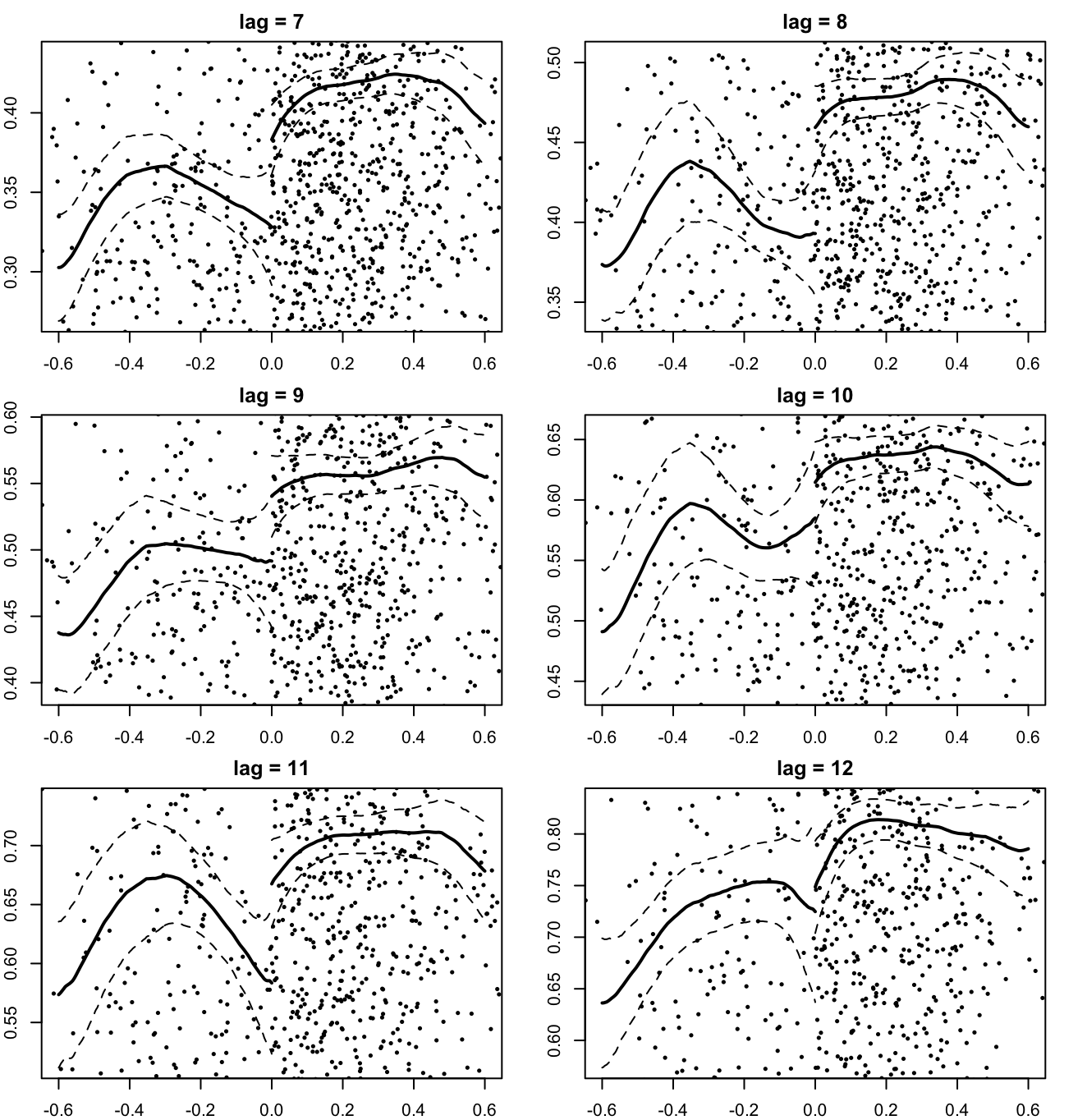}
    \label{fig:scatter712_levypercentchange}
\end{figure}

\newpage

\section{Omitted proofs} \label{app:proofs}

\noindent\textbf{Lemma \ref{lem0.1}}
\begin{proof}
Consider two locations with location qualities $h'' > h'$. Let $z'', z'$ be the net payment and price vectors, and $m'', m'$ the PDVs. Since $h'' > h'$ it must be that $m' > m''$.  

Monotonicity of the equilibrium will follow from the single crossing property of individual preferences over location quality and PDV: if a low type prefers a high location with low PDV then so does a high type. To show that single-crossing between $h$ and $w$ holds it is sufficient to show that 
\begin{equation}\label{eq0.2}
    V(w'', z'') - V(w', z'') \geq V(w'', z') - V(w', z'). 
\end{equation}
By the envelope theorem $V$ is differentiable in $w$ almost everywhere, with derivative given by
\begin{equation*}
    V_1(w,z) = u'\left(w - p_1 - \tau - b(w,z) \right) +  u'\left(w + p_2 + b(w,z)\right).
\end{equation*}
Then (\ref{eq0.2}) can be written as
\begin{equation*}
\int_{w'}^{w''} V_1(w,z'') dw \geq \int_{w'}^{w''} V_1(w,z') dw.    
\end{equation*}
This will hold if $V_1(w,z'') \geq V_1(w,z')$ for all $w \in [w',w'']$. The first order condition for $b$ implies that 
\begin{equation*}
   u'\left(w - p_1 - \tau - b(w,z) \right) = (1+r) u'\left(w + p_2 + (1+r)b(w,z) \right).
\end{equation*}
Using this FOC, $V_1(w,z) = (2 + r) \cdot u'(w + p_2 + (1+r)b(z,w))$, so $V_1(w,z') > V_1(w,z'') $ iff $u'(w + p_2 + (1+r)b(z',w)) > u'(w + p_2 + (1+r)b(z'',w))$. This follows immediately from concavity of $u$ and optimality of savings. 
\end{proof} 

\vspace{3mm}
\noindent \textbf{Proposition \ref{prop1}}
\begin{proof}
Let $q^j(w)$ be the home choice of type $w$ in district $j$, as defined in the text.

\textit{Claim 1:}  $q^B(w) + s(e^B) \geq q^A(w) + s(e^A)$ whenever $q^B(w) < \Bar{q}^B$ and $w$ locates on side $A$ with positive probability. Suppose some type $w$ locates in district $A$ in a home of quality $q^A(w)$ with PDV $m^A(w)$. Let $w' > w$ be a type that locates in district $B$, which must exist since $q^B(w) < \Bar{q}^B$ and the equilibrium is monotone by Lemma \ref{lem0.1}. By Lemma \ref{lem0.1} we must have $q^B(w') + s(e^B) \geq s(e^A) + q^A(w)$. If $q^B(w) + s(e^B) < q^A(w) + s(e^A)$ markets won't clear, since no types below $w$ can go to A district locations above $q^A(w)$. This proves the claim. 

\textit{Claim 2:} the converse to Claim 1; $q^B(w) + s(e^B) \leq q^A(w) + s(e^A)$ whenever $w$ locates on side $B$ and $q^A(w) > \und{q}^A$. The proof is symmetric to that of Claim 1. 

\textit{Part i.} This follows immediately from Claim 1. Let $w$ be the type that locates at $\und{q}^A$ on side $A$. The claim implies that only types below $w$ can locate on side B in homes of quality $q < s(e^A) - s(e^B) + \und{q}^A$. Since markets must clear (all homes must be occupied), this means that $w \geq w_*$. \textit{Part ii} follows similarly from Claim 2.

\noindent \textit{Part iii:}

\textit{Claim 3:} for all $w \in (w_*,w^*)$, $q^A(w) + s(e^A) = q^B(w) + s(e^B)$. Let $w'$ be any type above $w_*$ that locates in district $B$. Suppose $q^A(w') + s(e^A) > q^B(w') + s(e^B)$, that latter of which is greater than $\und{q}^A$ by the definition of $w_*$. By monotonicity of the equilibrium and market clearing, there exists a type $w < w'$ that locates in district $A$ with $q^A(w) \in (q^B(w') + s(e^B) - s(e^A), q^A(w'))$. But then, $q^B(w) + s(e^B) \geq q^A(w) + s(e^A) > q^B(w') + s(e^B)$, where the first inequality follows from Claim 1 and the second by assumption. By monotonicity, any type $w'' > w$ that locates on side $B$ must have $q^B(w'') \geq q^B(w)$. But then markets can't clear. Therefore $q^A(w') + s(e^A) = q^B(w') + s(e^B)$ for any such $w'$. A symmetric argument shows that $q^A(w') + s(e^A) = q^B(w') + s(e^B)$ for any $w' < w^*$ that locate in district $A$. 
\end{proof}

\vst
\noindent \textbf{Corollary \ref{cor:indifference}}
\begin{proof}
I wish to show that all $w \in (w_*, w^*)$ are indifferent between the two districts. Suppose such a $w$ strictly prefers district $A$. Then by Lemma \ref{lem0.2}, there exists $\varepsilon >0$ such that all types in $(w - \varepsilon, w + \varepsilon)$ strictly prefer district $A$. But then Claim 3, there is an interval of houses on side $B$, $(q^A(w - \varepsilon) + s(e^A) - s(e^B), q^A(w + \varepsilon) + s(e^A) - s(e^B))$, that cannot be occupied without violating montonicity, contradicting market clearing. The claim about payments is immediate.
\end{proof}

\vspace{3mm}
\noindent \textbf{Lemma \ref{lem3.5}}
\begin{proof}
Suppose that there are two different equilibria with the same expenditure gap, and thus the same location allocation, and with payment-price vectors $\hat{z}, \Tilde{z}$ and associated value functions $\hat{U},\hat{V}$ and $ \Tilde{U}, \Tilde{V}$. Let $s(\hat{e}^A) - s(\tilde{e}^A) = \Delta$ be the expenditure difference. 

\textit{Claim 1.} First I claim that that there cannot exist types $w'' > w'$ such that $\hat{V}(w') = \Tilde{V}(w')$ and $\hat{V}(w) > \Tilde{V}(w)$ for all $w \in (w',w'']$ (a symmetric argument shows that we cannot have $\hat{V}(w') = \Tilde{V}(w')$ and $\hat{V}(w) < \Tilde{V}(w)$ for all $w \in (w',w'']$). Suppose such types do exist. Since locations assignments are the same in both equilibria $\hat{V}(w') = \Tilde{V}(w')$ iff $\hat{U}(w') - \Tilde{U}(w') = \Delta$. By strict concavity of $u$ and $v$ and the FOC for $(b_1,b_2)$ we have $V(w,\hat{z}(w)) > V(w,\Tilde{z}(w)) \Leftrightarrow V_1(w,\hat{z}(w)) < V_1(w,\Tilde{z}(w))$. Then the envelope condition implies that $\hat{U}(w'') - \Tilde{U}(w'') <  \Delta$. But this holds iff $\hat{V}(w'') < \tilde{V}(w'')$, contradicting our assumption. We conclude that such a $w'',w'$ cannot exist. 

By assumption $V(\und{w}) = \und{V}$ in any equilibrium (because of the subsidized housing outside option). The above claim implies that any interval $[\und{w}, w'']$ contains points $w_1, w_2$ such that $\tilde{V}(w_1) \geq \hat{V}(w_1)$ and $\hat{V}(w_2) \geq \tilde{V}(w_2)$. Suppose $w_2 > w_1$ (a symmetric argument applies if $w_1 > w_2$). Since $V$ is continuous there must exist a non-degenerate interval $[w', w_2]$ such that $\hat{V}(w') = \Tilde{V}(w')$ and $\hat{V}(w) > \Tilde{V}(w)$ for all $w \in (w',w_2]$. But this contradicts Claim 1. Therefore $\hat{V}(w) = \tilde{V}(w)$ for all $w$.
\end{proof}

\vspace{3mm}
\noindent \textbf{Proposition \ref{prop2}}
\begin{proof}
Let $U^j$ and $\hat{U}^j$ be the value functions under $e^A$ and $\hat{e}^A$ respectively. Let $m^j, \hat{m}^j$ be the lump-sum equivalents of $z^j$ and $\hat{z}^j$. By Corollary \ref{cor1} it is without loss to consider $\hat{e}^B = e^B$. 

I first want to show that payment-price vectors are worse in district $A$ under $\hat{e}^A$. There are two cases to consider: $\und{q}^A + s(\hat{e}^A) \geq \und{q}^B + s(\hat{e}^B)$ and $\und{q}^A + s(\hat{e}^A) < \und{q}^B + s(\hat{e}^B)$.

\noindent \textit{Part 1.} Assume $\und{q}^A + s(\hat{e}^A) \geq \und{q}^B + s(\hat{e}^B)$. 

Define the wealth of the individual moving to location $q$ in district $A$ as a function of district $A$ expenditure as $w^A(q,e)$. If $\und{q}^A + s(\hat{e}^A) \geq \und{q}^B + s(\hat{e}^B)$ then this is defined implicitly by 
\begin{equation*}
    Q^A(q) + Q^B(q + s(e^A) - s(e^B)) = F(w^A(q,e)).
\end{equation*}
Since $Q^A,Q^B$ and $F$ have strictly positive densities, and $s$ has a strictly positive derivative by assumption, implicit differentiation yields $w_2(q,e) > 0$. Furthermore, since the densities of $Q^A,Q^B$ and $F$ are bounded away from zero, there exists $\delta > 0$ such that for any home in district $A$, $w^A(q,e'') - w^A(q,e') \geq \delta(e''-e')$ for all $e'' > e'$. 

\vspace{3mm}
\noindent \textit{Part 1.A.} Assume that $\und{q}^A + s(e^A) \geq \und{q}^B - s(e^B)$. Define $\tilde{q} = \inf \{q : \hat{m}_q^A \geq m_q^A - \delta(\hat{e}^A - e^A)\}$. 

Suppose $\tilde{q} > \und{q}^A$. Then $ \hat{m}_{\tilde{q}}^A = m_{\tilde{q}}^A - \delta(\hat{e}^A - e^A)$, by continuity of the value functions $U^A$,$\hat{U}^A$ and the location allocations $q^A, \hat{q}^A$. The envelope condition for $U$ at $w^A_{\tilde{q}}$ holds if and only if 
\begin{align}\label{eq4.9}
    &\lim_{w' \rightarrow w^A_{\tilde{q}}} \dfrac{V(w',m(w')) - V(w', m(w^A_{\tilde{q}}))}{w' - w^A_{\tilde{q}}} + \dfrac{q^A(w') - q^A(w^A_{\tilde{q}})}{w' - w^A_{\tilde{q}}} = 0 \notag\\
    &\Leftrightarrow \lim_{w' \rightarrow w^A_{\tilde{q}}} \left(\dfrac{V(w',m(w')) - V(w', m(w^A_{\tilde{q}}))}{q^A(w') - q^A(w^A_{\tilde{q}})} + 1\right) \left( \dfrac{q^A(w') - q^A(w^A_{\tilde{q}})}{w' - w^A_{\tilde{q}}} \right) = 0
\end{align}
$F$ and $Q^A, Q^B$ are differentiable with strictly positive derivatives, so $q^A$ is also differentiable and has a strictly positive derivative. So (\ref{eq4.9}) holds iff
\begin{equation}\label{eq4.15}
  \lim_{w' \rightarrow w^A_{\tilde{q}}} \left(\dfrac{V(w',m(w')) - V(w', m(w^A_{\tilde{q}}))}{q(w') - q(w^A_{\tilde{q}})} + 1\right) = 0.
\end{equation}
We can rewrite (\ref{eq4.15}) as
\begin{equation}\label{eq4.16}
  \lim_{\varepsilon \rightarrow 0} \dfrac{V(w^A_{\tilde{q} - \varepsilon},m(w^A_{\tilde{q} - \varepsilon})) - V(w^A_{\tilde{q} - \varepsilon}, m(w^A_{\tilde{q}}))}{\varepsilon} = 1.
\end{equation}

Consider the right limit of (\ref{eq4.16}) as $\varepsilon$ goes to zero from above. By definition of $\tilde{q}$, $\hat{m}_q^A < m_q^A - \delta (\hat{e}^A - e^A)$ for all $q < \tilde{q}$.

Then there exists a $\kappa > 0$ such that for any $\varepsilon > 0$
\begin{align}
    V(w^A_{\tilde{q}- \varepsilon} &,m(w^A_{\tilde{q} - \varepsilon})) - V(w^A_{\tilde{q}- \varepsilon} , m(w^A_{\tilde{q}})) = \int\limits_{m(w^A_{\tilde{q}})}^{m(w^A_{\tilde{q}- \varepsilon} )} V_2(w^A_{\tilde{q}- \varepsilon} , x)dx \notag\\
    &\geq \kappa(m(w^A_{\tilde{q}- \varepsilon} ) - m(w^A_{\tilde{q}})) + \int\limits_{m(w^A_{\tilde{q}})}^{m(w^A_{\tilde{q}- \varepsilon} )} V_2(w^A_{\tilde{q} - \varepsilon} + \delta(\hat{e}^A - e^A), x - \delta(\hat{e}^A - e^A))dx \notag\\
    &\geq \kappa(m(w^A_{\tilde{q}- \varepsilon} ) - m(w^A_{\tilde{q}})) + \int\limits_{\hat{m}(\hat{w}^A_{\tilde{q}})}^{\hat{m}(\hat{w}^A_{\tilde{q}- \varepsilon} )} V_2(w^A_{\tilde{q} - \varepsilon} + \delta(\hat{e}^A - e^A), x - \delta(\hat{e}^A - e^A) + m(w^A_{\tilde{q}}) - \hat{m}(\hat{w}^A_{\tilde{q}}))dx \notag\\
     &= \kappa(m(w^A_{\tilde{q}- \varepsilon} ) - m(w^A_{\tilde{q}})) + \int\limits_{\hat{m}(\hat{w}^A_{\tilde{q}})}^{\hat{m}(\hat{w}^A_{\tilde{q}- \varepsilon} )} V_2(w^A_{\tilde{q} - \varepsilon} + \delta(\hat{e}^A - e^A), x)dx \notag\\
     &\geq \kappa(m(w^A_{\tilde{q}- \varepsilon} ) - m(w^A_{\tilde{q}})) + \int\limits_{\hat{m}(\hat{w}^A_{\tilde{q}})}^{\hat{m}(\hat{w}^A_{\tilde{q}- \varepsilon} )} V_2(\hat{w}^A_{\tilde{q} - \varepsilon} , x )dx \notag\\
     &= \kappa(m(w^A_{\tilde{q} - \varepsilon}) - m(w^A_{\tilde{q}})) + V(\hat{w}^A_{\tilde{q} - \varepsilon},\hat{m}(\hat{w}^A_{\tilde{q}- \varepsilon} )) - V(\hat{w}^A_{\tilde{q}- \varepsilon}  , \hat{m}(\hat{w}^A_{\tilde{q}})). \label{eq4.10}
\end{align}
The first inequality follows from two facts. First, adding some amount $y$ to an individual's wealth and subtracting $y$ from their lump-sum payment-price equivalent increases the total budget constraint in their inter-temporal savings problem by $y/(1+r)$. Second, strict concavity of $u$ and $v$ implies that $V_2$ decreases when the total budget increases. Moreover $\kappa$ can be chosen uniformly over $\varepsilon$, since the slopes of $u'$ and $v'$ are bounded away from 0. 

The second inequality holds since $\hat{m}_q^A - m_q^A < -\delta (\hat{e}^A - e^A)$ for all $q < \tilde{q}$ and $\hat{m}_{\tilde{q}}^A - m_{\tilde{}}^A = -\delta (\hat{e}^A - e^A)$ implies $\hat{m}(\hat{w}^A_{\tilde{q} - \varepsilon}) - \hat{m}(\hat{w}^A_{\tilde{q}}) \leq m(w^A_{\tilde{q} - \varepsilon}) - m(w^A_{\tilde{q}})$. 

Rewriting the envelope condition for $\hat{U}$ at $\hat{w}_{\tilde{q}}$ as in (\ref{eq4.16}), and using (\ref{eq4.10}), we can see that the envelope condition of $\hat{U}$ will be violated if

\begin{equation}\label{eq4.13}
    \limsup_{\varepsilon \rightarrow 0} \dfrac{m(w_{\tilde{q}- \varepsilon}^A ) - m(w_{\tilde{q}}^A)}{\varepsilon} > 0.
\end{equation}

To see that this holds, rewrite (\ref{eq4.16}) as 
\begin{equation}\label{eq4.17}
    1 = \lim_{\varepsilon \rightarrow 0} \dfrac{V(w^A_{\tilde{q}- \varepsilon} ,m(w^A_{\tilde{q} - \varepsilon})) - V(w^A_{\tilde{q}- \varepsilon}, m(w^A_{\tilde{q}}))}{m(w^A_{\tilde{q} - \varepsilon} ) - m(w^A_{\tilde{q}})}\times \dfrac{m(w^A_{\tilde{q}- \varepsilon}) - m(w^A_{\tilde{q}})}{\varepsilon}
\end{equation}

Monotonicity requires that $m(w_{\tilde{q}- \varepsilon}^A ) - m(w_{\tilde{q}}^A) > 0$. Suppose
\begin{equation*}
    \limsup_{\varepsilon \rightarrow 0} \dfrac{m(w_{\tilde{q}- \varepsilon}^A ) - m(w_{\tilde{q}}^A)}{\varepsilon} = 0.
\end{equation*}
Then by the squeeze theorem the limit exists and is equal to zero. But then (\ref{eq4.17}) holds only if
\begin{equation*}
    \lim_{\varepsilon \rightarrow 0} \dfrac{V(w^A_{\tilde{q}- \varepsilon} ,m(w^A_{\tilde{q} - \varepsilon})) - V(w^A_{\tilde{q}- \varepsilon}, m(w^A_{\tilde{q}}))}{m(w^A_{\tilde{q} - \varepsilon} ) - m(w^A_{\tilde{q}})}\times \dfrac{m(w^A_{\tilde{q}- \varepsilon}) - m(w^A_{\tilde{q}})}{\varepsilon} = + \infty
\end{equation*}
But fixing $\bar{\varepsilon} >0$, concavity of $u$ and $v$ implies that this limit is bouned above by $V_2(w^A_{\tilde{q} - \bar{\varepsilon}})$, which exists by the envelope theorem applied to the savings problem and is bounded by our assumptions on $u$ and $v$ and the fact that $m_q > 0$ for all $q$. This proves that (\ref{eq4.13}) is satisfied, so the envelope condition for $\hat{U}$ cannot hold. Thus we conclude that $\tilde{q} > \und{q}^A$  cannot hold. 

\vspace{3mm}
\noindent \textit{Part 1.B.} Continue to assume that $\und{q}^A + s(e^A) \geq \und{q}^B + s(e^B)$. Let $q^*, w^*$ be the cutoff home and type such that under $e^A$ all types below $w^*$ move to homes in district $B$ with qualities below $q^*$, and $w^*$ is indifferent between $q^*$ in district $B$ and the lowest quality house in district $A$. That is $s(e^A) + \und{q}^A = s(e^B) + q^* $ and $w^* = w^B_{q^*} = w^A_{\und{q}^A}$. Under $\hat{e}^A$ all types below $w^*$ move to the same location, and by Lemma \ref{lem3.6} they therefore receive the same money values. Let $m^j(q,e)$ be the lump-sum equivalent for location $q$ in district $j$ when district $A$ expenditure is $e$, holding district $B$ expenditure fixed. Monotonicity of the location allocation requires that for all $e' > e^A$, type $w^*$ strictly prefer side $B$ to side $A$ under $e'$, i.e. 
\begin{align*}
    \und{q}_A + s(e') + V(w^*, m^A(\und{q}^A,e')) &< q^* + s(e^B) + V(w^*, m^B(q^*, e')) \\
    & = \und{q}^A + s(e^A) + V(w^*, m^A(\und{q}^A, \hat{e}^A))
\end{align*}
where the equality follows from the definition of $q^*$ and Proposition \ref{prop1}. So it must be that $m^A(\und{q}^A, e') < m^A(\und{q}^A, e^A)$. Moreover  
\begin{equation*}
    \limsup_{\varepsilon \searrow 0}  \dfrac{V(w^*, m^A(\und{q}^A,e^A + \varepsilon)) - V(w^*, m^A(\und{q}^A,e^A))}{\varepsilon} \leq \limsup_{\varepsilon \searrow 0} \dfrac{s(e^A) - s(e^A + \varepsilon)}{\varepsilon} = - s'(e^A) < 0.
\end{equation*}
This holds only if 
\begin{equation*}
    \limsup_{\varepsilon \searrow 0} \dfrac{m^A(\und{q}^A,e^A + \varepsilon) - m^A(\und{q}^A,e^A)}{\varepsilon} < 0.
\end{equation*}
Then exists $\gamma > 0$ such that $\hat{m}^A_{\und{q}^A} < m^A_{\und{q}^A} - \gamma(\hat{e}^A - e^A)$ for all $\hat{e}^A > e^A$. If $\gamma > \delta$ then this shows that $\tilde{q} \neq q^A$, so we are done. Otherwise replace $\delta$ with $\gamma$ in Part1.B to complete the proof. 

\vspace{3mm}
\noindent\textit{Part 2.} Suppose $s(\hat{e}^A) + \und{q}^A < s(\hat{e}^B) + \und{q}^B$. By an almost identical argument to Part 1.B, we can show that there exists $\gamma > 0$ such that $\hat{m}^A_{\und{q}^A} < m^A_{\und{q}^A} - \gamma(\hat{e}^A - e^A)$ for all $\hat{e}^A > e^A$. Then the proof proceeds as in Part 1.A.

\noindent\textit{Part 3.} The conclusions regarding prices in district B follow from symmetric arguments. In this case we the idea is that if payment-price vectors do not improve higher types will prefer to deviate to lower quality houses. 
\end{proof}

\vspace{3mm}
\noindent \textbf{Proposition \ref{prop5.8}}
\begin{proof}
The weak comparative statics result will follow from the following weak increasing differences property: for all $e' > e^A_1$, we have $\Pi(e',\tau^A_1(e')) - \Pi(e^A_1, \tau^A_1(e^A_1)) \geq \tilde{\Pi}(e', \tilde{\tau}^A_1(e')|e_1^A) - \tilde{\Pi}(e^A_1, \tilde{\tau}^A_1(e^A_1)|e^A_1)$.

First, notice that $\tau^A_1(e^A_1) = \tilde{\tau}^A_1(e^A_1) = \tau^A_1$ and $\Pi(e^A_1, \tau^A_1) = \tilde{\Pi}(e^A_1, \tau^A_1|e_1^A)$, so this is equivalent to $\Pi(e',\tau^A_1(e')) \geq \tilde{\Pi}(e', \tilde{\tau}^A_1(e'|e^A_1)|e^A_1)$. By Proposition \ref{prop2} $m^A(q|e') \geq m^A(q|e^A_1)$ for all $q$. So any tax schedule that yields revenue of $e'$ is strictly better for the district's perspective under $m^A(\cdot|e')$ than under $m^A(\cdot|e^A_1)$. 

By the envelope theorem $\tilde{\Pi}(e, \tilde{\tau}^A_1(e|e^A_1)|e^A_1)$ is differentiable in $e$.  The strict comparative statics result follows from the conditions in Proposition \ref{prop2} under which 
\begin{equation*}
    \limsup_{\varepsilon \searrow 0} \dfrac{m^A(q|e^A_1 + \varepsilon) - m^A(q|e^A_1)}{\varepsilon} < 0
\end{equation*}
\end{proof}

\vst
\noindent\textbf{Proposition \ref{lem6.9}}
\begin{proof}
Let $\tilde{m}^1_q(e_1)$ be the lump-sum equivalent for home $q$ in district $1$ in the fixed gap problem when district 1 chooses first period expenditure $e_1$, and let $m^1_q(e_1)$ be the lump-sum equivalent when district $1$ chooses $e_1$ and all other districts are fixed at their equilibrium expenditure levels. Repeatedly applying Proposition \ref{prop4}, for $e_1 > e^1_1$ we have $\tilde{m}^1_1(e_1) \geq m^1_1(e_1)$, with equality if and only if $s(\tilde{e}^1_1) + \und{q}^1 \geq s(e^j_1) + \bar{q}^j$ for all $j \in Z$ or $s(e^1_1) + \bar{q}^1 \leq s(e^j) + \und{q}^j$ for all $j \in Z$. This gives the desired single crossing property, as in the proof of Proposition \ref{prop5.8}. 
\end{proof}

\vst
\noindent\textbf{Proposition \ref{prop5}}
\begin{proof}
That districts in $Z$ are better off follows immediately from Proposition \ref{lem6.9}, given concavity of $\tilde{\Pi}(e_1,\tilde{\tau}(e_1))$. Proposition \ref{prop4} implies that districts not in $Z$ are made better off by the cap. 
\end{proof}

\subsection{Alternative district objectives}

Consider district governments that are in power for one period and care only about school quality, tax burdens, home values, and the wealth distributions of new residents; district $j$ payoffs are given by $K^j(e^j, \tau^j, p^j, \Gamma^j)$, where $\Gamma^j$ is the wealth distribution of new arrivals to district $j$. In each period districts simultaneously choose expenditure levels and tax schedules. Tax schedules map home qualities and home prices to payments. Given these objects, individuals make their location choices and prices and district wealth distributions are realized. The district's objective is assumed to satisfy natural monotonicity and continuity conditions

\vspace{3mm}
\noindent \textbf{District objective assumptions.} $K^j$ is strictly increasing in $e^j$, decreasing in $\tau^j$ and strictly increasing in $p^j$ (with the uniform dominance partial order), and weakly increasing in $\Gamma^j$ (with the FOSD partial order). Moreover, $K^j$ is (sup-norm) continuous in $\tau^j$. 
\vspace{3mm}

Home prices depend on both current and future expenditure levels and tax schedules in each district, but are independent of the actions taken by districts in the past. Given this observation, I make a natural independence assumption on equilibria: district actions are history independent. 

As discussed in the previous section, higher expenditure levels lead to a richer population of new arrivals and worse payment-price vectors for every household. In the equilibrium of the game between districts, this will mean higher home prices. When districts want higher home values and/or richer residents this will push them to set higher expenditures in equilibrium than would be optimal given the wealth distribution of the new arrivals that the districts receive in equilibrium.   

To formalize this discussion, consider the problem faced by the government of district $A$. I fix the strategy of district $B$ and suppress dependence on this in the notation that follows. Given the choice of $e^A$, the set of residents moving to district $A$ and the payment-price vectors for each house are uniquely determined. Let $m^A(q|e^A)$ be the PDV of the payment-price vector for house $q$. By Lemma \ref{lem3.5}, $m^j$ is uniquely determined by the expenditure gap, so that a district's optimal action also depends only on the expenditure gap. The price function $p^A(\cdot)$ is given by
\begin{equation}\label{eq6.1}
    p^A(q|e,\tau) = - \tau(q) + \dfrac{1}{1+r}p_*^A(q) - m^A(q|e).
\end{equation}
where $p_{*}^A(q)$ is the anticipated equilibrium price in the next period, which by the independence assumption on equilibria does not depend on the period 1 district actions. Fixing the actions of the district $B$ government, the objective of district $A$ can be written as  
\begin{equation*}
    \Pi^A(e, \tau) = K^A(e,\tau, p^A(\cdot|e,\tau), \Gamma(e))
\end{equation*}
where $p^A(q| e_1, \tau_1)$ is given by (\ref{eq5.1}) and $\Gamma^A(e^A,e^B)$ is the wealth distribution of new arrivals determined by expenditure levels $e^A$ and $e^B$ in districts $A$ and $B$ respectively. 

The objective $\Pi^A$ takes into account the effect of $e^A$ on the population of new arrivals and prices. Let $e_*^j$ be the equilibrium expenditure in district $j$. Consider instead fixing the at $\bar{e}, e_*^B$, for expenditures $\bar{e}, e_*^B$ in districts $A$ and $B$ respectively. This amounts to modifying the objective from $\Pi^A$ to
\begin{equation*}
    \tilde{\Pi}^A(e, \tau| \bar{e}) = K^A(e,\tau, p^A(\cdot|\bar{e},\tau), \Gamma(\bar{e})).
\end{equation*}
Let $\tilde{e}^A_1(\bar{e})$ be the solution to this problem. Given a choice of $e$, let $\tau^A(e)$ and $\tilde{\tau}^A(e|\bar{e})$ be the optimal tax schedules for collecting $e$ when the objectives are $\Pi^A$ and $\tilde{\Pi}^A(\cdot,\cdot|\bar{e})$ respectively. Let $e^A$ and $e^B$ be the equilibrium expenditure levels in the two districts. Under the stated assumptions it is easy to see that competition leads to over-taxation. 

\begin{lemma}\label{lem6.8}
$ e^A_1 \geq \tilde{e}^A_1(e_1^A)$, with equality if and only if $s(\tilde{e}^A_1(e^A_1)) + \und{q}^A \geq s(e^B_1) + \bar{q}^B$ or $s(e^A_1) + \bar{q}^A \leq s(e^B) + \und{q}^B$. The same conclusion holds switching the labels $A$ and $B$. 
\end{lemma}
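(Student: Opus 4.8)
The plan is to reproduce the logic of Proposition \ref{prop5.8}, but with the single price channel there replaced by two monotone channels that both push district $A$ toward higher expenditure in the true problem. Because the fixed objective is evaluated at $\bar e = e^A_1$, the fixed and true problems coincide at $e^A_1$: there $m^A(\cdot\mid\bar e)$ and $\Gamma(\bar e)$ are exactly the true PDV kernel and arrival distribution, so $\tau^A(e^A_1)=\tilde\tau^A(e^A_1\mid e^A_1)$ and $\Pi^A(e^A_1,\tau^A(e^A_1))=\tilde\Pi^A(e^A_1,\tilde\tau^A(e^A_1\mid e^A_1)\mid e^A_1)$. The whole statement then reduces to the weak increasing-differences inequality: for every $e'>e^A_1$, $\Pi^A(e',\tau^A(e'))\ge \tilde\Pi^A(e',\tilde\tau^A(e'\mid e^A_1)\mid e^A_1)$.

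To get this I would feed the fixed problem's optimal schedule back into the true problem. Fix $e'>e^A_1$ and set $\tilde\tau=\tilde\tau^A(e'\mid e^A_1)$, which is feasible in the true problem since it raises the same revenue $e'$. By Proposition \ref{prop2}, increasing district $A$'s expenditure from $e^A_1$ to $e'$ weakly lowers $m^A(q\mid\cdot)$ at every $q$, and since $p^A(q\mid\cdot,\tilde\tau)=-\tilde\tau(q)+\tfrac{1}{1+r}p_*^A(q)-m^A(q\mid\cdot)$ this raises $p^A(\cdot\mid e',\tilde\tau)$ over $p^A(\cdot\mid e^A_1,\tilde\tau)$ in the uniform-dominance order. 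Simultaneously, Proposition \ref{prop1} implies $\Gamma^A(e')$ first-order stochastically dominates $\Gamma^A(e^A_1)$, as the higher relative expenditure draws a wealthier set of arrivals. Applying the assumptions on $K^A$ in two monotone steps --- strictly increasing in $p$ for the uniform-dominance order, weakly increasing in $\Gamma$ for FOSD --- gives $\Pi^A(e',\tilde\tau)\ge\tilde\Pi^A(e',\tilde\tau\mid e^A_1)$, and optimality of $\tau^A(e')$ in the true problem gives $\Pi^A(e',\tau^A(e'))\ge\Pi^A(e',\tilde\tau)$. This is the required inequality, and the weak conclusion $e^A_1\ge\tilde e^A_1(e^A_1)$ follows by the standard monotone-comparative-statics argument: were $\tilde e^A_1>e^A_1$, optimality of $\tilde e^A_1$ for the fixed problem plus increasing differences would make $\tilde e^A_1$ a maximizer of the true objective as well, contradicting the choice of $e^A_1$ once the inequality is strengthened to a strict one.

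The strict inequality, and hence the equality characterization, I would read off the quantitative tail of Proposition \ref{prop2}. Whenever the price channel is active it supplies $\limsup_{\hat e^A\searrow e^A}\frac{\hat m^A_q-m^A_q}{\hat e^A-e^A}<0$, and whenever the allocation overlaps across districts the distribution channel gives a strict FOSD improvement in $\Gamma^A$; either makes $\Pi^A(e',\tilde\tau)>\tilde\Pi^A(e',\tilde\tau\mid e^A_1)$ strict for $e'$ just above $e^A_1$, forcing $e^A_1>\tilde e^A_1$. Equality therefore requires both channels to be first-order flat, which occurs only in the two degenerate configurations isolated by Proposition \ref{prop2}: the regime $s(e^A_1)+\bar q^A\le s(e^B)+\und q^B$, where $A$ lies entirely below $B$ and $m^A$ (together with $A$'s arrivals) does not respond to $e^A$, and the regime $s(\tilde e^A_1(e^A_1))+\und q^A\ge s(e^B_1)+\bar q^B$, where $A$ already captures the top segment of types so that $\Gamma^A$ is locally fixed. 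Exchanging the labels $A$ and $B$ gives the companion statement.

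The main obstacle will be the equality half, not the weak inequality. The weak direction is a clean two-step monotonicity argument once the assumptions on $K^A$ and Proposition \ref{prop2} are in hand. The delicate point is to verify that in each of the two stated dominance regimes the \emph{combined} marginal wedge is genuinely zero, rather than just one of the two channels dropping out --- in particular, one must confirm that in the regime where $\Gamma^A$ is locally frozen the residual price response of $m^A$ does not reintroduce a first-order gap between $\Pi^A$ and $\tilde\Pi^A$. This is exactly the step that must be matched term-by-term against the first-order statements of Proposition \ref{prop2}.
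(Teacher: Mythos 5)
Your proof of the weak inequality $e^A_1 \geq \tilde{e}^A_1(e^A_1)$ is correct and is essentially the paper's own argument: coincidence of the two problems at $\bar{e}=e^A_1$, then weak increasing differences obtained by feeding $\tilde{\tau}^A(e'\mid e^A_1)$ into the true problem, invoking Proposition \ref{prop2} for the price kernel, and finishing with optimality of $\tau^A(e')$. Your version is in fact cleaner on two counts: you state the PDV comparison with the correct sign (raising $e^A$ \emph{lowers} $m^A(q\mid\cdot)$ and hence raises $p^A$; the paper's proof writes $m^A(q\mid e')\geq m^A(q\mid e^A)$, which is backwards as a statement about $m$, though its conclusion about home values is the intended one), and you explicitly supply the FOSD step for $\Gamma^A$, which the monotonicity assumptions on $K^A$ make necessary and which the paper's proof never mentions.

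The genuine gap is exactly the point you flagged at the end, and it cannot be closed as stated: in the regime $s(\tilde{e}^A_1(e^A_1))+\und{q}^A\geq s(e^B_1)+\bar{q}^B$, freezing $\Gamma^A$ does \emph{not} deactivate the price channel. Proposition \ref{prop2}'s first bullet applies to precisely this configuration and gives $m^A_q>\hat{m}^A_q$ for all $q$, and its final clause then gives $\limsup_{\hat{e}^A\searrow e^A_1}\,(\hat{m}^A_q-m^A_q)/(\hat{e}^A-e^A_1)<0$. The economics is that even with the allocation frozen, the marginal type stays indifferent between $A$'s worst home and $B$'s best home, and $B$'s side is anchored by the subsidized-housing outside option, so $A$'s prices must rise one-for-one (in utility terms) with $s(e^A)$. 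Hence the same first-order argument that yields strictness in the overlapping case yields strictness here, and the ``if'' direction of the equality characterization fails for the first condition. This is an inconsistency inherited from the paper itself: the main-text analogue, Proposition \ref{prop5.8}, states equality if and only if $s(e^A_1)+\bar{q}^A\leq s(e^B)+\und{q}^B$ \emph{only} --- the one regime in which Proposition \ref{prop2} says $m^A$ genuinely does not respond --- and the paper's own proof of Lemma \ref{lem6.8} cites only ``the conditions in Proposition \ref{prop2} under which the limsup is negative,'' which, read literally, support exactly that one-condition characterization. The repair for your write-up is to prove the equality statement with the second condition alone (your argument already delivers both directions for it), rather than to attempt to verify a vanishing marginal wedge in the $A$-dominates regime, where the wedge does not vanish.
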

\begin{proof}
The weak comparative statics result will follow from the following weak increasing differences property: for all $e' > e^A_1$, we have $\Pi(e',\tau^A_1(e')) - \Pi(e^A_1, \tau^A_1(e^A_1)) \geq \tilde{\Pi}(e', \tilde{\tau}^A_1(e')|e_1^A) - \tilde{\Pi}(e^A_1, \tilde{\tau}^A_1(e^A_1)|e^A_1)$.

First, notice that $\tau^A(e^A) = \tilde{\tau}^A(e^A) = \tau^A$ and $\Pi(e^A, \tau^A) = \tilde{\Pi}(e^A, \tau^A|e^A)$, so this is equivalent to $\Pi(e',\tau^A(e')) \geq \tilde{\Pi}(e', \tilde{\tau}^A(e'|e^A)|e^A)$. By Proposition \ref{prop2} $m^A(q|e') \geq m^A(q|e^A)$ for all $q$. Since future prices do not depend on current actions, this means that current home values are higher under $e'$ than under $e^A$ So any tax schedule that yields revenue of $e'$ is strictly better for the district's perspective under $m^A(\cdot|e')$ than under $m^A(\cdot|e^A)$. 

By the envelope theorem $\tilde{\Pi}(e, \tilde{\tau}^A(e|e^A)|e^A)$ is differentiable in $e$.  The strict comparative statics result follows from the conditions in Proposition \ref{prop2} under which 
\begin{equation*}
    \limsup_{\varepsilon \searrow 0} \dfrac{m^A(q|e^A + \varepsilon) - m^A(q|e^A)}{\varepsilon} < 0
\end{equation*}
\end{proof}

When district payoffs are quasi-concave, Pareto improving action profiles can be easily identified an implemented by an expenditure cap. An \textit{expenditure cap} is an upper bound on the level of expenditure a district can choose. 

\vspace{3mm}
\noindent \textbf{Q-concave assumption}. The function $e \mapsto \tilde{\Pi}^A(e, \tilde{\tau}^A(e|\bar{e})|\bar{e})$ is quasi-concave.
\vspace{3mm}

Districts' payoff as a function of $e$ will generally be quasi-concave when there are decreasing returns to expenditure and the district governments experience increasing costs of taxing the population.

\begin{proposition}\label{prop6.1}
For any expenditure caps $\bar{e}^A \in [\tilde{e}^A(e^A), e^A)$ and $\bar{e}^B \in [\tilde{e}^B(e^B), e^B)$ such that $\bar{e}^A - \bar{e}^B = e^A - e^B$, both districts are strictly better off. In this range the welfare of each district is strictly decreasing in the cap level. 
\end{proposition}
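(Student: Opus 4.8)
The plan is to collapse the whole statement onto the one–dimensional fixed-gap objective $\tilde{\Pi}^j(\cdot\mid e^j)$ that was already built for Lemma \ref{lem6.8}, exploiting the fact that the caps are chosen to leave the expenditure gap intact. The entire content of the proposition is that sliding both districts down the gap-preserving ray from $(e^A,e^B)$ toward $(\tilde{e}^A(e^A),\tilde{e}^B(e^B))$ raises each district's payoff, and that this slide is exactly the fixed-gap comparison, so that Lemma \ref{lem6.8} and the Q-concave assumption do all the work.

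Concretely I would proceed in three steps. First, since $\bar{e}^A-\bar{e}^B=e^A-e^B$ the gap is unchanged, so by Lemma \ref{lem3.5} the money values $m^j(\cdot)$ and, through Proposition \ref{prop1}, the populations $\Gamma^j$ are identical at $(\bar{e}^A,\bar{e}^B)$ and at $(e^A,e^B)$; hence the price function entering $K^j$ uses the same $m^j$ as at the original equilibrium, which is precisely the object frozen in $\tilde{\Pi}^j(\cdot\mid e^j)$. It follows that district $j$'s realized payoff under the caps equals $\tilde{\Pi}^j(\bar{e}^j\mid e^j)$, while its original payoff is $\tilde{\Pi}^j(e^j\mid e^j)=\Pi^j(e^j)$. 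Second, Lemma \ref{lem6.8} gives $\tilde{e}^j(e^j)<e^j$ (non-emptiness of $[\tilde{e}^j(e^j),e^j)$ being exactly the non-dominance case), so $\bar{e}^j$ lies strictly below $e^j$ and weakly above the maximizer $\tilde{e}^j(e^j)$. Third, the Q-concave assumption makes $\tilde{\Pi}^j(\cdot\mid e^j)$ single-peaked at $\tilde{e}^j(e^j)$, hence non-increasing on $[\tilde{e}^j(e^j),e^j]$, which yields $\tilde{\Pi}^j(\bar{e}^j\mid e^j)\ge\tilde{\Pi}^j(e^j\mid e^j)$ and monotonicity of welfare in the cap level, for each of $j=A,B$ symmetrically.

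Two points require more than quasi-concavity, and this is where I expect the effort to lie. The first is upgrading the weak inequalities to the \emph{strict} ones claimed: quasi-concavity alone permits flat stretches, so instead I would invoke differentiability of $\tilde{\Pi}^j(\cdot\mid e^j)$ (from the envelope theorem, as in the proof of Lemma \ref{lem6.8}) together with strict monotonicity of $K^j$ in prices — pushing required revenue past $\tilde{e}^j(e^j)$ forces a strictly higher, uniformly dominant tax schedule and hence strictly lower prices, so the derivative of $\tilde{\Pi}^j(\cdot\mid e^j)$ is strictly negative on $(\tilde{e}^j(e^j),e^j]$. This delivers both the strict gain and the strict decrease in the cap level. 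The second and more delicate point is to confirm that under the caps the districts actually locate at $(\bar{e}^A,\bar{e}^B)$, i.e. that both caps bind: given the other district at its cap, a district's marginal incentive to raise expenditure at the capped profile is the nonpositive fixed-gap derivative \emph{plus} the strictly positive ``attraction'' term (lower $m^j$, higher prices) that drives Proposition \ref{prop2} and the over-taxation conclusion of Lemma \ref{lem6.8}; comparing this with the first-order condition that characterizes $e^j$ at the original equilibrium shows the incentive to spend up to the cap survives at $\bar{e}^j$. Controlling how that attraction term varies along the ray, rather than merely at the equilibrium point, is the main obstacle I anticipate.
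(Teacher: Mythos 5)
Your proposal is correct and follows essentially the same route as the paper: Proposition \ref{prop6.1} is in fact stated there without an explicit proof, but the intended argument — mirrored in the paper's two-line proof of the analogous Proposition \ref{prop5} — is exactly your reduction, namely that gap-preserving caps leave the money values and populations unchanged (Lemma \ref{lem3.5}, Proposition \ref{prop1}), so each district's capped payoff equals the fixed-gap objective $\tilde{\Pi}^j(\bar{e}^j \mid e^j)$, which Lemma \ref{lem6.8} together with the Q-concave assumption places above the equilibrium payoff $\tilde{\Pi}^j(e^j \mid e^j)=\Pi^j(e^j)$. The two points you single out for extra work — upgrading the weak inequalities from quasi-concavity to the strict ones claimed, and verifying that the districts actually spend up to the caps in the capped game — are places where the paper is simply silent (it asserts strictness and implicitly assumes binding caps), so your treatment is, if anything, more careful than the source.
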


\newpage
\bibliography{property_tax}

\end{document}